\definecolor{myurlcolor}{rgb}{0,0,0.7}
\definecolor{myrefcolor}{rgb}{0.8,0,0}
\DeclarePairedDelimiter{\ceil}{\lceil}{\rceil}
 \theoremstyle{plain}
 \theoremstyle{plain}
 \newtheorem{lem}{Lemma}
 \theoremstyle{plain}
 \newtheorem{thm}{Theorem}
 \theoremstyle{plain}
  \newtheorem{res}{Result}
 \theoremstyle{plain}
 \theoremstyle{plain}
  \newtheorem{prop}{Proposition}
 \theoremstyle{plain}
 \theoremstyle{plain}
 \newtheorem{corr}{Corollary}
 \theoremstyle{plain}
 \theoremstyle{remark}
 \newtheorem*{rem*}{Remark}
 \theoremstyle{plain}
  \newtheorem{rem}{Remark}
\theoremstyle{plain}
 \newtheorem*{conj*}{Conjecture}
 \theoremstyle{plain}
\newcommand{\ot}{\otimes}
\renewcommand{\exp}{\mathrm{exp}}
\DeclareMathOperator{\tr}{tr}
\renewcommand{\H}{\mathcal{H}}
\newcommand{\C}{\mathbb{C}} 
\newcommand{\E}{\mathbb{E}} 
\newcommand{\M}{\mathbf{M}} 
\newcommand{\N}{\mathbf{N}} 
\newcommand{\F}{\mathbf{F}} 
\renewcommand{\L}{\mathbf{L}} 
\renewcommand{\P}{\mathbf{P}} 
\newcommand{\Pp}{\mathbb{P}} 
\newcommand{\Q}{\mathcal{Q}} 
\newcommand{\PP}{\mathbb{P}} 
\newcommand{\SP}{\mathrm{S}\mathbb{P}} 
\newcommand{\POVM}{\mathrm{POVM}} 
\newcommand{\Herm}{\mathrm{Herm}} 
\newcommand{\I}{\mathbb{I}} 
\newcommand{\norm}[1]{\left|\left|  \ #1  \ \right| \right|}
\newcommand{\bo}{n}
\newcommand{\myvdots}{\vbox{\baselineskip4\p@ \lineskiplimit\z@
    \hbox{.}\hbox{.}\hbox{.}}} 
\newcommand{\mk}[1]{{\color{blue} #1}}
\newcommand{\mo}[1]{{\color{red} #1}}
\global\long\global\long\global\long\def\bra#1{\mbox{\ensuremath{\langle#1|}}}
\global\long\global\long\global\long\def\ket#1{\mbox{\ensuremath{|#1\rangle}}}
\renewcommand{\ket}[1]{\left| #1 \right>} 
\renewcommand{\bra}[1]{\left< #1 \right|} 
\newcommand{\ketbra}[2]{\left| #1 \rangle\langle #2 \right|} 
\definecolor{thmgray}{gray}{0.95}
\definecolor{conclusionbackground}{gray}{0.95}
\newcommand{\thmspaceafter}{2mm}
\begin{document} 	

\title{Pretty-good simulation of all quantum measurements by projective measurements}
	 
	\author{Micha\l\  Kotowski}

    \affiliation{Institute of Mathematics, Faculty of Mathematics, Informatics, and Mechanics, University of Warsaw, Banacha 2, 02-097 Warsaw, Poland} 

	\author{Micha\l\ Oszmaniec}
	\email{oszmaniec@cft.edu.pl}
	\affiliation{Center for Theoretical Physics, Polish Academy of Sciences, Al. Lotnik\'ow 32/46, 02-668 Warsaw, Poland}
 
 \

\begin{abstract}
In quantum theory general measurements  are described by so-called Positive Operator-Valued Measures (POVMs). We show that in $d$-dimensional quantum systems an application of depolarizing noise with constant (independent of $d$) visibility parameter makes \emph{any} POVM simulable by a randomized implementation of projective measurements that do not require any auxiliary systems to be realized. 

This result significantly limits the asymptotic advantage that POVMs can offer over projective measurements in various information-processing tasks, including state discrimination, shadow tomography or quantum metrology. We also apply our findings to  questions originating from quantum foundations by asymptotically improving the range of visibilities for which noisy pure states of two qudits admit a local model for generalized measurements. As a byproduct, we give asymptotically tight (in terms of dimension) bounds on critical visibility for which all POVMs are jointly measurable. 

On the technical side we use recent advances in POVM simulation, the solution to the celebrated Kadison-Singer problem, and a method of approximate implementation of ``nearly projective'' POVMs by a convex combination of projective measurements, which we call dimension-deficient Naimark theorem. Finally, some of our intermediate results show (on information-theoretic grounds) the existence of circuit-knitting strategies allowing to simulate general $2N$ qubit circuits by randomization  of subcircuits operating on $N+1$ qubit systems, with a constant (independent of $N$) probabilistic overhead.
\end{abstract}
\maketitle	
 
\section{Introduction}
In quantum mechanics, contrary to classical physics, the act of measurement plays a  prominent role. While in the classical picture of the world physical objects have well-defined attributes that are merely revealed by performing a measurement, in quantum physics system's characteristics can be viewed as emerging in the course of the measurement process itself. In quantum theory general measurement procedures that can be performed on a physical system are described by mathematical objects called Positive Operator-Valued Measures (POVMs) \cite{Peres2002}. The most commonly encountered POVMs are called projective or von Neumann measurements and are realized by measuring observables on a given system (such as its energy, angular momentum etc.). To  physically realize a general POVM, it is often necessary to extend the system of interest by an ancilla and then perform a projective measurement on the combined system, which renders general POVMs much more difficult to implement compared to projective measurements \cite{Oszmaniec17}. 

Generalized measurements find many applications across quantum information and quantum computing: in study of nonlocality \cite{Barret2002,Vertesi2010}, entanglement detection \cite{Shang2018}, randomness generation \cite{Acin2016}, discrimination of quantum states \cite{QuantumStateDisriminationRev},  (multi-parameter) quantum metrology protocols \cite{Ragy2016,Szcyykulska2016,rafalMETRO}, attacks on quantum cryptography \cite{optimalCRYPTO}, its shadow version \cite{singleSettingTom,povmSHADOWGhune}, quantum tomography \cite{Derka1998,Renes2004,OptTomography}, quantum algorithms \cite{algREVIEW,Bacon2006,HSP} or port-based teleportation  \cite{Ishizaka2008,Studzinski2017port,Mozrzymas2018optimal}, to name just a few. At the same time, the relative usefulness and \emph{advantage} that POVMs can offer over projective measurements for different tasks with increasing Hilbert space dimension remain poorly understood (despite some partial results in that direction based on resource-theoretic approaches \cite{OszmaniecBiswas2019, Uola2019, ResTheoryMeas,Buscemi2024completeoperational} and specific simulation strategies \cite{Oszmaniec17,Oszmaniec19,SMO2022}). 

In this work we show that a surprisingly broad class of POVMs in $d$-dimensional quantum systems can be simulated by a randomization of measurements that do not require auxiliary systems or need only a single qubit to be implemented (see Fig \ref{fig:diagram} for a graphical presentation of the results). Specifically, for a qudit POVM $\M$ we analyze the action of the depolarizing channel\footnote{For an arbitrary POVM $\M$, its depolarized version $\mathrm{\Phi}_t(\M)$ describes a measurement in which a quantum measured state is first affected by the depolarizing channel $\mathrm{\Phi}_t$ and then the POVM $\M$ is implemented.} $\mathrm{\Phi}_t(\M)$ and show that for $c=0.02$ (a dimension-independent constant) $\mathrm{\Phi}_c(\M)$ can be realized by randomization of projective measurements. We furthermore show a related result -- an arbitrary qudit POVM $\M$ can be simulated with postselection probability $q=1/8=0.125$ (again, a dimension -independent constant) by a convex combination of measurements requiring only a single auxiliary qubit to be implemented. Here, simulation with postselection refers to the protocol that, in every experimental round, realizes the target measurement with probability $q$ or reports failure with probability  $1-q$. These results limit the asymptotic advantage that general POVMs  can offer over projective measurements or measurements requiring small ancillas, as long as they can be complemented with classical randomness and post-processing.

\begin{figure*}
\centering
\begin{adjustbox}{width=.9\textwidth}

\includegraphics{./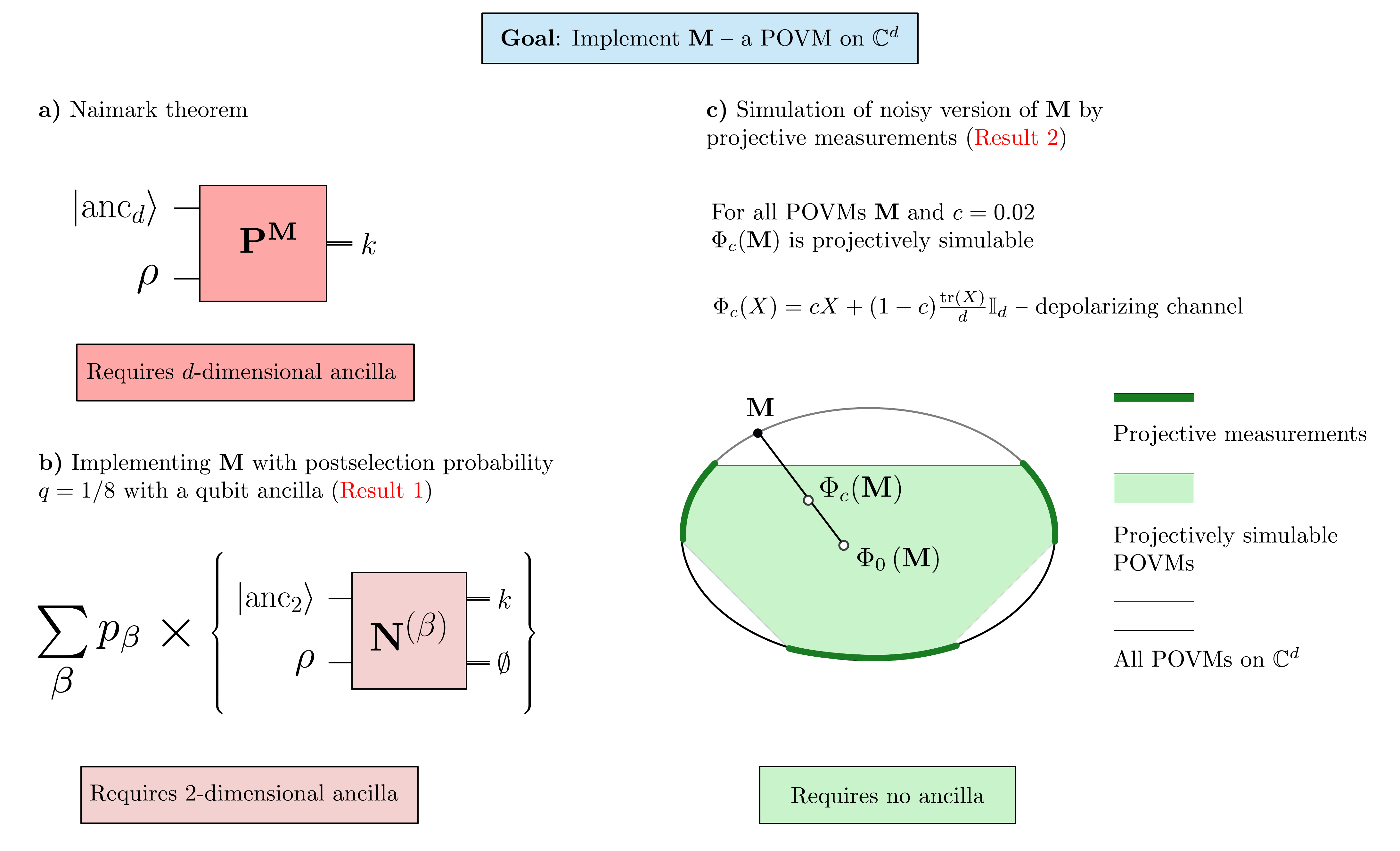}
\end{adjustbox}

\caption{\label{fig:diagram} We tackle the problem of minimizing the size of the ancilla space necessary to implement a general POVM $\M$ on a $d$-dimensional system. (a) The standard method is based on the Naimark theorem (Theorem \ref{th:Naimark}) which, in conjunction with convex structure of POVMs (cf. \cite{Oszmaniec17}), allows to implement arbitrary $\mathbf{M}$ by projective measurement $\P^\M$ on a system enlarged by a $d$-dimensional ancilla initialized in state $\ket{\mathrm{anc}_d}$. (b) An alternative solution, formalized as Result \ref{res:singleANC}, allows to realize arbitrary qudit POVM via randomization over 
 measurements $\N^{(\beta)}$ which require only a single auxiliary qubit (initialized in state $\ket{\mathrm{anc}_2}$) to be implemented. This simplification comes at a price -- the protocol works in every experimental shot with success probability $q=1/8$. (c) If no ancillas are permitted it is possible to realize a noisy version of $\M$ by a convex combination of projective measurements on $\C^d$. Specifically, in Result \ref{res:projSIM} we prove that for $c=0.02$ the noisy POVM $\mathrm{\Phi}_c(\M)$ is projectively simulable, where $\mathrm{\Phi}_c$ is the depolarizing channel.
 }
\end{figure*}

Our findings have significant consequences for various areas of quantum information and quantum computing. First, they allow us to develop a better hidden variable model for entangled  noisy pure states of two qudits (improving over previously existing results \cite{Almeida2007}).  As a consequence we get asymptotically tight robustness bounds for incompatibility of POVMs in a $d$-dimensional space (improving over state-of-the-art results from \cite{Almeida2007,Wiseman2007}). Furthermore, on the applied side, our results limit asymptotic usefulness of generalized measurements (compared to projective measurements) for a variety of tasks, including quantum state discrimination \cite{QuantumStateDisriminationRev}, shadow tomography \cite{ReviewShadows23} or multiparameter quantum metrology \cite{Szcyykulska2016}. Finally, our findings inspire a new circuit knitting scheme which is conceptually different than previous proposals (see e.g. \cite{knitting0,knitting1,knitting2}) and allows (on information-theoretic grounds) to simulate general $2N$ qubit circuits by a convex combination of subcircuits operating on $N+1$ qubits, with a constant (independent of $N$) success probability $q=1/8$.  

On the technical side, our work relies on a recent POVM simulation protocol developed in \cite{SMO2022}, a new technique for approximate implementation of \emph{nearly projective} measurements by a convex combination of projective measurements (which we call dimension-deficient Naimark theorem), and uses the solution of the celebrated Kadison-Singer problem. This problem, originally posed in \cite{kadison-singer} in the context of operator algebras and foundations of quantum theory, could be informally stated as follows \cite{marcus2016solution}: given a quantum system, does knowing the outcomes of all measurements with respect to a maximal set of commuting observables uniquely determine the outcomes of all possible measurements of all possible observables? In mathematical terms, this corresponds to the question of whether every pure state on a maximal algebra of bounded diagonal operators has a unique extension to the algebra of all bounded operators. The problem has gained further prominence in other fields of mathematics such as operator theory, harmonic analysis or frame theory \cite{dolbeault-cohen}. At the same time, it had resisted solution for multiple decades, until the recent and unexpected breakthrough of Marcus, Spielman and Srivastava (\cite{marcus-spielman-srivastava}), solving the conjecture in the affirmative. We refer the reader to \cite{casazza-tremain} for a more thorough discussion of the mathematical context of this result and to \cite{xu-xu-zhu} for some of the more recent improvements and follow-up work. Interestingly, the solution of the Kadison-Singer conjecture can be phrased as a specific statement about POVMs on finite dimensional spaces, which is crucial to our analysis.

\emph{Relation to previous works.} The problem of physical realization of generalized measurements and their simulations received significant attention in recent years. In certain scenarios,  it is possible to implement POVMs by using the standard Naimark recipe (see Theorem \ref{th:Naimark} below), utilizing additional degrees of freedom (like extra modes in optical systems \cite{OptExp23}, additional energy levels present in trapped ions \cite{singleSettingTom} and anharmonic oscillators in superconducting systems \cite{ancFREE}). Such an approach, however, comes with additional experimental cost, and may be difficult to realize in scenarios when we want to implement nontrivial measurement acting on many qubits at the same time (in general, a $d$-dimensional ancilla is needed to implement an arbitrary POVM on a $d$-dimensional system \cite{Oszmaniec17}). Another known strategy of realizing POVMs involves performing a sequence of adaptive non-destructive\footnote{In this context non-destructive refers to the property that the state is not destroyed in the course of the measurement and further operations can be carried out on it after the measurement result is obtained. } binary measurements organized in a \emph{search tree} tailored to the target measurement \cite{TreePOVM}. This method utilizes a single auxiliary qubit. Yet, adaptive measurements introduce additional errors which build up with the number of adaptive steps in the algorithm. To remedy this, a recent work \cite{ExpTreeIBM} introduced a hybrid scheme which combined Naimark dilation and the \emph{search tree} approach. 

Our results build on a line of work which aims to realize POVMs without additional quantum resources such as large ancillas or adaptive measurements. 
In \cite{Oszmaniec17} a set of projectively simulable measurements $\SP(d)$  was defined as a class of measurements on $\C^d$  that can be implemented by using  randomization and postselection of projective measurements on $\C^d$ (see also the concurrent work \cite{Hirsch2017betterlocalhidden} where the concept of simulation of POVMs via projective measurements was used to construct new local models for Werner states on two qubits). In the follow-up works, general simulability properties of POVMs (with respect to classes of measurements different than projective) were discussed in \cite{Guerini2017}, and in \cite{Oszmaniec19} an optimal simulation strategy for simulation of general POVMs with projective measurements \emph{and postselection} was proposed, showing that for this strong notion of simulation for some POVMs the success probability cannot exceed $q=1/d$. Subsequently, a general resource-theoretic perspective on this and related problems was provided in a series of papers \cite{Skrzypczyk24,OszmaniecBiswas2019,ResTheoryMeas,Uola2019} that connected the problem of simulation of POVMs  via measurements from a free convex set $\mathcal{F}$ to finding maximal advantage that a given measurement can offer over free measurements for state discrimination \cite{QuantumStateDisriminationRev}. 
 In a recent paper \cite{SMO2022} a generalization of the scheme from \cite{Oszmaniec19} was proposed, which proved surprisingly effective in simulating  general quantum measurements on $\C^d$ via  measurements requiring only a single auxiliary qubit to be implemented. Specifically, it was shown that the method is capable of simulating Haar-random measurements on $\C^d$ with constant success probability, but there was no rigorous argument for the performance of the protocol for general POVMs (the reason being a complicated combinatorial optimization problem whose solution enters the definition of the simulation protocol -- see discussion below Theorem \ref{th:OldProtocol}). One of the contributions of the present work is the resolution of this issue by utilizing the celebrated solution to the  Kadison-Singer conjecture due to Marcus-Spielman-Srivastava  \cite{marcus-spielman-srivastava}. 

Finally, we remark that there also exists another method of  approximate simulation of POVMs based on sparsification \cite{Lancien}, aiming to minimize the number of outcomes while keeping a similar \emph{so-called} distinguishability norm of a POVM. Despite being conceptually related, our results concern an entirely different problem than the one considered in \cite{Lancien}.

\emph{Organization of work.} In Section \ref{sec:preliminaries} we survey basic concepts and notation on POVMs that will be used throughout the paper. Then, in Section \ref{sec:mainRES} we first state our key results  about POVM simulation, and then discuss their applications in various areas of quantum information and computing, concluding with an outlook for further work. The rest of the paper is devoted to formally proving our results. First, in  Section \ref{sec:overview} we present a detailed outline of the proof of our key results. The subsequent Section \ref{sec:Postselection} contains rigorous presentation of probabilistic simulation of POVMs with strategies using ancillas of bounded dimension. In Section \ref{sec:defitientNAIMARK} we state and prove dimension-deficient Naimark theorem  which underpins our result about POVM simulation strategies that do not use any ancilla. The main text is complemented by three appendices.  Appendix \ref{app:aux} contains auxilary technical results. In Appendix \ref{app:Applications} we give detailed computations relevant for applications discussed earlier in Section \ref{sec:mainRES}. Finally, in Appendix \ref{app:randomPART} we outline a computationally efficient (in $d$) method for simulating general POVMs, albeit with smaller success probability.

\section{Preliminaries}\label{sec:preliminaries}

We start by providing the necessary background and notation on POVMs, their simulability properties, and mathematics that will be used in the rest of the paper.

We will be concerned with generalized measurements on a $d$-dimensional Hilbert space $\H\simeq \C^d$.  A positive operator-valued measure (POVM) is a tuple $\M = \left(M_1, M_2, \ldots, M_\bo \right)$ of non-negative ($M_i\geq 0$) operators (called effects)  normalized to identity ($\sum_{i=1}^\bo M_i = \I_d$). We will denote the set of all quantum measurements on a finite dimensional Hilbert space $\H$  by $\POVM(\H)$. When  $\M$ is performed on a quantum state $\rho$, it yields a random outcome $i$, distributed according to the probability distribution $\mathrm{p}(i|\rho, \M) = \tr\left( \rho M_i \right)$ (Born rule). A measurement $\mathbf{P}=\left(P_1,P_2,\ldots,P_n\right)$ is called projective if its effects satisfy $P_i P_j =\delta_{ij} P_i$. The set of projective measurements on $\H$ will be denoted by $\PP(\H)$. According to the postulates of quantum mechanics, a projective measurement can be associated to measurements of a quantum observable $O$ (effects $P_i$ are projections onto eigenspaces of $O$).

\begin{rem}\label{rem:OutcomeNumber}
    
Throughout the paper we will be considering measurements with \emph{finite} number of outcomes. However, our results naturally extend to the general POVMs on $\C^d$ with continuous output sets. This is because of the main result of \cite{Chiribella2007} which shows that general continuous-outcome measurements on $\C^d$ can be realized as randomization of measurements with at most $d^2$ nonzero effects.  We are not concerned with the state of the system after the measurement -- incorporating this would require considering  more complicated objects: so-called quantum instruments \cite{heinosaari_ziman_2011}. 
\end{rem}

A natural way to physically realize a generalized quantum measurement is to perform a suitable projective measurement on an extended space.

\begin{thm}[Naimark extension theorem \cite{Peres2002}]\label{th:Naimark}
    Let $\M=(M_1,\ldots,M_n)$ be an $n$-outcome POVM on $\H\simeq\C^d$ with rank-one effects (i.e. $M_i=\alpha_i \ketbra{\psi_i}{\psi_i} $). Then there exists a Hilbert space $\H_{ex}\simeq \C^n$, which contains $\H$ as a subspace, and a rank-one projective measurement $\P^\M=(\ketbra{\phi_1}{\phi_1},\ldots, \ketbra{\phi_n}{\phi_n})$ on $\H_{ex}$,  such that for all states $\rho$ supported on $\H$  and all outcomes $i$ we have
    \begin{equation}\label{eq:NaimarkCondition}
        \tr(\rho M_i) =\tr(\rho \ketbra{\phi_i}{\phi_i})\ .
    \end{equation}
\end{thm}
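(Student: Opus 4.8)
The plan is to recognize the normalization condition of the POVM as a tight-frame (Parseval) condition, and then to dilate by completing an isometry to a unitary. First I would absorb the weights into the vectors, setting $\ket{\tilde\psi_i} = \sqrt{\alpha_i}\,\ket{\psi_i}$ so that $M_i = \ketbra{\tilde\psi_i}{\tilde\psi_i}$. The normalization $\sum_{i=1}^n M_i = \I_d$ then reads $\sum_{i=1}^n \ketbra{\tilde\psi_i}{\tilde\psi_i} = \I_d$, which is exactly the statement that $\{\ket{\tilde\psi_i}\}_{i=1}^n$ is a Parseval frame for $\H\simeq\C^d$. Equivalently, assembling these vectors as the columns of a $d\times n$ matrix $A$, the frame condition becomes $AA^\dagger = \I_d$; that is, the $d$ rows of $A$ are orthonormal vectors in $\C^n$. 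A rank count forces $n \geq d$ (a sum of $n$ rank-one operators can equal $\I_d$ only if $n\geq d$), so this setup is consistent.

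Next I would dilate. Since $A$ has $d$ orthonormal rows in $\C^n$, I can append $n-d$ further rows to extend them to an orthonormal basis of $\C^n$ (Gram--Schmidt completion), producing a unitary $V \in \U(n)$ whose top $d\times n$ block equals $A$. I identify $\H_{ex} \simeq \C^n$ with $\H$ embedded as its first $d$ coordinates, and let $\Pi\colon\H_{ex}\to\H$ denote the corresponding orthogonal projection. Writing $\ket{\phi_i}$ for the $i$-th column of $V$, unitarity of $V$ makes $\{\ket{\phi_i}\}_{i=1}^n$ an orthonormal basis of $\H_{ex}$, while reading off the top block gives $\Pi\ket{\phi_i} = \ket{\tilde\psi_i}$ for every $i$.

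Finally I would verify the Naimark condition. The orthonormality $\bk{\phi_i}{\phi_j} = \delta_{ij}$ immediately yields $\ketbra{\phi_i}{\phi_i}\,\ketbra{\phi_j}{\phi_j} = \delta_{ij}\ketbra{\phi_i}{\phi_i}$ and $\sum_i \ketbra{\phi_i}{\phi_i} = \I_n$, so $\P^\M = (\ketbra{\phi_1}{\phi_1},\ldots,\ketbra{\phi_n}{\phi_n})$ is a genuine rank-one projective measurement on $\H_{ex}$. For any state $\rho$ supported on $\H$ one has $\rho = \Pi\rho\Pi$ as an operator on $\H_{ex}$, so that $\bra{\phi_i}\rho\ket{\phi_i} = \bra{\tilde\psi_i}\rho\ket{\tilde\psi_i}$; hence $\tr(\rho\,\ketbra{\phi_i}{\phi_i}) = \bra{\phi_i}\rho\ket{\phi_i} = \bra{\tilde\psi_i}\rho\ket{\tilde\psi_i} = \tr(\rho M_i)$, as required.

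The only substantive step is the second one, the passage from the frame identity $AA^\dagger = \I_d$ to a unitary completion, and even that is routine finite-dimensional linear algebra. The real conceptual crux, which makes the whole argument transparent, lies in the first step: recognizing that the POVM normalization is precisely the tight-frame identity $AA^\dagger=\I_d$. Once that is seen, a set of $d$ orthonormal rows in $\C^n$ is handed to us for free, and the dilation amounts to nothing more than extending them to an orthonormal basis.
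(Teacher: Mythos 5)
Your proof is correct. Note that the paper does not prove this statement at all --- it is quoted as a known result with a citation to Peres --- so there is no in-paper argument to compare against; your Parseval-frame/unitary-completion argument is the standard proof of exactly this rank-one version of the theorem. Each step checks out: the column matrix $A$ of the vectors $\sqrt{\alpha_i}\ket{\psi_i}$ satisfies $AA^\dagger = \I_d$ precisely because the POVM is normalized, so its $d$ rows are orthonormal in $\C^n$ (forcing $n\geq d$), the Gram--Schmidt completion to a unitary $V\in\U(n)$ exists, the columns $\ket{\phi_i}$ of $V$ form an orthonormal basis (hence a rank-one projective measurement on $\C^n$), and for $\rho$ supported on $\H$ the identity $\rho = \Pi\rho\Pi$ together with $\Pi\ket{\phi_i}=\sqrt{\alpha_i}\ket{\psi_i}$ gives $\tr(\rho\,\ketbra{\phi_i}{\phi_i}) = \tr(\rho M_i)$. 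The argument also silently handles degenerate cases (effects with $\alpha_i=0$) without modification, since the completion acts on the rows of $A$, not its columns.
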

The extended space is often realized as  a tensor product with an ancilla space, i.e. $\H_{ex}=\H\ot \H_{a}$. In order to realize the embedding of $\H$ into $\H_{ex}$, one typically fixes a state $\ket{\psi_0}\in \H_a$ and identifies the subspace $\tilde{\H}=\mathrm{span}_\C \lbrace\ket{\psi}\ket{\psi_0}, \ket{\psi}\in \H\rbrace$ with $\H$.  The drawback of the Naimark theorem is that the dimension of the extended space $\H_{ex}$ (and thus also the dimension of the ancilla space $\H_a$) is proportional to the number of outcomes $n$ of the target POVM $\M$. 

It is however possible to reduce the dimension of the ancilla space by complementing quantum resources with suitably chosen classical processing of generalized measurements. There exist three natural classical operations that can be applied to POVMs: taking convex combinations, post-processing and postselection. A convex combination of two POVMs $\M,\N$ (with the same number of outcomes $n$) can be operationally interpreted as a POVM realized by applying, in a given experimental run, measurements $\M$ and $\N$ with probabilities $p$ and $1-p$ respectively (where $p\in[0,1]$ is the mixing parameter). The resulting POVM  is denoted by $p \M + (1-p) \N$ and its $i$-th effect is given by $\left[ p \M + (1-p) \N \right]_i = p M_i + (1-p) N_i$.  See \cite{DAriano2005} for a thorough exposition of the convex structure of quantum measurements. Classical post-processing, on the other hand, refers to application of probabilistic relabelling of the measurements outcomes \cite{Buscemi2005,Haapasalo2012}. For an $n$-outcome POVM $\M$, the application of post-processing results in another POVM $\Q(\M)$ with $n'$ outcomes and effects $\Q(\M)_i = \sum_{j} q_{i|j} M_j$, where $q_{i|j}$ is an $n'\times n$ array of conditional probabilities, i.e.,  $q_{i|j}\ge 0$ and $\sum_{i} q_{i|j} =1$ for each $j$. A class of projectively simulable measurements, introduced in \cite{Oszmaniec17} and denoted by  $\SP(d)$, comprises measurements on $\C^d$ that can be realized by randomization and post-processing of projective measurements on $\C^d$, i.e., without the use of ancillary systems. 

Lastly, postselection, i.e., the process of disregarding certain outcomes, can be used to  implement otherwise inaccessible POVMs. 
We say that a POVM $\L=(L_1,\ldots,L_\bo,L_{\bo+1})$ simulates a POVM $\M=(M_1,\ldots,M_\bo)$ with postselection probability $q$ if $L_i=q M_i$ for $i=1,\ldots,\bo$. When we  implement $\L$, then, conditioned on getting the first $\bo$ outcomes, we obtain samples from $\M$.  Thus, we can simulate $\M$ by implementing $\L$ and postselecting on non-observing the outcome $\bo+1$.
The probability of successfully doing so is $q$, which means that a single sample of $\M$ is obtained by implementing $\L$  on average $1/q$ number of times. The reader is referred to \cite{Oszmaniec19} for a more detailed discussion of simulation via postselection.  

Throughout this work we will be extensively using depolarizing channel acting on quantum states and (by duality) on quantum measurements. For $t\in[0,1]$ (known as the visibility parameter) and $X\in\Herm(\C^d)$ we define $\mathrm{\Phi}_t(X)\coloneqq tX+(1-t)\frac{\tr(X)}{d}\I_d$, where $\I_d$ is identity operator on $\C^d$. Action of depolarizing channel on a quantum measurement $\M=(M_1,\ldots,M_n)$ is defined by setting  $\mathrm{\Phi}_t(\M)$ to be a measurement whose effects are depolarized versions of the original measurement operators: $[\mathrm{\Phi}_t(\M)]_i= \mathrm{\Phi}_t(M_i)$. Note that due to self-duality of $\mathrm{\Phi}_t$ (with respect to Hilbert-Schmidt inner product) and functional form of the Born rule we have  $\mathrm{p}(i|\mathrm{\Phi}_t(\rho), \M)= \mathrm{p}(i|\rho, \mathrm{\Phi}_t(\M))$, i.e. output statistics of a depolarized state $\mathrm{\Phi}_t(\rho)$ measured by the ideal measurement $\M$ are identical to measurement statistics of a noisy POVM $\mathrm{\Phi}_t(\M)$ on the ideal state $\rho$. Following \cite{Oszmaniec17} we will use white noise robustness to quantify the non-projective character of a POVM on $\C^d$:
\begin{equation}\label{eq:white_noise_robustness}
    t_{\SP}(\M)= \max \left\{ t\ |\ \mathrm{\Phi}_t(\M)\in \SP(d) \right\}\ .
\end{equation}
In a given dimension $d$, the minimal value of this function quantifies the worst-case robustness of quantum measurements against projective simulability: 
\begin{equation}\label{eq:worst-case-visibility}
t_{\SP}(d) = \min_{\M\in\POVM(\C^d)} t_{\SP}(\M).
\end{equation}
Prior to this work it was known that \cite{Oszmaniec17} $t_{\SP}(d)\geq \frac{1}{d}$. However, no matching \emph{upper bounds} on  this quantity were known. 

\textbf{Notation. } We will use $\|A\|$ to denote the operator norm of a linear operator $A$, and $n$ to denote $n$-element set $\lbrace1,\ldots\bo\rbrace$. Finally, for two positive-valued functions $f(x),g(x)$ we will write $f=\Theta(g)$ if there exist positive constants $c,C>0$ such that $c f(x)\leq g(x)\leq C f(x)$ for sufficiently large $x$.

\section{Main results and consequences}\label{sec:mainRES}

In this section we present our main findings regarding the simulation of general POVMs via simpler classes of measurements and classical resources. Our first finding confirms a conjecture from \cite{SMO2022} which asserted that arbitrary POVMs can be simulated with constant success probability by measurements requiring only a single auxiliary qubit to be implemented. 

\begin{mdframed}[backgroundcolor=thmgray,nobreak,roundcorner=5pt,linewidth=1pt]
\begin{res}[Constant success probability of simulation via POVMs with a single auxiliary qubit] \label{res:singleANC}Let $\M$ be a POVM on $\C^d$. Then there exists a probability distribution $\{p_\beta\}$ and a collection of POVMs $\{\N^{(\beta)}\}$ such that (i) for every $\beta$ the POVM $\N^{(\beta)}$ has at most $d+1$ outcomes and can be implemented by a single projective measurement on $\C^d \ot \C^2$ (i.e. using only a single auxiliary qubit as an ancilla), followed by classical post-processing, (ii) the convex combination $\L=\sum_{\beta}p_\beta \N^{(\beta)}$ simulates $\M$ with constant postselection probability $q=1/8=0.125$, i.e. $\sum_{\beta}p_\beta \N^{(\beta)}=(q\M,(1-q)\I_d)$. 
\end{res}
\end{mdframed}

\noindent The proof of this result is provided in Section \ref{sec:Postselection}. Therein, we also give a general version of this result (Theorem \ref{th:ancillas}), which shows that by allowing to use POVMs that employ $k$-dimensional ancillas, it is possible to simulate an arbitrary measurement with success probability $q\geq1-\Theta(k^{-1/2})$. 

Our second result concerns the scenario when we are not allowed to use any auxiliary qubits. In this case  it is impossible to implement an arbitrary POVM on $\C^d$ with success probability greater than $1/d$ (cf.  \cite{Oszmaniec19}). For this reason we turn to simulation of noisy (depolarized) versions of quantum measurements.   


\begin{mdframed}[backgroundcolor=thmgray,nobreak,roundcorner=5pt,linewidth=1pt]
\begin{res}[Depolarizing noise with constant visibility makes arbitrary POVM simulable by projective measurements] \label{res:projSIM}  Let $\M=(M_1,\ldots, M_n)$ be a POVM on $\C^d$. Then, for $c=0.02$ we have $\mathrm{\mathrm{\Phi}}_c(\M)\in\SP(\C^d)$. In other words, for every POVM $\M$  its noisy version with effects $\left[\mathrm{\Phi}_c(\M)\right]_i=c M_i +(1-c)\frac{\tr(M_i)}{d} \I_d $ can be simulated as a convex combination of projective measurements that do not require any ancillas to be implemented.
\end{res}
\end{mdframed}

The crucial feature of both Result \ref{res:singleANC} and  Result \ref{res:projSIM} is that the parameters $q,c$ are independent of the dimension of the Hilbert space.  This is in stark contrast to a number of prior results regarding white noise robustness of entanglement \cite{robustnessEntanglement}, nonlocality \cite{robustnessNONLOCALITY}, non-Gaussianity for fermionic systems \cite{fermRobustness} or incompatibility of projective measurements \cite{JMrobustness}.

Although simple to state, our findings give rise to a number of nontrivial consequences in quantum information and quantum computing, which we describe in what follows. We will focus on explaining the context and relevance of different applications, delegating more involved proofs of technical statements to Appendix \ref{app:Applications}.

\subsection{Limitation of usefulness of POVMs in state discrimination and other linear games}\label{sub:statediscrimination}

Quantum state discrimination is one of the primary
applications of POVMs. There exist many variants of this problem (see \cite{QuantumStateDisriminationRev} for a recent review) but in what follows we will focus on its most basic incarnation, the so-called minimal-error state discrimination. In this scenario one is given a source  of quantum states  which generates a state $\rho_i$ with a priori probability $p_i$ (the source can be characterized by an ensemble of quantum states $\mathcal{E}=\{p_i,\rho_i\}_{i=1}^n$). The task is then to find the label $i$ by performing a POVM $\M = (M_1,\ldots,M_n)$ on an unknown quantum state generated by $\mathcal{E}$. In the problem of minimal-error state discrimination, one is interested in optimizing the average success probability $p_\mathrm{succ}(\mathcal{E},\M)=\sum_{i=1}^n p_i \tr(\rho_i M_i)$. Besides foundational interests, minimal-error state discrimination appears naturally in different contexts, as many tasks can be phrased as a variant of this problem: quantum communication \cite{minMAXentropy2009}, asymptotic quantum cloning \cite{cloningBae}, finding hidden subgroup states \cite{Bacon2006,HSP}, or port-based teleportation \cite{Ishizaka2008,Studzinski2017port,Mozrzymas2018optimal}. The following proposition shows the limitation of the relative power of generalized measurements over projective measurements (as a function of dimension $d$):
\begin{prop}
[No unbounded advantage of general POVMs over projective measurements in minimal-error state discrimination]\label{prop:state discrimination} Let $\mathcal{E}=\{p_i,\sigma_i\}_{i=1}^n$ be a source of quantum states on $\C^d$. Let $p_\mathrm{succ}(\mathcal{E},\M)=\sum_{i=1}^n p_i \tr(\sigma_i M_i)$ be the success probability of discriminating states in $\mathcal{E}$ via POVM $\M=(M_1,\ldots,M_n)$. Then, for every  $\mathcal{E}$ and $\M$, a projectively simulable POVM $\N=\mathrm{\Phi}_c(\M)\in\SP(d)$ satisfies $c\,p_\mathrm{succ}(\mathcal{E},\M) \leq \ p_\mathrm{succ}(\mathcal{E},\N)$, where $c=0.02$. Consequently, we have 
\begin{equation}\label{eq:statediscrADV}
(1/c)\max_{\P\in\PP(\C^d)} p_\mathrm{succ}(\mathcal{E},\P) \geq \max_{\M\in\POVM(\C^d)} p_\mathrm{succ}(\mathcal{E},\M)\ .
\end{equation}
\end{prop}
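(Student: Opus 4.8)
The plan is to combine Result~\ref{res:projSIM} with the linearity of the success probability in the measurement operators and with the elementary fact that, for a fixed ensemble, randomization and post-processing of projective measurements can never outperform the best single projective measurement.

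First I would establish the pointwise inequality $c\,p_\mathrm{succ}(\mathcal{E},\M)\le p_\mathrm{succ}(\mathcal{E},\N)$ for $\N=\mathrm{\Phi}_c(\M)$ by direct computation. Substituting the effects $N_i=c M_i+(1-c)\tfrac{\tr(M_i)}{d}\I_d$ and using $\tr(\sigma_i)=1$ gives
\[
p_\mathrm{succ}(\mathcal{E},\N)=c\sum_{i=1}^n p_i\tr(\sigma_i M_i)+\frac{1-c}{d}\sum_{i=1}^n p_i\,\tr(M_i).
\]
The second sum is non-negative because $M_i\ge 0$ and $p_i\ge 0$, so it may simply be dropped, which yields the claimed bound. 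By Result~\ref{res:projSIM} the POVM $\N=\mathrm{\Phi}_c(\M)$ indeed lies in $\SP(d)$, so this already produces a projectively simulable measurement with the advertised guarantee, establishing the first assertion of the proposition.

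For the consequence \eqref{eq:statediscrADV} I would take $\M$ to be an optimal POVM attaining $\max_{\M\in\POVM(\C^d)}p_\mathrm{succ}(\mathcal{E},\M)$ (the maximum is attained because the set of $n$-outcome POVMs is compact and $p_\mathrm{succ}$ is continuous). Setting $\N=\mathrm{\Phi}_c(\M)\in\SP(d)$, the first part gives $c\,\max_{\M}p_\mathrm{succ}(\mathcal{E},\M)\le p_\mathrm{succ}(\mathcal{E},\N)$, so it remains to bound $p_\mathrm{succ}(\mathcal{E},\N)$ from above by $\max_{\P\in\PP(\C^d)}p_\mathrm{succ}(\mathcal{E},\P)$. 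The key lemma I would prove is that \emph{every} element of $\SP(d)$ obeys $p_\mathrm{succ}(\mathcal{E},\N)\le\max_{\P}p_\mathrm{succ}(\mathcal{E},\P)$. Writing $\N=\sum_\alpha q_\alpha\,\Q_\alpha(\P^{(\alpha)})$ as a convex combination of post-processed projective measurements and invoking linearity of $p_\mathrm{succ}$, it suffices to treat a single term $\Q(\P)$. Expanding $[\Q(\P)]_i=\sum_j q_{i|j}P_j$ gives $p_\mathrm{succ}(\mathcal{E},\Q(\P))=\sum_j\!\big(\sum_i q_{i|j}\,p_i\tr(\sigma_i P_j)\big)$, and bounding each inner convex combination over $i$ by its maximum $i^\ast(j)=\arg\max_i p_i\tr(\sigma_i P_j)$ replaces the stochastic relabeling by the deterministic assignment $j\mapsto i^\ast(j)$. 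Grouping the mutually orthogonal projectors $P_j$ according to this deterministic map produces a measurement whose effects are sums of orthogonal projections and hence is itself projective, so the term is upper-bounded by $\max_{\P}p_\mathrm{succ}(\mathcal{E},\P)$. Chaining the inequalities and dividing by $c$ then yields \eqref{eq:statediscrADV}.

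The computations are essentially routine; the only point requiring genuine care is the final one — verifying that derandomizing the post-processing of a projective measurement (merging outcomes through a deterministic map) again gives an element of $\PP(\C^d)$, so that the right-hand side of \eqref{eq:statediscrADV} is truly a maximum over projective measurements rather than some larger class. I would also keep the outcome counts consistent, allowing $n$-outcome projective measurements with possibly vanishing effects, so that $p_\mathrm{succ}(\mathcal{E},\P)$ is well defined and directly comparable across all the measurements appearing in the argument.
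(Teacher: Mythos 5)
Your proposal is correct and follows essentially the same route as the paper: the first inequality is the same linearity-plus-positivity computation (the paper writes $\mathrm{\Phi}_c(\M)=c\M+(1-c)\mathrm{\Phi}_0(\M)$ and drops the non-negative second term, which is identical to your dropping of $\frac{1-c}{d}\sum_i p_i\tr(M_i)$), and your key lemma that every $\N\in\SP(d)$ obeys $p_\mathrm{succ}(\mathcal{E},\N)\le\max_{\P\in\PP(\C^d)}p_\mathrm{succ}(\mathcal{E},\P)$ is precisely what the paper compresses into its one-line appeal to linearity of $p_\mathrm{succ}(\mathcal{E},\cdot)$. Your explicit derandomization of the post-processing --- replacing the stochastic relabeling by $j\mapsto i^\ast(j)$ and noting that the merged effects are sums of mutually orthogonal projectors, hence again an element of $\PP(\C^d)$ under the paper's definition (which permits vanishing effects) --- correctly fills in the one step the paper leaves implicit, since elements of $\SP(d)$ involve post-processing and not merely convex mixing of projective measurements.
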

\begin{proof}
 We have  
\begin{equation}
p_\mathrm{succ}(\mathcal{E},\mathrm{\Phi}_c(\M))= c\  p_\mathrm{succ}(\mathcal{E},\M) + (1-c) p_\mathrm{succ}(\mathcal{E},\mathrm{\Phi}_0(\M))\ 
\end{equation}
and additionally $ p_\mathrm{succ}(\mathcal{E},\mathrm{\Phi}_0(\M)) \geq 0$. By realizing that $p_\mathrm{succ}(\mathcal{E},\cdot)$ is linear in the second argument we get
\begin{equation}
    \max_{\P\in\PP(\C^d)} p_\mathrm{succ}(\mathcal{E},\P) \geq p_\mathrm{succ}(\mathcal{E},\mathrm{\Phi}_c(\M))\ \geq c\  p_\mathrm{succ}(\mathcal{E},\M)\ .
\end{equation}
We obtain Eq. \eqref{eq:statediscrADV} by  optimizing over $\M\in\POVM(\C^d)$.

\end{proof}
The papers \cite{OszmaniecBiswas2019,Uola2019,Takagi2019} established a quantitative connection between minimal-error state discrimination and a geometric measure of resourcefulness of POVMs with respect to compact subsets $\mathcal{F}$ of POVMs $\POVM(\C^d)$, called generalized robusteness:
\begin{equation}
    R_\mathcal{F}(\M)=\min \left\lbrace s \left| \exists \N \in \POVM(\C^d)\ \text{s.t}\ \frac{\M+s \N}{1+s}\in\mathcal{F}  \right. \right\rbrace\ .
\end{equation}
Specifically, we have
\begin{equation}
    R_\mathcal{F}(\M)=\max_\mathcal{E}\frac{p_\mathrm{succ}(\mathcal{E},\M)}{\max_{\N\in\mathcal{F}} p_\mathrm{succ}(\mathcal{E},\N)}\ .
\end{equation}
From Proposition \ref{prop:state discrimination} we immediately have $R_{\SP(d)} \leq (1/c)$. Similarly, from Result \ref{res:singleANC} we get $R_{\SP(d,2)}\leq 1/q$, where $q=1/8$ and $\SP(d,2)$ is the convex hull of POVMs on $\C^d$ that can be implemented with only  one auxiliary qubit (a generalization of this statement to measurements requiring $k$-dimensional ancillas follows from Theorem \ref{th:ancillas}).

We note that similar bounds between relative power of POVMs and measurements in $\SP(\C^d)$ or implementable by limited number of outcomes (measurements $\N^{(\beta)}$ from Result \ref{res:singleANC} have at most $d+1$ outcomes) hold also for other linear games like steering on Bell inequalities. Specifically, while in these contexts it is known that the number of outcomes of POVMs can reveal non-classical qualities of quantum states \cite{Kleinmann2016,Nguyen2020}, our simulation results imply that values of the corresponding linear functionals will be comparable.

\subsection{Limitations of POVMs in shadow tomography and related protocols}\label{sub:shadowTOM}
Another important application of our results concerns potential usefulness of POVMs in shadow tomography. Shadow tomography, introduced in \cite{HuangShadows2020}, is an important algorithmic primitive whose purpose is to estimate expectation values of multiple non-commuting observables on an (unknown) quantum state (see \cite{ReviewShadows23} for a recent review of the method and exposition in the broader context of quantum technologies). The basic premise of the technique is that by avoiding direct tomography of a quantum state, it is possible to simultaneously estimate many (even exponentially many) observables on an $n$-qubit state by performing only $\mathrm{poly}(n)$  measurement rounds. The most popular classical shadow protocols \cite{HuangShadows2020,Zhao21,Wan22} involve randomized implementation of projective measurements realized by appending randomly-chosen unitary transformation (chosen from the ensemble relevant to the particular application). However, in full generality, generalized measurements \cite{povmSHADOWfirst,povmSHADOWGhune,algoSHADOWnew} can be used to define classical shadows and in some cases offer an advantage over the standard protocols (see also \cite{MMO23,MMO24,MCO2024} for the complementary perspective in which incompatibility theory was proposed to estimate  multiple non-commuting quantities for qubit and fermionic systems). The following result shows that, for a broad class of classical shadow protocols, POVMs do not offer asymptotically unbounded (with the dimension of the Hilbert space) advantage compared to projectively simulable measurements.

\begin{prop}[Limitations of single-shot classical shadows based on generalized measurements]\label{prop:classSHADOW}
Let $\mathcal{O}=\lbrace O_i\rbrace_{i=1}^L$ be a collection of observables on $\C^d$ satisfying $\tr(O_i)=0$ for $i\in[L]$. Let $\M=(M_1,\ldots,M_n)$ be a POVM that can be used to estimate expectation values of observables $O\in\mathcal{O}$. Let $\hat{e}_O$ be an unbiased estimator of the expectation value of $O$, i.e. a real-valued function $\hat{e}_O:[n]\rightarrow \mathbb{R}$ satisfying
\begin{equation}
    \mathbb{E}\hat{e}_O = \sum_{i=1}^n \hat{e}_O(i) \tr(\rho M_i) = \tr(\rho O)\ ,
\end{equation}
for every state $\rho$. Let $\Delta_\M(O,\rho)=\mathbb{E}\hat{e}_{O}^2$ be the upper bound on the variance of $\hat{e}_O$. Then, for $c=0.02$  from Result \ref{res:projSIM}, a projectively simulable POVM $\N=\mathrm{\Phi}_c (\M)\in \SP(d)$ 
can be used to estimate expectation values of observables $O\in\mathcal{O}$ via estimators $\hat{e}'_{O}(k)\coloneqq\frac{1}{c} \hat{e}_O(k) $. Furthermore we have $ \max_\rho \Delta_\N(O,\rho)\leq (1/c^2) \max_\rho\Delta_\M(O,\rho)$.
\end{prop}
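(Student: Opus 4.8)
The plan is to establish the two assertions -- unbiasedness of the rescaled estimator $\hat{e}'_O$ for $\N=\mathrm{\Phi}_c(\M)$, and the worst-case variance bound -- by directly exploiting the explicit form of the depolarized effects $N_i = c M_i + (1-c)\frac{\tr(M_i)}{d}\I_d$, reducing both claims to elementary linear algebra. Beyond the fact that Result~\ref{res:projSIM} guarantees $\N\in\SP(d)$ (so that $\N$ is indeed projectively simulable), the estimation-theoretic content is self-contained and works for any $c\in(0,1)$.

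For the unbiasedness claim I would expand, using $\tr(\rho N_i)=c\,\tr(\rho M_i)+(1-c)\tfrac{\tr(M_i)}{d}$ and $\tr(\rho)=1$,
\begin{equation}
\mathbb{E}_\N \hat{e}'_O = \tfrac{1}{c}\sum_{i=1}^n \hat{e}_O(i)\tr(\rho N_i) = \sum_{i=1}^n \hat{e}_O(i)\tr(\rho M_i) + \frac{1-c}{cd}\sum_{i=1}^n \hat{e}_O(i)\tr(M_i),
\end{equation}
where the first sum equals $\tr(\rho O)$ by the assumed unbiasedness of $\hat{e}_O$ for $\M$. It then remains to show that the correction term vanishes, i.e.\ $\sum_i \hat{e}_O(i)\tr(M_i)=0$. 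This follows by evaluating the original unbiasedness relation $\sum_i \hat{e}_O(i)\tr(\rho M_i)=\tr(\rho O)$ at the maximally mixed state $\rho=\I_d/d$: the left-hand side becomes $\tfrac{1}{d}\sum_i \hat{e}_O(i)\tr(M_i)$ while the right-hand side is $\tr(O)/d=0$ by the standing assumption $\tr(O)=0$. This is precisely where the trace-zero hypothesis on the observables enters, and it makes $\hat{e}'_O$ exactly unbiased for $\N$.

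For the variance bound the key simplification is to package the estimator weights into a single positive operator $A \coloneqq \sum_{i=1}^n \hat{e}_O(i)^2 M_i \geq 0$. By linearity of the depolarizing channel $\sum_i \hat{e}_O(i)^2 N_i = \mathrm{\Phi}_c(A)$, so that
\begin{equation}
\Delta_\N(O,\rho) = \tfrac{1}{c^2}\tr\!\big(\rho\,\mathrm{\Phi}_c(A)\big) = \tfrac{1}{c^2}\Big[c\,\tr(\rho A) + (1-c)\tfrac{\tr(A)}{d}\Big], \qquad \Delta_\M(O,\rho)=\tr(\rho A).
\end{equation}
Maximizing over states and using $\max_\rho \tr(\rho A)=\|A\|$ for $A\geq 0$, the target inequality $\max_\rho \Delta_\N \leq c^{-2}\max_\rho \Delta_\M$ reduces, after cancelling the common positive factor $1-c$, to the elementary estimate $\tr(A)/d \leq \|A\|$, which holds because the average eigenvalue of a positive operator never exceeds its largest eigenvalue. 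The only (modest) obstacle is organizational rather than technical: one must recognize that the depolarizing term contributes the average $\tr(A)/d$ rather than the maximum $\|A\|$ to the worst-case second moment, so that the added noise in fact \emph{lowers} the variance relative to the naive $c^{-2}$ rescaling, which is exactly what closes the bound.
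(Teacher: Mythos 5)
Your proposal is correct and follows essentially the same route as the paper's proof: both decompose the second moment under $\N=\mathrm{\Phi}_c(\M)$ into $c$ times the $\M$-term plus $(1-c)$ times a maximally-mixed-state contribution, and your inequality $\tr(A)/d\leq \|A\|$ for $A=\sum_i \hat{e}_O(i)^2 M_i\geq 0$ is exactly the paper's bound $\Delta_\M(O,\I_d/d)\leq \max_\rho\Delta_\M(O,\rho)$ written in operator form. The only cosmetic difference is that the paper routes both computations through the self-duality $\tr(\rho\,\mathrm{\Phi}_c(M_i))=\tr(\mathrm{\Phi}_c(\rho)M_i)$, whereas you expand the depolarized effects explicitly and kill the correction term by evaluating the unbiasedness relation at $\rho=\I_d/d$ --- the same use of the traceless hypothesis $\tr(O)=0$.
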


The relevance of the above proposition lies in the fact that $\Delta_\M(O,\rho)$ is typically used for assessing performance of the procedures based on classical shadows \cite{HuangShadows2020,ReviewShadows23}. This is because, by virtue of the median-of-means technique, $R\approx \max_{O \in \mathcal{O}} \Delta_\M (O,\rho) \log(L)/\epsilon^2$ is the sample complexity, i.e. number of copies of the state $\rho$ which are needed to estimate all expectation values of operators in $\mathcal{O}$ up to additive precision $\epsilon$. Our result states that for very general scenarios (involving arbitrary sets $\mathcal{O}$ of traceless observables), the worst-case sample complexity based bounds on estimation strategies using general POVMs can outperform the one based on randomization over projective measurements by only a constant factor.

It is possible to produce an analogous statement about the lack of advantage for general POVMs over measurements $\SP(d,2)$ that can be realized with a single auxiliary qubit (the possible sampling overhead $(1/c^2)$ is replaced by $1/q$, where $q=1/8$ is a constant from Result \ref{res:singleANC}).  We furthermore note that Proposition \ref{prop:classSHADOW} can be easily extended to estimation of nonlinear state functions that can be expressed as $f_X(\rho)= \tr(\rho^{\ot k} X)$, for some traceless observable $X$ acting on $(\C^d)^{\ot k}$.


\subsection{Power of POVMs implementable by a single auxiliary qubit} \label{sub:singleQubit}

The notion of simulation with postselection covered by Result \ref{res:singleANC} is particularly strong. Operationally, for a target POVM $\M$ and a quantum state $\rho$, the simulating POVM $\sum_\beta p_\beta \N^{(\beta)}$ generates, in a single experimental realization, a sample from output $i$ distributed according to the probability distribution $\lbrace p(i|\rho,\M)\rbrace$  with probability $q=1/8$ and with probability $7/8$ flags that the simulation protocol was not successful. There is a variety of tasks for which losing (on average) a constant fraction of samples is acceptable -- for such problems our POVM simulation strategy, utilizing just a single auxiliary qubit, achieves performance that matches the one of the optimal POVM $\M$ up to dimension-independent overhead. This time-ancilla space trade-off might be beneficial, as implementation of general $\M$ requires in general $N$ auxiliary qubits for an $N$-qubit system. From a variety of different possible scenarios we reference here (already mentioned) quantum state discrimination \cite{QuantumStateDisriminationRev}, shadow tomography \cite{ResTheoryMeas} and furthermore quantum state tomography \cite{Derka1998,Renes2004,OptTomography}, multi-parameter metrology \cite{Ragy2016,Szcyykulska2016,rafalMETRO} (in both frequentist and Bayesian approach).  We however note that for some specific applications, like port-based teleportation \cite{Ishizaka2009}, throwing away a constant fraction of samples undermines the performance of the measurement. To remedy this we need to increase the success probability $q$ of our  simulation protocol from Result \ref{res:singleANC} (so that it asymptotically equals $1$), which can be achieved by introducing additional ancillas, as explained in Theorem \ref{th:ancillas} in Section \ref{sec:Postselection}.

\subsection{Space-efficient circuit knitting from POVM simulation}\label{sec:circKNITTING}

The scheme from Result \ref{res:singleANC} can be interpreted as a (probabilistic) circuit knitting procedure from  Fig. \ref{fig:post}, i.e. as a method of simulating large quantum circuits by implementing smaller ones and applying appropriate classical post-processing (see e.g. \cite{knitting0,knitting1,knitting2} for the exemplary proposals of such schemes). Assume we apply our simulation procedure to a system of $N$ qubits so that $d=2^N$. Furthermore, assume that the target POVM $\M$ has rank-one effects with $d^2=4^N$ outcomes. Then, by virtue of Naimark theorem (Theorem \ref{th:Naimark}), it can be realised as a projective measurement, $\P^\M$, acting on $2N$ qubits (a system of dimension $d^2=4^N$) via 
\begin{equation}
    \tr(\rho M_i)=\tr\left(\rho \ot \ketbra{0_N}{0_N} P^{\M}_i\right)\ ,\ i\in[4^N]\ ,
\end{equation}
where $\rho$ an is arbitrary $N$-qubit state and $\ketbra{0_N}{0_N}$ is a fixed referential $N$-qubit state. Let $U$ be a unitary such that $U^\dag \ketbra{i}{i} U =P^{\M}_i$ , where $\{\ket{i}\}_{i=1}^{4^N}$ is a computational basis on $(\C^2)^{\ot N} \ot (\C^2)^{\ot N}$. From the statement of Result \ref{res:singleANC} we know that measurements $\N^{(\beta)}$ require only a single auxiliary qubit and a stochastic post-processing $\Q^{(\beta)}$ to be implemented (see the proof of Result \ref{res:singleANC} presented in Section \ref{sec:overview} for details). Stochastic operations $\Q^{(\beta)}$  transform probability distributions with $2\cdot 2^N$ outcomes into probability distributions on $4^N+1$ outcomes such that for for all $i\in[4^N]$  
\begin{equation}
    \tr(\rho N^{(\beta)}_i)= \sum_{l=1}^{2\cdot 2^N} q^{(\beta)}_{i|l}  \tr\left( \rho \ot \ketbra{0}{0} P^{(\beta)}_l\right)\ ,
\end{equation}
and
\begin{equation}
    \tr(\rho N^{(\beta)}_\emptyset)= \sum_{l=1}^{2\cdot 2^N} q^{(\beta)}_{\emptyset|l}  \tr\left( \rho \ot \ketbra{0}{0} P^{(\beta)}_l\right)\ ,
\end{equation}
where $\P^{(\beta)}$ is a projective measurement on $(\C^2)^{\ot N} \ot \C^2$ and $\ket{0}$ is a referential state of the auxiliary qubit. Let $U_\beta$ denote a unitary such that $U^{\dag}_\alpha \ketbra{l}{l} U_\beta = P^{(\beta)}_l$ for $l\in[2\cdot 2^N]$. Combining all these ingredients and Result \ref{res:singleANC} and using $\mathbf{x}=x_1 x_2 \ldots x_{2N}\in\{0,1\}^{2N} ,\mathbf{y}=y_1 y_2 \ldots y_{N+1} \in\{0,1\}^{N+1}$ to denote outcomes of multiqubit measurements, we obtain:

\begin{prop}[Unitary compression via POVM simulation] 
Let $U$ be a $2N$-qubit unitary circuit. Let $\rho$ be a $N$-qubit state For $q=1/8$ there exists a probability distribution $\{p_\beta\}$, stochastic transformations $\Q^{(\beta)}$ (transforming  classical states on $N+1$ bits into states on $2N$ bits)  and $N+1$-qubit unitaries $U_\beta$ such that
\begin{eqnarray}
    q \tr\left( U\rho \ot \ketbra{0_N}{0_N} U^\dagger \ketbra{\mathbf{x}}{\mathbf{x}}\right) \nonumber \\  =  \sum_{\beta} p_\beta \sum_{\mathbf{y}\in\{0,1\}^{N+1}} q^{(\beta)}_{\mathbf{x}|\mathbf{y}} \tr\left( U_\beta\rho \ot \ketbra{0}{0} U_\beta^\dagger \ketbra{\mathbf{y}}{\mathbf{y}} \right) \ ,
\end{eqnarray}
for $\mathbf{x}\in\{0,1\}^{2N}$. Additionally we have
\begin{equation}
   \sum_{\beta} p_\beta \sum_{\mathbf{y}\in\{0,1\}^{N+1}} q^{(\beta)}_{\emptyset|\mathbf{y}} \tr\left( U_\beta\rho \ot \ketbra{0}{0} U_\beta^\dagger \ketbra{\mathbf{y}}{\mathbf{y}} \right)=   1-q\ .
\end{equation}
\end{prop}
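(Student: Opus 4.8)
The plan is to recognize the two displayed identities as a direct reinterpretation of Result~\ref{res:singleANC} applied to the rank-one POVM on the $N$-qubit system that is induced by measuring the output of $U$ in the computational basis. First I would fix $d=2^N$ and, via the Naimark recipe (Theorem~\ref{th:Naimark}), define the $4^N$-outcome POVM $\M=(M_{\mathbf{x}})_{\mathbf{x}\in\{0,1\}^{2N}}$ on $\C^d$ through $\tr(\rho M_{\mathbf{x}})=\tr(\rho\ot\ketbra{0_N}{0_N} P^{\M}_{\mathbf{x}})$, where $P^{\M}_{\mathbf{x}}=U^\dagger\ketbra{\mathbf{x}}{\mathbf{x}}U$. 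Cyclicity of the trace then rewrites the left-hand side of the first identity as $q\,\tr(U\rho\ot\ketbra{0_N}{0_N}U^\dagger\ketbra{\mathbf{x}}{\mathbf{x}})=q\,\tr(\rho\ot\ketbra{0_N}{0_N}P^{\M}_{\mathbf{x}})=q\,\tr(\rho M_{\mathbf{x}})$, so the target reduces to showing that the right-hand side equals $q\,\tr(\rho M_{\mathbf{x}})$ as well.

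Next I would invoke Result~\ref{res:singleANC} for this $\M$, which supplies the distribution $\{p_\beta\}$ and POVMs $\N^{(\beta)}$ with $\sum_\beta p_\beta\N^{(\beta)}=(q\M,(1-q)\I_d)$; in particular $\sum_\beta p_\beta N^{(\beta)}_{\mathbf{x}}=q M_{\mathbf{x}}$ for every genuine outcome and $\sum_\beta p_\beta N^{(\beta)}_\emptyset=(1-q)\I_d$ for the failure flag. The proof of that result (Section~\ref{sec:overview}) exhibits each $\N^{(\beta)}$ as a single projective measurement $\P^{(\beta)}$ on $(\C^2)^{\ot N}\ot\C^2$ followed by a stochastic post-processing $\Q^{(\beta)}$ with kernel $q^{(\beta)}_{i|l}$, giving $\tr(\rho N^{(\beta)}_i)=\sum_{l} q^{(\beta)}_{i|l}\,\tr(\rho\ot\ketbra{0}{0}P^{(\beta)}_l)$. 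Writing $P^{(\beta)}_l=U_\beta^\dagger\ketbra{l}{l}U_\beta$ for an $(N+1)$-qubit unitary $U_\beta$ and again applying cyclicity of the trace, each summand becomes $\tr(U_\beta\rho\ot\ketbra{0}{0}U_\beta^\dagger\ketbra{\mathbf{y}}{\mathbf{y}})$ after relabelling $l$ by $\mathbf{y}\in\{0,1\}^{N+1}$.

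Assembling these pieces, the right-hand side of the first identity equals $\sum_\beta p_\beta\,\tr(\rho N^{(\beta)}_{\mathbf{x}})=\tr\!\big(\rho\textstyle\sum_\beta p_\beta N^{(\beta)}_{\mathbf{x}}\big)=\tr(\rho\,qM_{\mathbf{x}})=q\,\tr(\rho M_{\mathbf{x}})$, matching the left-hand side. For the normalization identity I would use the failure branch: its right-hand side equals $\sum_\beta p_\beta\,\tr(\rho N^{(\beta)}_\emptyset)=\tr\!\big(\rho(1-q)\I_d\big)=(1-q)\tr(\rho)=1-q$, since $\rho$ is a state of unit trace.

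I do not expect a genuine analytic obstacle here, as all the heavy lifting---and the only use of the Kadison--Singer theorem---is concentrated in Result~\ref{res:singleANC}, which is assumed. The one point demanding care is the bookkeeping of outcome labels: the projective measurement $\P^{(\beta)}$ carries $2\cdot 2^N$ outcomes while $\M$ carries $4^N$ of them, so I must verify that the post-processing $\Q^{(\beta)}$ consistently maps the former onto the $4^N+1$ labels (the $\mathbf{x}$'s together with $\emptyset$), and that the whole construction never invokes more than the single auxiliary qubit promised by Result~\ref{res:singleANC}.
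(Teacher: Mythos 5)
Your proposal is correct and follows essentially the same route as the paper: translate the $2N$-qubit unitary plus computational-basis measurement into a rank-one POVM $\M$ on $\C^d$ via the Naimark correspondence, apply Result~\ref{res:singleANC} to obtain $\{p_\beta\}$ and the single-ancilla-qubit measurements $\N^{(\beta)}$ with their projective realizations $\P^{(\beta)}$ and post-processings $\Q^{(\beta)}$, define $U_\beta$ from $\P^{(\beta)}$, and assemble both identities (including the failure-flag normalization) by linearity and cyclicity of the trace. The only cosmetic difference is directional framing—you define $\M$ from the given $U$, while the paper's exposition runs from $\M$ to $U$—but the content is identical.
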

In other words, sampling from the output of a circuit $U$ on $\rho\ot\ketbra{0_N}{0_N}$ can be realized by: (i) sampling $\beta$ according to the probability distribution $\{p_\beta\}$, (ii) implementing a unitary $U_\beta$ on an $N+1$-qubit state $\rho \ot \ketbra{0}{0}$, (iii) performing a measurement in the computational basis on $N+1$ qubits and (iv) applying post-processing $\Q^{(\beta)}$ to the resulting outcome $\mathbf{y}$. Importantly, this method is guaranteed to generate a sample from the correct probability distribution with success probability $q=1/8$. See Fig. \ref{fig:post} for a graphical presentation of all the steps of the protocol. 

\begin{figure*}
\centering
\begin{adjustbox}{width=0.95\textwidth}
\includegraphics{./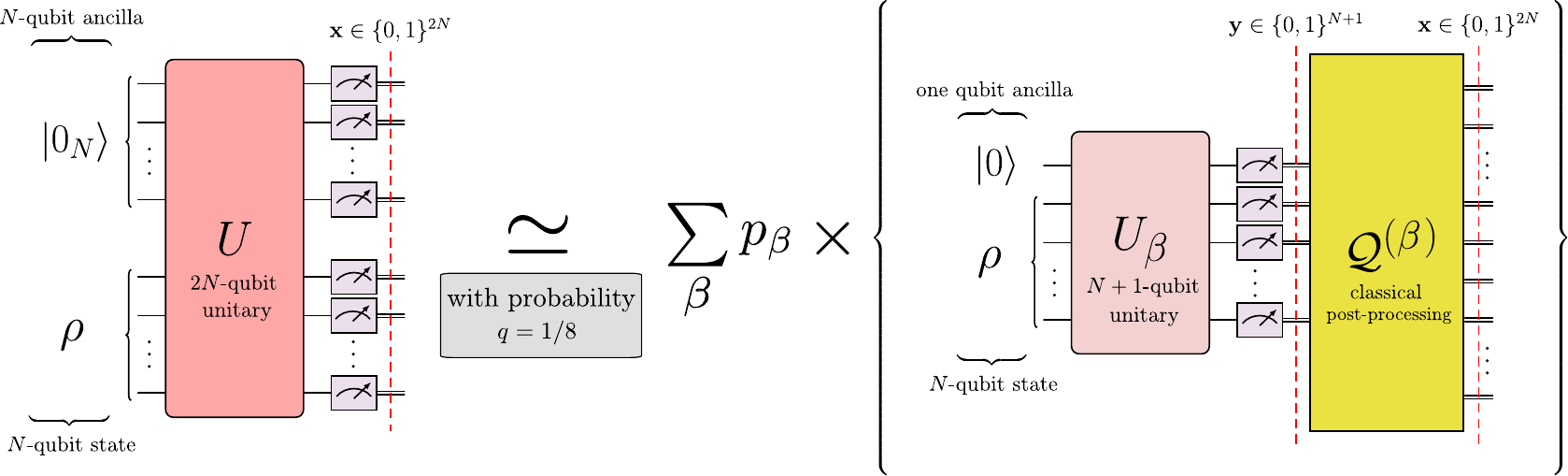}
\end{adjustbox}

\caption{A circuit knitting method originating from the POVM simulation protocol in Result \ref{res:singleANC}. The method realizes sampling from the output of a $2N$-qubit unitary $U$ on $\rho\ot\ketbra{0_N}{0_N}$, where $\rho$ is an $N$-qubit state and $\ket{0_N}$ is a pure state of an $N$-qubit ancilla. It proceeds by by: (i) sampling $\beta$ according to the probability distribution $\{p_\beta\}$, (ii) implementing an $N+1$-qubit unitary $U_\beta$ on a state $\rho \ot \ketbra{0}{0}$ (with $\ket{0}$ being a pure state of a single qubit), (iii) performing a measurement in the computational basis on $N+1$ qubits and (iv) applying post-processing $\Q^{(\beta)}$ to the resulting outcome $\mathbf{y}$ to return $\mathbf{x}$ or $\emptyset$ (a flag indicating that the protocol was unsuccessful). Importantly, the method generates a sample $\mathbf{x}$ from the correct probability distribution with success probability $q=1/8$.\label{fig:post}}
\end{figure*}

Our method is flexible and can be easily adjusted to handle the cases when the target quantum circuits $U$ act on an arbitrary number of qubits greater than $N$.  We want to emphasize that the above proposition does not provide an efficient method for constructing the probability distribution $\{p_\beta\}$, stochastic transformations $\Q^{(\beta)}$ and unitaries $U_\beta$. While some of these objects can be constructed efficiently for some classes of input unitaries $U$, we do not expect that in general it would be possible to devise an efficient algorithm for realization of our scheme for general poly($N$)-sized circuits on $N$ qubits. Nevertheless, we expect that for a moderate number of qubits and not too complicated circuits, our technique for circuit knitting can prove useful. We make a step towards this direction in Appendix \ref{app:randomPART}, where we show that a random choice of a partition of the set of outcomes (which enters into the definition of the simulation protocol, see Section \ref{sec:overview}) of an $N$-qubit extremal rank-one POVM gives success probability $q\approx 1/ N$. This guarantees that the sample complexity of the protocol (the number of trials needed to generate a sample from the correct distribution) still scales efficiently with system size. Using the partition that is guaranteed to exist via the solution of Kadison-Singer problem \cite{marcus-spielman-srivastava} (c.f. Section \ref{sec:Postselection}) gives $q=1/8$ but the complexity of finding such a partition can be exponential in $d$ (and hence doubly exponential in $N$).

\subsection{Hidden variable models for Werner and isotropic states}

Our next application concerns local hidden variable models for correlations originating from noisy pure states of two qudits, i.e states of the form 
\begin{equation}
\rho_\psi (t)  = t \ket{\psi}\bra{\psi}+  (1-t) \frac{1}{d^2}\I_d \ot \I_d \ ,
\end{equation}
where $\ket{\psi}$ is a fixed pure state on $\C^d\ot\C^d$ and $t\in[0,1]$ is typically called visibility. If $\ket{\psi}=\ket{\phi_d}=1/\sqrt{d} \sum_{1=1}^d \ket{i}\ket{i}$ is a maximally entangled state, then the corresponding noisy state is called an isotropic state and denoted by $\rho_{iso}(t)$.

A bipartite quantum state $\rho$ on $\H_A \ot \H_B$ is called local \cite{localREVIEW2014} (POVM-local) if for all quantum measurements $\M\in\POVM(\H_A)$, $\N\in\POVM(\H_B)$ and all outcomes $a,b$ we have
\begin{equation}\label{eq:locality}
    \tr(\rho M_a \otimes N_b) = \int_\Lambda d\lambda \, p(\lambda) \xi_A(a|\M,\lambda) \xi_B(b|\N,\lambda)\ ,
\end{equation}
where $\lambda$ denotes a hidden variable, $p(\lambda)$ is its distribution in the hidden variable space $\Lambda$, and $\xi_A,\xi_B$ are local response functions -- for fixed $\M$ and $\lambda\in\Lambda$, the collection $\lbrace{\xi_A (a|\M,\lambda) \rbrace}$ forms the probability distribution of the random variable $a$ (and analogously for the response function $\xi_B$). Physically, condition \eqref{eq:locality} means that for every possible experimental setting (given by the choice of local measurements) in a Bell scenario, outcome statistics of measurements performed on $\rho$ can be reproduced by a hidden variable model (specified by the distribution $p(\lambda)$ and response functions $\xi_A(a|\M,\lambda), \xi_B(b|\N,\lambda)$). A state is PM-local if condition \eqref{eq:locality} holds when $\M,\N$ are restricted to projective measurements. The interest in local models comes from the fact that there exist states which are entangled but cannot be used to violate Bell inequalities \cite{Barret2002}. 

The first local models for projective measurements for general dimension $d$ date back to the original paper of Werner \cite{Werner1989} who showed that so-called Werner states\footnote{Werner states are states of the form $\rho_W(t)=t \frac{1}{d(d-1}(\I_d\ot\I_d-V)+(1-t) \frac{1}{d^2} \I_d\ot\I_d$, where $t\in[0,1]$ and $V$ is the swap operator that permutes two factors of the tensor product $\C^d\ot \C^d$.} $\rho_W(t)$ are PM-local for $t\leq t^{PM}_W=\frac{d-1}{d}$. This approach to construction of local models has been further generalized  in \cite{Almeida2007,Wiseman2007} where PM-locality of isotropic states $\rho_{iso}(t)$ was shown for
\begin{equation}\label{eq:robustnessPM}
    t\leq t^{PM}_{iso}=\frac{\sum_{k=2}^d 1/k}{d} \approx \frac{\log(d)}{d}\ .
\end{equation}
Furthermore, \cite{Almeida2007} also proved that all noisy pure states $\rho_\psi(t)$ are PM-local for  
\begin{equation}
    t\leq t^{PM}_{\psi}=\frac{t^{PM}_{iso}}{(1-t^{PM}_{iso})(d-1)+1} \approx \frac{\log(d)}{d^2}\ .
\end{equation}

\begin{prop}[New local hidden variable model for noisy pure qudit states]\label{app:ISO}
 Let $c=0.02$ be the constant appearing in Result \ref{res:projSIM}. Then
 \begin{itemize}
     \item[(i)] States $\rho_{iso}(t)$ are POVM-local for $t\leq c\ t^{PM}_{iso}\approx \frac{c \log(d)}{d}$.
     \item[(ii)] For any pure state $\ket{\psi}$ on $\C^d \ot \C^d$ states $\rho_{\psi}(t)$ are POVM-local for $t\leq  \frac{c\ t^{PM}_{iso}}{(1-c\ t^{PM}_{iso})(d-1)+1} \approx \frac{c \log(d)}{d^2}$.
 \end{itemize}
\end{prop}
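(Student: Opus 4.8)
The plan is to upgrade the known projective-measurement (PM) local models for $\rho_{iso}$ to full POVM-local models by converting arbitrary POVMs into projective ones via Result \ref{res:projSIM}, at the cost of a single factor of the visibility constant $c$. The two ingredients I would combine are: (a) the self-duality identity relating local depolarization of the state to depolarization of the measurement, $\mathrm{p}(i|\mathrm{\Phi}_t(\rho),\M)=\mathrm{p}(i|\rho,\mathrm{\Phi}_t(\M))$, and (b) the observation that depolarizing one tensor factor of the isotropic state merely rescales its visibility, $(\mathrm{id}\ot\mathrm{\Phi}_c)(\rho_{iso}(t'))=\rho_{iso}(c\,t')$, which follows from a direct computation using $\mathrm{\Phi}_c(\ketbra{i}{j})=c\,\ketbra{i}{j}+(1-c)\tfrac{\delta_{ij}}{d}\I_d$. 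Crucially, to pay only one factor of $c$ (rather than the $c^2$ one would incur by depolarizing both sides) I would use that the cited models for $\rho_{iso}(t')$ at $t'\le t^{PM}_{iso}$ are in fact \emph{one-sided} (local-hidden-state / steering) models: on one party the response is given by the Born rule applied to a hidden state $\sigma_\lambda$ and hence already works for arbitrary POVMs, while only the other party is restricted to projective measurements.

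For part (i), fix $t=c\,t'$ with $t'\le t^{PM}_{iso}$ and let $\M,\N$ be arbitrary POVMs on the two qudits. Writing $\rho_{iso}(t)=(\mathrm{id}\ot\mathrm{\Phi}_c)(\rho_{iso}(t'))$ and using self-duality gives
\begin{equation}
\tr\!\left(\rho_{iso}(t)\,M_a\ot N_b\right)=\tr\!\left(\rho_{iso}(t')\,M_a\ot \mathrm{\Phi}_c(N_b)\right).
\end{equation}
By Result \ref{res:projSIM} the noisy POVM $\mathrm{\Phi}_c(\N)$ is projectively simulable, so there are weights $r_\gamma$, projective measurements $\P^{(\gamma)}$ and post-processings $q^{(\gamma)}_{b|b'}$ with $\mathrm{\Phi}_c(N_b)=\sum_\gamma r_\gamma\sum_{b'}q^{(\gamma)}_{b|b'}P^{(\gamma)}_{b'}$. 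Substituting and then applying the one-sided local model of $\rho_{iso}(t')$ to each projective measurement $\P^{(\gamma)}$ on the second party (keeping the first party as the POVM side handled by the Born rule on $\sigma_\lambda$) yields
\begin{equation}
\tr\!\left(\rho_{iso}(t)\,M_a\ot N_b\right)=\sum_{\lambda,\gamma}p(\lambda)\,r_\gamma\,\tr(\sigma_\lambda M_a)\,\xi_B\!\left(b\mid\N,(\lambda,\gamma)\right),
\end{equation}
where $\xi_B(b\mid\N,(\lambda,\gamma))\coloneqq\sum_{b'}q^{(\gamma)}_{b|b'}\,\xi^{PM}_B(b'\mid\P^{(\gamma)},\lambda)$ is a valid local response. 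This is an explicit LHV model with hidden variable $(\lambda,\gamma)$, proving POVM-locality of $\rho_{iso}(t)$ for $t\le c\,t^{PM}_{iso}$.

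For part (ii) I would reduce to (i) by reusing the local-filtering argument of \cite{Almeida2007} verbatim, with their PM-local isotropic model replaced by the POVM-local model just constructed. Any noisy pure state $\rho_\psi(t)$ is obtained from an isotropic state by a local filter (a Kraus operator $A$ built from the Schmidt coefficients of $\ket{\psi}$) acting on one side; since POVM-locality is preserved under local operations — measuring $\N$ on the filtered side equals measuring the completed, subnormalized POVM $\{A^\dagger N_b A\}$ on the unfiltered side, which the isotropic model reproduces conditioned on filter success — the POVM-local model transfers, and the visibility transforms exactly as in \cite{Almeida2007} but with $t^{PM}_{iso}$ replaced by $c\,t^{PM}_{iso}$, giving the stated threshold $t\le \frac{c\,t^{PM}_{iso}}{(1-c\,t^{PM}_{iso})(d-1)+1}$.

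The main obstacle I anticipate is justifying step (b)'s one-sidedness, i.e. that the known isotropic models are genuinely local-hidden-state (steering) models in which arbitrary POVMs are already permitted on one party. This is precisely what allows the argument to spend only a single factor of $c$; a naive symmetric treatment depolarizing both parties would yield the strictly weaker threshold $c^2\,t^{PM}_{iso}$. I would therefore take care to cite (or re-derive) the steering version of the Almeida--Wiseman construction, in which the hidden variable is a Haar-random pure state and one party's outcomes are generated by the Born rule, so that the projective restriction is incurred on only one side, and to verify that the assembled object is a genuine LHV (not merely LHS) model as required by \eqref{eq:locality}.
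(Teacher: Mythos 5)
Your proposal is correct and follows essentially the same route as the paper's proof: both parts hinge on the same three ingredients, namely (a) the one-sidedness of the Almeida--Wiseman isotropic model (Alice's response is the Born rule $\tr(M_a^T\ketbra{\lambda}{\lambda})$ on a hidden pure state, so only Bob is restricted to projective measurements), (b) transferring the depolarization $\mathrm{\Phi}_c$ between Bob's measurement and his share of the state via self-duality together with Result \ref{res:projSIM} and the identity $(\mathrm{id}\ot\mathrm{\Phi}_c)(\rho_{iso}(t'))=\rho_{iso}(ct')$, and (c) reusing the Nielsen-protocol-based conversion of \cite{Almeida2007} with $\tau=c\,t^{PM}_{iso}$ for part (ii). The one point you flag as an obstacle (absorbing the $\N$-dependent decomposition weights $r_\gamma$ into Bob's response function so the object is a genuine LHV model) is handled the same implicit way in the paper and poses no difficulty.
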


\noindent
Prior to this work there existed a gap in the range of visibilities for which isotropic states admitted local models for projective and generalized measurements.  Previously known bounds for  general POVMs were much weaker \cite{Almeida2007}  and implied POVM-locality for $t^{\POVM}_{iso,old}=\Theta(1/d)$ and $t^{\POVM}_{\psi,old}=\Theta(1/d^2)$ for isotropic and general noisy states respectively.  

Our results offer asymptotic  improvements in the range of visibilities for which  noisy two qudit states are local. The proof of Proposition \ref{app:ISO} follows from Result \ref{res:projSIM} and a known technique of moving noise $\mathrm{\Phi}_c$ form POVMs to a state in order to construct new hidden variable models for POVMs \cite{Bowels2015,Oszmaniec17,Hirsch2017betterlocalhidden} (the details are presented in Appendix \ref{app:Applications}). The same technique can be applied to Werner states $\rho_W(t)$, showing that they are POVM-local for   $t \leq c\ t^{PM}_W = \Theta(1)$. This improves over the Barret model from \cite{Barret2002} that proved POVM locality for $t\leq t^{POVM}_{W,old}\approx\frac{3}{e d}$. However, in this case \cite{Nguyen2020}  already provided an improved model for POVMs, proving that $\rho_W (t)$ is local for $t\leq 1/e$. We note however that our construction of the model, after one accepts Result \ref{res:projSIM}, is qualitatively much simpler than the one given therein. 

\subsection{Incompatibility robustness of POVMs}

Measurement incompatibility, one of the defining  features of quantum theory, states that certain POVMs cannot be simultaneously measured. Incompatibility underpins many nonclassical aspects of quantum physics, such as uncertainty relations or nonlocality, and has applications in numerous quantum protocols and subroutines (see e.g \cite{RevInco} for a recent review). 

For projective measurements $\P,\P'$ incompatibility is equivalent to non-commuting of measurement operators $P_i, P'_j$. However,  a collection of general POVMs $\{\M^{(x)}\}_{x\in X}$ can be jointly measurable in the sense that there exists a parent POVM $\mathbf{G}=(G_\lambda)_{\lambda\in \Lambda}$ that can simulate each $\M^{(x)}$, i.e. for every $x\in X$ we have $\M^{(x)}=\mathcal{Q}^{(x)}(\mathbf{G})$, for some classical post-processing operations $\mathcal{Q}^{(x)}$. Generally speaking, every collection of POVMs becomes jointly measurable once sufficient amount of noise is added to the measurements in question, with the paradigmatic example being noisy Pauli measurements: $M^{X,\eta_x}_\pm=\frac{1}{2}(\I\pm \eta_x X)$, $M^{Y,\eta_Y}_\pm=\frac{1}{2}(\I\pm \eta_Y Y)$ and $M^{Z,\eta_Z}_\pm=\frac{1}{2}(\I\pm \eta_Z Z)$, that are jointly measurable if an only if  $\eta^2_X+\eta^2_Y+\eta^2_Y\leq 1$ \cite{Teiko08}. There has been a significant interest in recent years to quantify noise tolerance for incompatibility for pairs of measurements \cite{Farkas19}, mutually unbiased bases \cite{Designolle2019}, or collections of measurements exhibiting symmetries \cite{NguyenJM2020} (including Majorana-fermion observables \cite{MCO2024}). 

In \cite{Almeida2007,Wiseman2007} it was proven that for $t\leq t^{PM}_{iso}$ (where $t^{PM}_{iso}$ is given by Eq. \eqref{eq:robustnessPM}) all measurements of the form $\mathrm{\Phi}_t(\P)$, where $\P\in\PP(\C^d)$, are jointly measurable. Our Result \ref{res:projSIM}  allows to straightforwardly generalize this result to noisy POVMs.

\begin{prop}[Improved compatibility region for noisy POVMs]
    Let $c=0.02$ be the constant appearing in Result \ref{res:projSIM}. Furthermore assume
\begin{equation}\label{eq:compbound}
    t\leq c\ t^{PM}_{iso}  \approx \frac{c \log(d)}{d}\ .
\end{equation}    
Then all POVMs of the form $\mathrm{\Phi}_t(\M)$ are jointly-measurable, where $\M$ ranges over all POVMs in $\C^d$.
\end{prop}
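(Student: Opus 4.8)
The plan is to split the depolarizing noise multiplicatively and transfer joint measurability from noisy \emph{projective} measurements to noisy \emph{general} POVMs using Result~\ref{res:projSIM}. The starting point is the semigroup law $\mathrm{\Phi}_s\circ\mathrm{\Phi}_{s'}=\mathrm{\Phi}_{ss'}$, which is immediate from unitality of $\mathrm{\Phi}_s$ (so $\mathrm{\Phi}_s(\I_d)=\I_d$) together with linearity. Given the hypothesis \eqref{eq:compbound}, I would set $s\coloneqq t/c\leq t^{PM}_{iso}$ and factor the noise as $\mathrm{\Phi}_t(\M)=\mathrm{\Phi}_{sc}(\M)=\mathrm{\Phi}_s\big(\mathrm{\Phi}_c(\M)\big)$, valid for every POVM $\M$ since $s\in[0,1]$.

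First I would apply Result~\ref{res:projSIM} to the inner factor, which gives $\mathrm{\Phi}_c(\M)\in\SP(d)$, i.e. a representation $\mathrm{\Phi}_c(\M)=\sum_\alpha p_\alpha\,\mathcal{Q}^{(\alpha)}(\P^{(\alpha)})$ with projective measurements $\P^{(\alpha)}\in\PP(\C^d)$, classical post-processings $\mathcal{Q}^{(\alpha)}$, and a probability distribution $\{p_\alpha\}$. The key algebraic observation is that the depolarizing channel acts linearly at the level of effects, so it commutes with both classical post-processing and convex mixing of POVMs: one checks directly that $\mathrm{\Phi}_s(\mathcal{Q}(\P))=\mathcal{Q}(\mathrm{\Phi}_s(\P))$ and $\mathrm{\Phi}_s(\sum_\alpha p_\alpha \N^{(\alpha)})=\sum_\alpha p_\alpha\,\mathrm{\Phi}_s(\N^{(\alpha)})$. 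Pushing $\mathrm{\Phi}_s$ through the representation above then yields $\mathrm{\Phi}_t(\M)=\sum_\alpha p_\alpha\,\mathcal{Q}^{(\alpha)}\big(\mathrm{\Phi}_s(\P^{(\alpha)})\big)$, so that the residual noise $\mathrm{\Phi}_s$ now acts only on genuinely projective measurements.

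At this stage I would invoke the result of \cite{Almeida2007,Wiseman2007}: for $s\leq t^{PM}_{iso}$ the whole family $\{\mathrm{\Phi}_s(\P):\P\in\PP(\C^d)\}$ is jointly measurable, hence admits a single parent POVM $\mathbf{G}$ (depending only on $s$ and $d$) such that each $\mathrm{\Phi}_s(\P^{(\alpha)})=\mathcal{R}^{(\alpha)}(\mathbf{G})$ for suitable post-processings $\mathcal{R}^{(\alpha)}$. Substituting and composing gives $\mathrm{\Phi}_t(\M)=\sum_\alpha p_\alpha\,(\mathcal{Q}^{(\alpha)}\circ\mathcal{R}^{(\alpha)})(\mathbf{G})$. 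Since a convex mixture of post-processings of a fixed POVM is itself a post-processing of that POVM (the mixed array of conditional probabilities is still column-stochastic), $\mathrm{\Phi}_t(\M)$ is a post-processing of $\mathbf{G}$. Crucially $\mathbf{G}$ is the same for every $\M$, so the single parent $\mathbf{G}$ simulates $\mathrm{\Phi}_t(\M)$ for all $\M\in\POVM(\C^d)$ simultaneously, which is exactly the asserted joint measurability.

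The conceptual content is almost entirely borrowed: Result~\ref{res:projSIM} reduces arbitrary noisy POVMs to noisy projective ones, and \cite{Almeida2007,Wiseman2007} already settles the projective case. Consequently I do not expect a genuine obstacle, and the main work is organizational. The one point deserving care is the bookkeeping in the final convex combination $\sum_\alpha p_\alpha\,(\mathcal{Q}^{(\alpha)}\circ\mathcal{R}^{(\alpha)})(\mathbf{G})$: all summands must be regarded as post-processings onto the common outcome set of $\M$, and one must confirm that the commutation identities hold as operator identities on effects (not merely for Born statistics). Both are routine, so the proof is a short assembly of the two cited ingredients via the noise factorization $\mathrm{\Phi}_t=\mathrm{\Phi}_s\circ\mathrm{\Phi}_c$.
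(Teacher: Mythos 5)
Your proposal is correct and follows essentially the same route as the paper's own proof: factor the noise as $\mathrm{\Phi}_t=\mathrm{\Phi}_{t/c}\circ\mathrm{\Phi}_c$ with $t/c\leq t^{PM}_{iso}$, use Result \ref{res:projSIM} to place $\mathrm{\Phi}_c(\M)$ in $\SP(d)$, and invoke the joint measurability of noisy projective measurements from \cite{Almeida2007,Wiseman2007}. The paper compresses into the word ``clearly'' exactly the bookkeeping you spell out (pushing $\mathrm{\Phi}_s$ through convex combinations and post-processings, and noting that a mixture of post-processings of the common parent $\mathbf{G}$ is again a post-processing of $\mathbf{G}$), so your writeup is simply a more explicit version of the same argument.
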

\begin{proof}
Clearly, noisy versions $\mathrm{\Phi}_{\tau}(\N)$ of all projectively-simulable POVMs $\N\in\SP(\C^d)$ are jointly measurable for $\tau\leq t^{PM}_{iso}$. Since $\mathrm{\Phi}_c(\M)\in \SP(\C^d)$ for all $\M\in\POVM(\C^d)$ we get that all measurements of the form $(\mathrm{\Phi}_\tau\circ\mathrm{\Phi}_c)(\M)=\mathrm{\Phi}_{c\tau}(\M)$ are jointly measurable for $\tau\leq t^{PM}_{iso}$.
\end{proof}

The inequality Eq.\eqref{eq:compbound} is asymptotically tight (in terms of dependence of the critical visibility on $d)$ -- this is because \cite{Wiseman2007} proved that noisy projective measurements $\mathrm{\Phi}_t(\P)$ are incompatible for $t>t^{PM}_{iso}$.  The occurrence of the same bound $t^{PM}_{iso}$ for the critical visibility for this problem and local models for projective measurements for isotropic states is due to a one-to-one relation between incompatibility of noisy  measurements on $\C^d$ and so-called steering of isotropic states \footnote{Steering is a form of quantum correlation in bipartite systems weaker than non-locality. It assumes a partial trust in measurements performed in one subsystem participating in the Bell scenario.}. This is an example of a more general relation between steering and incompatibility established in \cite{SteeringUola2015}. Recently it was proven  \cite{stearing24one,stearing24two} that for two-qubit isotropic states POVMs are as powerful as projective measurements in revealing steering and that the critical visibility for this characteristic of a state equals $t_{iso,steer}=1/2$ (which also proves that critical visibility for joint measurability of all qubit measurements equals $1/2$). This suggests that perhaps the constant $c$ in the inequality \eqref{eq:compbound} can be dropped and that incompatibility robustness of noisy POVMs matches that of projective measurements.

\subsection{Discussion and open problems}\label{sec:discussion}

In this work we gave a surprising structural result on the power of generalized measurements in quantum information theory. Namely, we have shown that measurements implementable with no ancillas or low dimensional-ancillas can offer similar performance to general POVMs in a variety of applications. We realized this by concrete simulation protocols that build on previous results on POVM simulability \cite{SMO2022}, newly-introduced dimension-deficient Naimark extension theorem (cf. Theorem \ref{th:dimentionDefitient}) and the use of the solution to the celebrated Kadison-Singer conjecture (cf. Theorem \ref{th:solutionKS}).

There is a number of interesting open problems that originate from our work. First, we expect that it is possible to extend our simulation techniques to other relevant objects: quantum channels, instruments and combs. Second, it would be interesting to explore whether the solution to the Kadison-Singer problem (Theorem \ref{th:solutionKS} and its generalizations, see \cite{bownik-casazza-marcus-speegle, ravichandran-leake, branden, bownik, xu-xu-zhu}) has other applications in quantum information science.  Third, our protocol is not constructive in the sense of not providing a circuit description of unitaries realizing sub-POVMs $\N^{(\beta)}$ that simulate a target measurement $\M$ with (constant) postselection probability. It is therefore natural to investigate the possibility of turning our method into an algorithm, at least for some restricted classes of quantum measurements (or circuits realizing them). As mentioned in Section \ref{sec:circKNITTING}, a positive result in that direction can be potentially useful as a new circuit knitting strategy. Another interesting problem would be to identify, in every dimension $d$, POVMs that are hardest to simulate by restricted classes of  measurements, and to compute the optimal value of  constants $c$ and $q$ for which the simulation is possible in any finite $d$ (we expect that actual optimal values of these constants are much higher than the ones given in Results \ref{res:singleANC} and \ref{res:projSIM}). Finally, it would be  desirable to understand the computational complexity of deciding whether a given POVM can be approximated by a convex combination of projective measurements, following earlier works concerning separable states \cite{Gharibian2010} and mixed-unitary channels \cite{Watrous2020}.

\tikzstyle{block} = [rectangle, draw, fill=blue!20, rounded corners, minimum width=10em, minimum height=2em, inner sep=4 pt, outer sep=3 pt]

\tikzstyle{resultblock} = [rectangle, draw, fill=green!30, rounded corners, minimum width=10em, minimum height=2em, inner sep=4 pt, outer sep=3 pt]

\tikzstyle{dashedblock} = [rectangle, draw, dashed,fill=red!20, 
     text centered, rounded corners, minimum width=10em, minimum height=2em, inner sep=4 pt, outer sep=3 pt]
\tikzstyle{line} = [draw, -{Latex[length=2mm]}]

\tikzstyle{dashline} = [draw, -{Latex[length=2mm]}]

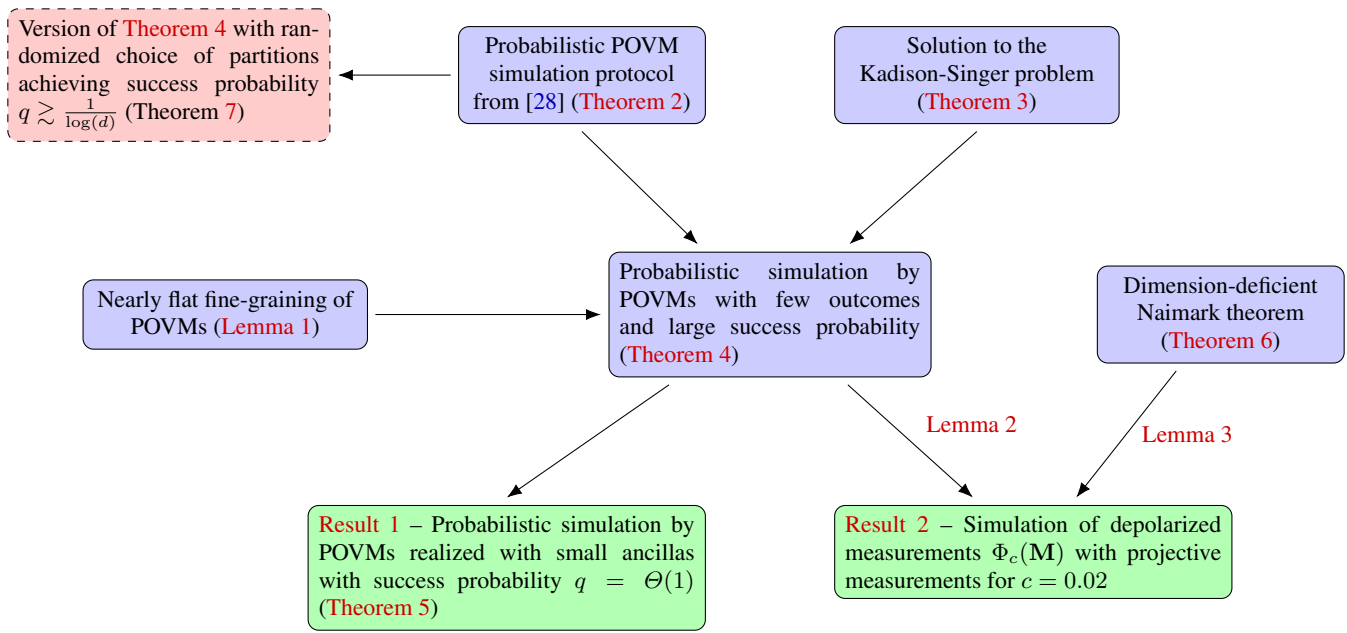
\begin{figure*}
    \centering
    \begin{tikzpicture}[node distance = 1cm, auto]
    
    \node [block] (Theorem4) {\begin{minipage}{4cm}
        \noindent   \justifying Probabilistic simulation by POVMs with few outcomes and large success probability (\cref{th:PostSimulation})
    \end{minipage}};

    \node [block, left=3cm of Theorem4] (nearlyFlat) {\begin{minipage}{3.5cm}
        Nearly flat fine-graining of POVMs (\cref{lem:flatPOVM})
    \end{minipage}};
    
    \node [block, above left=1.5cm and 3cm of Theorem4.north east] (simProtocol22) {\begin{minipage}{3cm}
        Probabilistic POVM simulation protocol from \cite{SMO2022} (\cref{th:OldProtocol})
    \end{minipage}};

    \node [block, above right=1.5cm and 3cm of Theorem4.north west] (KSProblem) {\begin{minipage}{3.5cm}
        Solution to the Kadison-Singer problem (\cref{th:solutionKS})
    \end{minipage}};
    
    \node [resultblock, below left=1.5cm and 3cm of Theorem4.south east] (simSmallAncillas) {\begin{minipage}{5cm}
        \noindent   \justifying \cref{res:singleANC} -- Probabilistic simulation by POVMs realized with small ancillas with success probability $q = \Theta(1)$ (\cref{th:ancillas})
    \end{minipage}};
    
    \node [resultblock, below right=1.5cm and 3cm of Theorem4.south west] (simPhi) {\begin{minipage}{5cm}
        \noindent   \justifying \cref{res:projSIM} -- Simulation of depolarized measurements $\mathrm{\Phi}_c(\M)$ with projective measurements for $c=0.02$
    \end{minipage}};
    
    \node [block, right=2cm of Theorem4] (deficientNaimark) {\begin{minipage}{3cm}
        Dimension-deficient Naimark theorem (\cref{th:dimentionDefitient})
    \end{minipage}};
    
    \node [dashedblock, left = 1.5cm of simProtocol22] (chernoff) {\begin{minipage}{4cm}
        \noindent   \justifying Version of \cref{th:PostSimulation} with randomized choice of partitions achieving success probability $q \gtrsim \frac{1}{\log (d)}$ (Theorem \ref{th:PostSimulation-randomized})
    \end{minipage}};
      
    
    \path [line] (nearlyFlat) -- (Theorem4);
    
    \path [line] (simProtocol22.south) -- (Theorem4);
    
    \path [line] (KSProblem.south) -- (Theorem4);
    
    
    \path [line] (Theorem4) -- (simSmallAncillas.north);
    
    \path [line] (Theorem4) -- (simPhi) node[midway, above right=0.01cm and 0.1cm] {\cref{lem:nearlyPROJpost}};

    \path [line] (deficientNaimark) -- (simPhi) node[midway,right=0.1cm] {\cref{lem:simnearlyPROJ}};

    \path [dashline] (simProtocol22) -- (chernoff);


\end{tikzpicture}
    
    \caption{Structure of the main results of the paper.  Some of the arrows are labelled by auxiliary lemmas used in the proof of the respective result. We start with a POVM to be simulated $\M \in \POVM(\C^d)$. First, classical post-processing is used to replace $\M$ with a \emph{fine-grained} POVM $\M'$ whose outcomes have nearly equal magnitude (Lemma \ref{lem:flatPOVM}). Our central intermediate result, Theorem \ref{th:PostSimulation}, states that such POVMs can be simulated with high success probability using a convex combination of POVMs with only $\Theta(d)$ outcomes. The simulation relies on an explicit probabilistic simulation protocol from \cite{SMO2022} (Theorem \ref{th:OldProtocol}) and the solution to the Kadison-Singer problem (Theorem \ref{th:solutionKS}). From this we derive our first main result, Result \ref{res:singleANC} -- the POVM $\M$ can be simulated with high probability by projective measurements using only a low-dimensional ancilla. Our second result, Result \ref{res:projSIM}, states that $\mathrm{\Phi}_c(\M)$, a noisy version of $\M$ with constant noise parameter, can be simulated by projective measurements that do not require any ancilla. To prove this, we first show that the fine-grained POVM $\M'$ is simulable by \emph{nearly projective} measurements, as stated in Lemma \ref{lem:nearlyPROJpost}. Noisy versions of such measurements are then shown to be easily simulable by projective measurements (Lemma \ref{lem:simnearlyPROJ}), which requires an additional result, the dimension-deficient Naimark theorem (Theorem \ref{th:dimentionDefitient}). An additional result (Theorem \ref{th:PostSimulation-randomized}) shows that one can obtain simulation success probability $\Theta(1/\log (d))$ with the use of random partitions (which can be efficiently generated).}
    \label{fig:outline}
\end{figure*}

\section{Overview of the proof of the main results}\label{sec:overview}

In this part we present the outline of the proof of our main findings  -- Results \ref{res:singleANC} and \ref{res:projSIM}. The proof relies on explicit simulation protocol and consists of several steps and simplifications -- see Fig. \ref{fig:outline} for a schematic presentation of the argument.  Along the way we introduce additional Lemmas whose proofs are given in latter parts of the article. The starting point is a target POVM $\M\in\POVM(\C^d)$ about which we do not assume anything except having a finite number of outcomes (our results easily extend to general POVMs with continuous number of outcomes, cf. Remark \ref{rem:OutcomeNumber}). 




\emph{Step 1 -- Special form of a POVM.} We first use post-processing to reduce the structure of the target POVM in such a manner that  the POVM to-be simulated has rank-one effects of \emph{nearly equal magnitude}. It is well known that every POVM can be obtained by coarse-graining of a POVM with rank $1$ effects. However, the fact that magnitude of the effects of fine grained POVM can be made (nearly) uniform is to our knowledge new.

\begin{lem}[Nearly flat fine-graining of POVMs]\label{lem:flatPOVM} \  For every $\delta\in(0,1)$ and for every $\M\in\POVM(\C^d)$ there exist $\epsilon_\ast >0$ such that for all $\epsilon\in(0,\epsilon_\ast)$ there  exists a POVM $\M'$ and a stochastic map $\Q$ such that $\M=\Q(\M')$ and $M'_i=\alpha_i \ket{\psi_i}\bra{\psi_i}$, $\frac{\max_i \alpha_i}{\min_i \alpha_i} \leq 1 + \delta$, and furthermore $\max_i \alpha_i \leq \epsilon$.  
\end{lem}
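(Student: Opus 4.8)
The plan is to produce $\M'$ in two stages: first reduce to rank-one effects by the standard spectral fine-graining, and then subdivide each rank-one effect into many pieces of nearly identical and uniformly small weight. First I would spectrally decompose each effect as $M_i = \sum_j \lambda_{ij}\ketbra{\psi_{ij}}{\psi_{ij}}$ with $\lambda_{ij}>0$ and $\{\ket{\psi_{ij}}\}_j$ orthonormal. The collection $\{\lambda_{ij}\ketbra{\psi_{ij}}{\psi_{ij}}\}_{i,j}$ is then a rank-one POVM, since its effects sum to $\sum_i M_i = \I_d$, and the deterministic relabeling $(i,j)\mapsto i$ (a special stochastic map) recovers $\M$. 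This much is routine; the new content is making the weights uniform.

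Next I would fix a target weight $w>0$ and replace each rank-one effect $\lambda_{ij}\ketbra{\psi_{ij}}{\psi_{ij}}$ by $k_{ij}:=\ceil{\lambda_{ij}/w}$ identical copies, each equal to $(\lambda_{ij}/k_{ij})\ketbra{\psi_{ij}}{\psi_{ij}}$. Because the copies sum back to the original effect, the resulting family $\M'$ is again a rank-one POVM coarse-graining to $\M$, with the relabeling now forgetting both $j$ and the copy index. Each new weight $\alpha=\lambda_{ij}/k_{ij}$ satisfies $w/(1+w/\lambda_{ij}) < \alpha \leq w$, since $\lambda_{ij}/w \leq k_{ij} < \lambda_{ij}/w+1$. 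Writing $\lambda_{\min}:=\min_{i,j}\lambda_{ij}>0$, every weight therefore lies in $\bigl(w/(1+w/\lambda_{\min}),\,w\bigr]$, which forces $\max_i\alpha_i/\min_i\alpha_i \leq 1+w/\lambda_{\min}$ and $\max_i\alpha_i \leq w$. The parameters are then chosen by setting $\epsilon_\ast := \delta\,\lambda_{\min}$ (note this depends on both $\delta$ and $\M$, as permitted) and, for any $\epsilon\in(0,\epsilon_\ast)$, taking $w:=\epsilon$: then $w/\lambda_{\min}<\delta$ gives the ratio bound $\max_i\alpha_i/\min_i\alpha_i\leq 1+\delta$, while $\max_i\alpha_i\leq w=\epsilon$ gives the magnitude bound.

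The main (and essentially only) obstacle is coordinating the subdivision across effects whose original magnitudes differ wildly: splitting each effect independently into a fixed number of equal parts cannot make the weights comparable \emph{between} different effects. Using a common target weight $w$ and rounding the number of pieces up to $\ceil{\lambda_{ij}/w}$ resolves this, at the cost of a rounding discrepancy of relative size $w/\lambda_{ij}\leq w/\lambda_{\min}$; choosing $w$ small relative to $\delta\lambda_{\min}$ makes this discrepancy strictly below $\delta$, which is exactly what the claimed bounds demand. The only point requiring mild care is that $\lambda_{\min}$ be a genuine positive minimum, which holds because $\M$ has finitely many effects and one discards any zero eigenvalues before taking the minimum.
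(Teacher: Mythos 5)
Your proof is correct, and it reaches the statement by a genuinely different route than the paper in the one step that carries the real content. The first stage (spectral decomposition into rank-one effects, recovered by a deterministic relabelling, which is indeed a valid stochastic map) is identical to the paper's. For the subdivision stage, however, the paper proves a separate number-theoretic lemma: for each eigenvalue $x_i$ it uses density of the fractional parts $\{k x_i\}$ (for irrational $x_i$) to find $k_i$ with $\{k_i x_i\}$ close to $1$, sets $k = k_1 \cdots k_N$, and splits every $x_i$ into parts of size exactly $1/k$ plus leftover parts of common size $\alpha_i/k \in [(1-\Delta)/k,\, 1/k]$, yielding the ratio bound with $\delta = \Delta/(1-\Delta)$. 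You instead fix a single target weight $w$ and split each $\lambda_{ij}$ into $\lceil \lambda_{ij}/w \rceil$ equal pieces, so every resulting weight lies in $\bigl(w/(1+w/\lambda_{\min}),\, w\bigr]$, and the ratio bound follows from choosing $w < \delta\lambda_{\min}$. Your argument is more elementary --- no equidistribution facts, no rational/irrational case split --- and more quantitative: your threshold $\epsilon_\ast = \delta\lambda_{\min}$ is explicit in the data of $\M$, whereas the paper's $\epsilon_\ast = 1/k$ involves denominators $k_i$ whose size the density argument does not control. What the paper's construction buys in exchange is the marginally stronger structural fact that all but a few parts are \emph{exactly} equal (to $1/k$), but this extra rigidity is never used downstream, so nothing is lost in your version. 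Your closing remark that $\lambda_{\min}>0$ requires finitely many outcomes and discarding zero eigenvalues is exactly the right caveat; the paper's proof makes the same implicit assumption.
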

\noindent The proof of the Lemma is given in Appendix \ref{app:aux}. In what follows the flatness parameter $\delta>0$ will play a relatively minor role and the right intuition is to think that we can set $\delta=0$. However, in order to keep our analysis elementary (that is, without invoking functional theoretic details) we have decided to keep nonzero $\delta$, and to phrase all intermediate auxiliary results with this parameter being present. At the end of the argument we take the limit $\delta \to 0$.

\emph{Step 2 -- Simulation with postselection via simpler POVMs.}\ \ We show that a POVM with ``nearly flat'' effects of rank 1 can be simulated with postselection by \emph{nearly projective} quantum measurements.

\begin{lem}[Simulation with postselection via nearly projective measurements]\label{lem:nearlyPROJpost} Let $\M'=\left(\alpha_1 \ket{\psi_1}\bra{\psi_1},\ldots,\alpha_n \ket{\psi_n}\bra{\psi_n}\right)\in \POVM(\C^d)$ be such that $\frac{\max_i \alpha_i}{\min_i \alpha_i} \leq 1 + \delta$, for $\delta\in(0,1]$. Then there exist POVMs $\N^{(\beta)}$ with outcome space $[n]\cup\lbrace\emptyset\rbrace$ and a probability distribution $\lbrace p_\beta \rbrace$ such that 
\begin{enumerate}
    \item[(i)] The convex combination $\L= \sum_{\beta} p_\beta \N^{(\beta)}$ simulates $\M'$ with postselection
    \begin{equation}\label{eq:nearlyPROJ}
       \sum_\beta p_\beta \N^{(\beta)} = \left(q \M', (1-q) \I_d \right)\ ,
    \end{equation}
    for $q\geq\frac{0.068}{1+\delta}$.
    \item[(ii)] Each POVM $\N^{(\beta)}$ is nearly projective in the sense that 
     \begin{equation}
      N^{(\beta)}_i = 
       \begin{cases}
        A_i \ket{\psi_i}\bra{\psi_i} & \text{if}\ i\in S_\beta  \\
        \I -  \sum_{j\in S_\beta} A_j \ket{\psi_j}\bra{\psi_j}& \text{if}\   i=\emptyset \\
        0 & \text{otherwise} \\ 
    \end{cases}, 
    \end{equation}
    where $A_i\geq \frac{0.47}{1+\delta}$  and $S_\beta \subset [n]$ satisfies $|S_\beta| \leq d/2$. 

\end{enumerate}
\end{lem}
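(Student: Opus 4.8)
The plan is to realize $\M'$ by partitioning its rank-one effects into blocks that are uniformly close to orthonormal systems, rescaling each block into a sub-normalized projective measurement, and dumping the leftover weight into the postselection outcome $\emptyset$; the device that produces such a partition is the Kadison--Singer theorem (Theorem \ref{th:solutionKS}). First I would reduce to the case of small weights. Writing $u_i=\sqrt{\alpha_i}\,\ket{\psi_i}$, the normalization $\sum_i M'_i=\I_d$ becomes the Parseval-type identity $\sum_i u_i u_i^\dagger=\I_d$ with $\|u_i\|^2=\alpha_i$; taking the trace gives $\sum_i\alpha_i=d$, so flatness forces $\alpha_{\min}:=\min_i\alpha_i\ge\frac{d}{n(1+\delta)}$ and $\alpha_{\max}:=\max_i\alpha_i\le\frac{(1+\delta)d}{n}$. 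By splitting every effect $\alpha_i\ket{\psi_i}\bra{\psi_i}$ into $N$ identical copies $\frac{\alpha_i}{N}\ket{\psi_i}\bra{\psi_i}$ (which leaves the flatness ratio unchanged and only drives $\alpha_{\max}$ to $0$) and coarse-graining the copies back by post-processing at the very end, I may assume $\alpha_{\max}\le\epsilon$ for any prescribed $\epsilon$; none of the claimed constants will depend on $\epsilon$.

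Next I apply Theorem \ref{th:solutionKS} to $\{u_i\}$ with a parameter $r$ fixed below: it yields a partition $[n]=\bigsqcup_{\beta} S_\beta$ with $\bigl\|\sum_{i\in S_\beta}\alpha_i\ket{\psi_i}\bra{\psi_i}\bigr\|\le\theta$, where $\theta=\bigl(\tfrac{1}{\sqrt r}+\sqrt{\alpha_{\max}}\bigr)^2$. Two consequences of this norm bound are decisive. First, for each block the rescaled operator $\frac1\theta\sum_{i\in S_\beta}\alpha_i\ket{\psi_i}\bra{\psi_i}\preceq\I_d$, so setting $A_i:=\alpha_i/\theta$ makes the effects $N^{(\beta)}_i=A_i\ket{\psi_i}\bra{\psi_i}$ together with $N^{(\beta)}_\emptyset=\I-\sum_{i\in S_\beta}A_i\ket{\psi_i}\bra{\psi_i}\succeq0$ a bona fide POVM, with $A_i\ge\alpha_{\min}/\theta$. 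Second, since $\sum_{i\in S_\beta}\alpha_i=\tr\sum_{i\in S_\beta}\alpha_i\ket{\psi_i}\bra{\psi_i}\le d\,\theta$, each block has at most $d\theta/\alpha_{\min}$ elements. The only remaining defect is that this crude size bound exceeds $d/2$; I repair it by subdividing any block larger than $d/2$ into pieces of size $\le d/2$. Because each piece is a subset of $S_\beta$ and the summands are positive, the frame-operator norm only decreases, so the bound $\|\cdot\|\le\theta$ and hence the validity of the rescaled POVM is preserved; I relabel the resulting finer partition again by $\{S_\beta\}_{\beta}$, now with $|S_\beta|\le d/2$ throughout.

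Finally I take $p_\beta=q\theta$ uniform over the (say $r'$) blocks and fix $q$ by normalization $\sum_\beta p_\beta=r'q\theta=1$. The postselection identity is then immediate from the fact that $\{S_\beta\}$ is a partition: a fixed outcome $i$ lies in a unique block, so $\sum_\beta p_\beta N^{(\beta)}_i=q\theta\cdot\frac{\alpha_i}{\theta}\ket{\psi_i}\bra{\psi_i}=q\,M'_i$, while summing the $\emptyset$-effects telescopes to $\sum_\beta p_\beta\,\I-q\sum_i\alpha_i\ket{\psi_i}\bra{\psi_i}=(1-q)\I_d$, which is exactly \eqref{eq:nearlyPROJ}. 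It remains to choose $r$ to meet both constants simultaneously. In the limit $\alpha_{\max}\to0$ with $r\alpha_{\max}\to\gamma$ one has $\theta\to\alpha_{\max}\mu$ with $\mu=(1+\gamma^{-1/2})^2$, so $A_i\ge\alpha_{\min}/\theta\ge\frac{1}{(1+\delta)\mu}$ and, counting the at most $\tfrac{2n}{d}+r$ pieces produced by the subdivision, $q\gtrsim\frac{1}{(2+\gamma)\mu}$. Requiring $\mu\le 1/0.47$ (which yields $A_i\ge 0.47/(1+\delta)$) and optimizing the feasible $\gamma$ gives $q\ge 0.068/(1+\delta)$; letting $\epsilon\to0$ and coarse-graining the copies completes the argument.

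The main obstacle is the very first structural step: guaranteeing a partition whose every block has uniformly bounded frame-operator norm (equivalently, consists of nearly orthonormal vectors). Finding the optimal such partition is an intractable combinatorial optimization — precisely the gap that prevented the protocol of \cite{SMO2022} from being analyzed for arbitrary POVMs — and it is exactly here that the Marcus--Spielman--Srivastava solution of Kadison--Singer (Theorem \ref{th:solutionKS}) is indispensable. A secondary, purely bookkeeping, difficulty is reconciling the soft operator-norm control supplied by Kadison--Singer with the hard cardinality cap $|S_\beta|\le d/2$: subdivision enforces the cap for free in operator norm but inflates the number of blocks, and one must check that this inflation, together with the Kadison--Singer slack factor $\mu$ and the flatness factors $(1+\delta)$, still leaves the success probability bounded below by a dimension-independent constant.
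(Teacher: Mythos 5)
Your proposal is correct and follows essentially the same route as the paper: a Kadison--Singer partition (Theorem \ref{th:solutionKS}) feeding the block-rescaling protocol of Theorem \ref{th:OldProtocol}, followed by subdividing oversized blocks to enforce $|S_\beta|\leq d/2$, with your $\lceil 2s_\beta/d\rceil$ piece count at $\gamma\approx 5$ reproducing exactly the bound of Lemma \ref{lem:improvedBound} with $C=5$, $\kappa=1/2$. The only cosmetic differences are that you rescale every block by the uniform bound $\theta$ with uniform $p_\beta$ instead of by the block norms $\lambda_\beta$ with $p_\beta\propto\lambda_\beta$ (a harmless, slightly lossier variant yielding the same constants), and that you make explicit, via copy-splitting and re-coarse-graining, the reduction to arbitrarily small $\alpha_{\max}$ that the paper handles by a footnote appealing to Lemma \ref{lem:flatPOVM}.
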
\noindent

The condition $|S_\beta|\leq d/2$ is necessary so that later on we will be able to employ Lemma \ref{lem:simnearlyPROJ}, stating that noisy versions of measurements $\N^{(\beta)}$ appearing in Lemma \ref{lem:nearlyPROJpost} are simulable by projective measurements. The proof of Lemma \ref{lem:nearlyPROJpost} is given in Section \ref{sec:Postselection} and follows from Theorem \ref{th:PostSimulation} given therein. 

Theorem \ref{th:PostSimulation} is our central result and utilizes the solution to the Kadison-Singer problem from \cite{marcus-spielman-srivastava}, which guarantees that the probabilistic POVM protocol developed in \cite{SMO2022} is capable (for a suitable choice of the partition $\mathcal{S}$ of the set $[n]$ of outcomes of a POVM) of simulating  measurements with nearly flat effects with a convex combination of measurements with number of outcomes $|S_\beta|+1=\Theta(d)$, while maintaining success probability $q=\Theta(1)$. Crucially, the measurements appearing in the simulation protocol have mostly effects of rank one, which allows them to be realized by a single auxiliary qubit and projective measurements. This directly underpins Result \ref{res:singleANC}, whose extended version is given and proven in Theorem \ref{th:ancillas} in Section \ref{sec:Postselection}.

\emph{Step 3 -- PM-simulability of noisy nearly projective measurements.}
It turns out that noisy versions of measurements $\N^{(\beta)}$ that appeared in \emph{Step 2} are quite easy to simulate by convex combinations of projective measurements.
\begin{lem}[Noisy nearly projective measurements are easy to simulate by projective measurements]\label{lem:simnearlyPROJ}
    Let $l\leq d/2$ and let $\N=(N_1,\ldots,N_{l+1})\in\POVM(\C^d)$ be a POVM of the form 
    \begin{equation}
         N_i = 
       \begin{cases}
        A_i \ket{\psi_i}\bra{\psi_i} & \text{if}\ i\in [l]  \\
        \I_d -  \sum_{j=1}^l A_j \ket{\psi_j}\bra{\psi_j}& \text{if}\   i=l+1 \\
        \end{cases}\ .
    \end{equation}
    Then $\mathrm{\Phi}_{t}(\N)\in\SP(d)$ for
    \begin{equation}
    t= \min_{i\in[l]} \frac{|W^\perp| A_i}{|W|(1-A_i)+|W^\perp|}\ ,
\end{equation}
where $W\coloneqq\mathrm{span}_{\C}\{\ket{\psi_i}\ |\ i\in[l]\}$ and $W^\perp$ is its orthogonal complement.
\end{lem}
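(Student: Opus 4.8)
The plan is to exhibit $\mathrm{\Phi}_t(\N)$ as a convex combination of two projectively simulable POVMs. Write $w\coloneqq|W|$ and $w'\coloneqq|W^\perp|=d-w$, let $P_W,P_{W^\perp}$ be the orthogonal projections onto $W,W^\perp$, and set $G\coloneqq\sum_{i=1}^l A_i\ket{\psi_i}\bra{\psi_i}$, the ``signal'' operator supported on $W$; since $N_{l+1}=\I_d-G\geq 0$ we have $0\leq G\leq\I_d$, so in particular $\|G\|\leq 1$. The decomposition I aim for is $\mathrm{\Phi}_t(\N)=t\,\bar{\P}+(1-t)\,\mathbf{S}$, where $\bar{\P}$ carries the rank-one directions $\ket{\psi_i}$ and $\mathbf{S}$ is a pure-noise POVM; the threshold value of $t$ will emerge from positivity of $\mathbf{S}$.

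The first and main step is to build $\bar{\P}$ using the dimension-deficient Naimark theorem (Theorem \ref{th:dimentionDefitient}). Because $\|G\|\leq 1$ and $\dim W^\perp=w'\geq d/2\geq l\geq\mathrm{rank}(\I_l-D^{1/2}\Gamma D^{1/2})$, with $\Gamma_{ij}=\bk{\psi_i}{\psi_j}$ and $D=\mathrm{diag}(A_i)$, there exist vectors $\ket{v_i}\in W^\perp$ making $\ket{\Psi_i}\coloneqq\sqrt{A_i}\ket{\psi_i}+\ket{v_i}$ orthonormal, with $\|v_i\|^2=1-A_i$. Completing $\{\ket{\Psi_i}\}_{i=1}^l$ to an orthonormal basis of $\C^d$ yields a genuine projective measurement which, after relabelling its first $l$ outcomes to $[l]$ and all others to $l+1$, lies in $\SP(d)$ and reduces to $N_i$ on $W$ (since $P_W\ket{\Psi_i}\bra{\Psi_i}P_W=A_i\ket{\psi_i}\bra{\psi_i}$). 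To clean up the effects I symmetrize: averaging over the global sign flip $\ket{v_i}\mapsto-\ket{v_i}$ (which preserves orthonormality) cancels the $\ket{\psi_i}\bra{v_i}$ cross terms, and averaging over a Haar-random unitary on $W^\perp$ (which preserves the Gram matrix of the $\ket{v_i}$, hence produces valid dilations) replaces $\ket{v_i}\bra{v_i}$ by $\tfrac{1-A_i}{w'}P_{W^\perp}$. As $\SP(d)$ is convex and closed, the resulting averaged POVM $\bar{\P}$, with effects $\bar{\P}_i=A_i\ket{\psi_i}\bra{\psi_i}+\tfrac{1-A_i}{w'}P_{W^\perp}$ for $i\in[l]$, is still in $\SP(d)$; positivity of $\bar{\P}_{l+1}=\I_d-G-\tfrac{\sum_i(1-A_i)}{w'}P_{W^\perp}$ uses $\sum_i(1-A_i)\leq l\leq w'$.

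With $\bar{\P}$ in hand, define $\mathbf{S}\coloneqq\tfrac{1}{1-t}\big(\mathrm{\Phi}_t(\N)-t\bar{\P}\big)$. A direct computation shows that every effect of $\mathbf{S}$ is a block operator $a_iP_W+b_iP_{W^\perp}$; for $i\in[l]$ one finds $\mathbf{S}_i=\tfrac{A_i}{d}P_W+\big(\tfrac{A_i}{d}-\tfrac{t}{1-t}\tfrac{1-A_i}{w'}\big)P_{W^\perp}$, and $\mathbf{S}_{l+1}$ is manifestly a nonnegative combination of $P_W$ and $P_{W^\perp}$. Any POVM all of whose effects are nonnegative combinations of $P_W$ and $P_{W^\perp}$ is a classical post-processing of the projective measurement $(P_W,P_{W^\perp})$ and hence lies in $\SP(d)$; normalization $\sum_i\mathbf{S}_i=\I_d$ is inherited from $\sum_i\mathrm{\Phi}_t(\N)_i=\sum_i\bar{\P}_i=\I_d$. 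The only binding requirement is positivity, and the $P_{W^\perp}$-coefficient of $\mathbf{S}_i$ is nonnegative exactly when $t\leq\tfrac{w'A_i}{d-wA_i}$ (using $d-w'=w$), all other coefficients being automatically nonnegative. Choosing $t=\min_{i\in[l]}\tfrac{w'A_i}{d-wA_i}$ makes every effect of $\mathbf{S}$ positive, so $\mathbf{S}\in\SP(d)$ and therefore $\mathrm{\Phi}_t(\N)=t\bar{\P}+(1-t)\mathbf{S}\in\SP(d)$, as claimed.

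I expect the dilation-and-symmetrization step to be the crux: one must check that $W^\perp$ has enough room to host the orthonormal completion vectors $\ket{v_i}$ (this is precisely where the hypothesis $l\leq d/2$ is used), and that the sign-flip together with the Haar average can be carried out while keeping every intermediate object a bona fide projective measurement. This is exactly the content I would isolate as the dimension-deficient Naimark theorem, after which the matching of coefficients and the positivity analysis reduce to elementary linear algebra on the two blocks $W$ and $W^\perp$.
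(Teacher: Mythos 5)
Your proposal is correct and takes essentially the same route as the paper: your $\bar{\P}$ is exactly the POVM $\F$ of the dimension-deficient Naimark theorem (Theorem \ref{th:dimentionDefitient}), which the paper likewise obtains via a Naimark extension inside $\C^d$ (using $l\leq d/2$) followed by a twirl over unitaries $e^{i\varphi}\PP_W\oplus V$ (your sign-flip-plus-Haar symmetrization is an equivalent twirl, since $\mathbb{E}_{V\sim\mathrm{Haar}}V=0$ already kills the cross terms), and your residual POVM $\mathbf{S}$ coincides with the paper's block POVM $\mathbf{C}$ with effects $a_i\PP_W+b_i\PP_{W^\perp}$, with identical coefficients $a_i=A_i/d$, $b_i=A_i/d-\frac{t}{1-t}\frac{1-A_i}{|W^\perp|}$ and the same binding positivity constraint producing the threshold $t=\min_i |W^\perp|A_i/\left(|W|(1-A_i)+|W^\perp|\right)$. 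The only presentational difference is that you define $\mathbf{S}$ by subtraction and verify its positivity, whereas the paper posits $\mathbf{C}$ with unknown coefficients and solves the matching equations --- the same computation run in the opposite direction.
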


\noindent
This result relies on dimension-deficient Naimark dilation theorem (Theorem \ref{th:dimentionDefitient} in Section \ref{sec:defitientNAIMARK}) and simple but tedious algebraic manipulations. For this reason the complete proof is given in Appendix \ref{app:dilationTOwhitenoise}. By applying the result to POVMs $\N^{(\beta)}$ from Lemma \ref{lem:nearlyPROJpost} we obtain that their noisy versions are projectively simulable for $t_{NP}\geq\frac{0.3}{1+\delta}$.

\emph{Step 4 -- Incorporate post-processing to show that noisy versions of $\M'$ and $\M$ are PM-simulable.}

Let $\L$ be a POVM appearing in the formulation of Lemma \ref{lem:nearlyPROJpost}. By applying $\mathrm{\Phi}_{t_{NP}}$ to both sides of Eq.\eqref{eq:nearlyPROJ} and utilizing Lemma \ref{lem:simnearlyPROJ}  we get
\begin{equation}
    \mathrm{\Phi}_{t_{NP}}(\L)=\sum_{\beta} p_{\beta} \mathrm{\Phi}_{t_{NP}}(\N^{(\beta)}) \in \SP(d)\ .
\end{equation}
It is now straightforward to note that for all $t\in[0,1]$ $\mathrm{\Phi}_t(\L)$ simulates $\mathrm{\Phi}_t (\M')$ with success probability $q$ i.e. $\mathrm{\Phi}_t(\L)=(q\mathrm{\Phi}_t(\M'),(1-q)\I_d)$. Therefore, by applying to $\mathrm{\Phi}_{t_{NP}}(\L)$ the post processing $\mathcal{Q}'=\{q'_{i|j}\}$ such that $q'_{i|j}=\delta_{ij}$ for $i,j\in[n]$ and $q'_{i|j}=\tr(M'_i)/d$, for $j=\emptyset$ and $i\in[n]$, we get that 
\begin{equation}
  \mathrm{\Phi}_q\circ \mathrm{\Phi}_{t_{NP}} (\M') = \mathcal{Q}'( \mathrm{\Phi}_{t_{NP}}(\L))=\sum_\beta p_\beta \mathcal{Q}' \left(\mathrm{\Phi}_{t_{NP}}\left(\N^{(\beta)}\right)\right)\ ,   
\end{equation}
from which it follows that for  $c=q\cdot t_{NP}$ the measurement $\mathrm{\Phi}_{c}(\M')$ is projectively simulable, since $\mathcal{Q}' \left(\mathrm{\Phi}_{t_{NP}}\left(\N^{(\beta)}\right)\right)\in\SP(d)$. Finally, we note that for every stochastic map $\mathcal{Q}$, any $t\in[0,1]$ and any POVM $\M$ we have $\mathcal{Q}(\mathrm{\Phi}_t(\M))=\mathrm{\Phi}_t\left(\mathcal{Q}(\M))\right)$. Consequently, using $\M=\mathcal{Q}(\M')$ (where $\mathcal{Q}$ is the stochastic map appearing in Lemma \ref{lem:flatPOVM} of \emph{Step 1}), we get 
\begin{equation}
\mathrm{\Phi}_c(\M)=\mathcal{Q}\left(\mathrm{\Phi}_c(\M)\right)=    \sum_\beta p_\beta \mathcal{Q} \circ \mathcal{Q}' \left(\mathrm{\Phi}_{t_{NP}}\left(\N^{(\beta)}\right)\right) ,
\end{equation}
which  concludes the proof of Result \ref{res:projSIM} for $c=q\cdot t_{NP}\geq \frac{0.0204}{(1+\delta)^2}$,  since POVMs $\mathcal{Q} \circ \mathcal{Q}' \left(\mathrm{\Phi}_{t_{NP}}\left(\N^{(\beta)}\right)\right)$ are projectively simulable and $\delta$ can be chosen small enough to ensure $c\geq 0.02$.

\begin{rem}
In the above reasoning there is room for flexibility regarding tradeoffs between different parameters. Specifically, one can decide to simulate the POVM $\M'$ via POVMs with fewer outcomes. This generally results in smaller success probability of simulation $q$ and smaller lower bound on $A_i$. At the same time smaller group size $|S_\beta|$ implies smaller dimension $|W|$ for which Lemma \ref{lem:simnearlyPROJ} has to be applied, which can increase $t_{NP}$. In fact, the specific choice of the group size $|S_\beta|$ (or more precisely, the implicit parameter $C$ which controls it, cf. proof of Lemma \ref{lem:nearlyPROJpost}) was made so as to maximize the product $c=q\cdot t_{NP}$.
\end{rem}

\section{Probabilistic simulation by measurements requiring ancillas with limited dimension}\label{sec:Postselection}

In what follows we give the proof of Lemma \ref{lem:nearlyPROJpost}, which states that rank-one POVMs with nearly flat effects can be simulated by \emph{nearly projective} measurements with constant (dimension independent) success probability. Note that nearly projective measurements can be realized with only a single auxiliary qubit. We also provide generalizations of this result for higher dimensions of the ancilla. We will make use of the POVM simulation technique introduced in \cite{SMO2022}, which gives a recipe to probabilistically simulate any POVM with measurements having smaller number of outcomes.

\begin{thm}[Simulation protocol from \cite{SMO2022}]\label{th:OldProtocol}
   Let $\M=(M_1,\ldots,M_n)$ be an $n$-outcome POVM on $\C^d$. Let $\mathcal{S} = \lbrace S_\beta\rbrace_{\beta=1}^r $ be a partition of $[n]$ into disjoint subsets. For a fixed $\beta$ we set $\lambda_\beta= \| \sum_{i\in{S_\beta}} M_i\|$ and define a POVM $\N^{(\beta)}$ by 
   \begin{equation}\label{eq:oldSIMPLeeffects}
       N^{(\beta)}_i = 
       \begin{cases}
        \lambda_\beta^{-1} M_i & \text{if}\ i\in S_\beta  \\
        \I - \lambda_\beta^{-1} \sum_{i\in S_\beta} M_i& \text{if}\   i=\emptyset \\
        0 & \text{otherwise} \\
    \end{cases} \ 
   \end{equation}
 Set $p_\beta = \lambda_{\beta}/(\sum_{\beta=1}^r\lambda_{\beta})$. Then the POVM $\L=\sum_{\beta=1}^r p_\beta \N^{(\beta)}$ simulates the POVM $\M$ with success probability
 \begin{equation}\label{eq:explicitQsucc}
     q(\M, \mathcal{S})= \left(\sum_{\beta=1}^r \left\| \sum_{i\in{S_\beta}} M_i\right\| \right)^{-1}\ ,
 \end{equation}
 i.e. $\L=(q(\M, \mathcal{S})\M,(1-q(\M, \mathcal{S}))\I_d)$.
\end{thm}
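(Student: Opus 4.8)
The plan is to prove the claim essentially by direct computation, the only genuinely nontrivial input being that each $\N^{(\beta)}$ is a legitimate POVM; once this is established, the identity $\L=(q(\M,\mathcal{S})\M,(1-q(\M,\mathcal{S}))\I_d)$ follows from a cancellation between the mixing weights $p_\beta$ and the normalizing factors $\lambda_\beta^{-1}$.

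First I would check that for each fixed $\beta$ the tuple $\N^{(\beta)}$ is a valid POVM on the outcome set $S_\beta\cup\{\emptyset\}$. The effects $N^{(\beta)}_i=\lambda_\beta^{-1}M_i$ with $i\in S_\beta$ are manifestly non-negative (assuming $\lambda_\beta>0$; the degenerate case $\lambda_\beta=0$ forces $M_i=0$ for all $i\in S_\beta$ and such a block may be discarded from the partition). The only point requiring care is the failure effect $N^{(\beta)}_\emptyset=\I-\lambda_\beta^{-1}\sum_{i\in S_\beta}M_i$: since $\sum_{i\in S_\beta}M_i$ is positive semidefinite with largest eigenvalue exactly $\lambda_\beta=\|\sum_{i\in S_\beta}M_i\|$, we have $0\leq\lambda_\beta^{-1}\sum_{i\in S_\beta}M_i\leq\I$, so $N^{(\beta)}_\emptyset\geq 0$. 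Normalization is immediate, since the scaled effects together with the failure effect sum to $\I$ by construction. I would also note that $\{p_\beta\}$ is a probability distribution, because $p_\beta\geq 0$ and $\sum_{\beta=1}^r p_\beta=(\sum_{\beta'}\lambda_{\beta'})^{-1}\sum_{\beta}\lambda_\beta=1$; consequently the convex combination $\L=\sum_\beta p_\beta\N^{(\beta)}$ is automatically a valid POVM.

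The heart of the argument is then the computation of the effects of $\L$. Because $\mathcal{S}$ is a partition of $[n]$, each index $i\in[n]$ lies in a unique block, say $S_{\beta(i)}$, and $N^{(\beta)}_i=0$ for every $\beta\neq\beta(i)$. Hence
\begin{equation}
L_i=\sum_{\beta=1}^r p_\beta N^{(\beta)}_i=p_{\beta(i)}\lambda_{\beta(i)}^{-1}M_i=\frac{\lambda_{\beta(i)}}{\sum_{\beta'}\lambda_{\beta'}}\cdot\lambda_{\beta(i)}^{-1}M_i=\frac{M_i}{\sum_{\beta'}\lambda_{\beta'}}=q(\M,\mathcal{S})\,M_i\ ,
\end{equation}
where the weights $\lambda_{\beta(i)}$ cancel — this cancellation is precisely why the mixing probabilities are chosen proportional to $\lambda_\beta$. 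Finally, since $\L$ is a POVM its effects sum to $\I$, so the failure effect is $L_\emptyset=\I-\sum_{i=1}^n L_i=\I-q(\M,\mathcal{S})\sum_{i=1}^n M_i=(1-q(\M,\mathcal{S}))\I_d$, using $\sum_i M_i=\I_d$. This yields exactly $\L=(q(\M,\mathcal{S})\M,(1-q(\M,\mathcal{S}))\I_d)$, which is the asserted simulation with postselection probability $q(\M,\mathcal{S})$.

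I do not anticipate a serious obstacle: the statement is an algebraic identity with a short proof. The one place deserving attention is the positivity of the failure effect, which is exactly where the choice $\lambda_\beta=\|\sum_{i\in S_\beta}M_i\|$ (rather than a smaller normalization) is essential, together with the bookkeeping of any degenerate blocks having $\lambda_\beta=0$. The conceptual content — that weighting the sub-POVMs by their operator norms produces a postselection probability that is uniform across all outcomes — is captured entirely by the cancellation displayed above.
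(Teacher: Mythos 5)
Your proof is correct and complete. Note that the paper does not prove this statement itself --- it is imported from \cite{SMO2022} as a known result --- so there is no internal proof to compare against; your direct verification is exactly the standard argument: positivity of the failure effect follows from $\lambda_\beta$ being the largest eigenvalue of $\sum_{i\in S_\beta} M_i$, the cancellation $p_\beta \lambda_\beta^{-1} = \left(\sum_{\beta'}\lambda_{\beta'}\right)^{-1}$ gives a postselection probability uniform over outcomes, and your handling of degenerate blocks with $\lambda_\beta = 0$ (which carry zero weight $p_\beta$ and may be discarded) is a correct, if minor, point of care.
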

The potential advantage of the above protocol lies in the fact that for rank-one POVMs $\M$ the dimension of the  Hilbert space needed to implement the Naimark dilation of each of the measurements $\N^{(\beta)}$ is upper bounded by $|S_{\beta}|+d-1$.  However, it is a priori difficult to guarantee large success probability $q(\M, \mathcal{S})$  while maintaining small sizes of subsets $S_\beta$. Furthermore,  optimizing over partitions $\mathcal{S}$ with bounded subset size $|S_\beta|$ is a hard computational problem. The following Theorem \ref{th:PostSimulation} nevertheless ensures that for nearly flat rank-one POVMs there always exists a ``good partition''. 

The key result on which we build is the solution to the Kadison-Singer problem. Translated into the language of POVMs it reads as follows. 

\begin{thm}[{\cite[Corollary 1.5]{marcus-spielman-srivastava}}]\label{th:solutionKS}
    Let $\M = (M_1, \ldots, M_n)$ be an $n$-outcome  POVM on $\C^d$ , with each $M_i$ having rank-one effects and satisfying $\| M_i \| \leq \epsilon$, $i=1,\ldots,n$. Then for any $r \geq 1$ there exists a partition $\lbrace S_\beta\rbrace_{\beta=1}^r$ of $[n]$ into disjoint subsets such that for each $\beta = 1, \ldots, r$ we have
\begin{equation}\label{eq:KSbound}
    \left\| \sum_{i\in{S_\beta}} M_i \right\| \leq \frac{1}{r}\left( 1 + \sqrt{r\epsilon}  \right)^2\ .
\end{equation}

\end{thm}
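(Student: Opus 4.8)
The plan is to recognize the statement as nothing more than the Kadison--Singer theorem of Marcus, Spielman and Srivastava \cite{marcus-spielman-srivastava} dressed in POVM notation, so that the entire argument reduces to a dictionary between rank-one effects and rank-one positive decompositions of the identity, together with a routine lifting to pass from the existence form of their theorem to the balanced $r$-partition form. First I would write each rank-one effect as an outer product $M_i = \ketbra{v_i}{v_i}$ with $\ket{v_i} = \sqrt{\alpha_i}\,\ket{\psi_i}$, so that
\begin{equation}
\norm{v_i}^2 = \tr\!\left(\ketbra{v_i}{v_i}\right) = \tr(M_i) = \norm{M_i} \leq \epsilon ,
\end{equation}
and the POVM normalization $\sum_{i=1}^n M_i = \I_d$ becomes the resolution of identity $\sum_{i=1}^n \ketbra{v_i}{v_i} = \I_d$. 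These are exactly the hypotheses under which the MSS theorem applies: a family of vectors in isotropic position with uniformly small squared norm.

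Next I would invoke their result as a black box. In its basic existence form it asserts that if $\ket{w_1},\ldots,\ket{w_m}$ are independent finitely-supported random vectors in some $\C^D$ with $\sum_i \E\!\left[\ketbra{w_i}{w_i}\right] = \I_D$ and $\E\,\norm{w_i}^2 \leq \eta$, then with positive probability $\norm{\sum_i \ketbra{w_i}{w_i}} \leq (1+\sqrt{\eta})^2$. To obtain the $r$-partition statement I would lift into $\C^d \otimes \C^r$: for each index $i$, let $\ket{w_i}$ be the random vector equal to $\sqrt{r}\,\ket{v_i}\otimes\ket{k}$ with probability $1/r$ for each $k\in[r]$. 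Then $\sum_i \E\!\left[\ketbra{w_i}{w_i}\right] = \left(\sum_i \ketbra{v_i}{v_i}\right)\otimes\I_r = \I_{dr}$, while $\norm{w_i}^2 = r\,\norm{v_i}^2 \leq r\epsilon$, so the hypotheses hold with $D=dr$ and $\eta=r\epsilon$. A positive-probability realization assigns each $i$ to a block $\beta(i)\in[r]$, and for the induced sets $S_\beta = \{\, i : \beta(i)=\beta \,\}$ the operator $\sum_i \ketbra{w_i}{w_i}$ is block diagonal with $\beta$-th block equal to $r\sum_{i\in S_\beta}\ketbra{v_i}{v_i}$.

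Finally, since the operator norm of a block-diagonal operator equals the largest of its block norms, the global bound $(1+\sqrt{r\epsilon})^2$ forces $\norm{\,r\sum_{i\in S_\beta}\ketbra{v_i}{v_i}\,} \leq (1+\sqrt{r\epsilon})^2$ for every $\beta$, which upon dividing by $r$ and translating back to effects is precisely the claimed bound \eqref{eq:KSbound} for the partition $\{S_\beta\}_{\beta=1}^r$ of $[n]$. The hard part is entirely the MSS theorem itself --- the method of interlacing families of characteristic polynomials together with the multivariate barrier-function control of the largest root of the expected mixed characteristic polynomial --- which I would cite rather than reprove; everything surrounding it is bookkeeping. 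The only minor checks are the rank-one identity $\tr(M_i)=\norm{M_i}$ and that empty parts $S_\beta$ are harmless, their effect sum being $0$ and trivially satisfying the bound.
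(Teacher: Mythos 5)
Your proof is correct and is consistent with the paper's treatment: the paper gives no proof of this theorem at all, citing it directly as Corollary 1.5 of Marcus--Spielman--Srivastava, and your argument --- translating rank-one effects with $\tr(M_i)=\|M_i\|\le\epsilon$ into vectors in isotropic position and then lifting the positive-probability existence theorem to the $r$-partition form via the block-diagonal tensor construction in $\C^d\otimes\C^r$ --- is precisely the reduction MSS themselves use to derive that corollary from their main interlacing-families theorem. All the surrounding bookkeeping (the identity $\tr(M_i)=\|M_i\|$ for rank-one positive operators, the fact that the norm of a block-diagonal operator is the maximum of the block norms, the division by $r$, and the harmlessness of empty parts) checks out.
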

\noindent Let us remark that in general the $O(\sqrt{\epsilon})$ dependence on $\epsilon$ in the upper bound cannot be improved (see \cite[Example 7]{weaver}).

By employing the above we obtain the following result on simulation of nearly flat POVMs with rank-one effects.  

\begin{thm}\label{th:PostSimulation}
    Let $\M = (M_1,\ldots,M_n)$ be an $n$-outcome POVM on $\C^d$, with $M_i = \alpha_i \ketbra{\psi_i}{\psi_i}$, $ \Tilde{\epsilon} \leq \alpha_i \leq \epsilon$. Fix $r \geq 1$ and let $C = r \epsilon$. Then there exists a partition $\mathcal{S}=\lbrace S_\beta\rbrace_{\beta=1}^r$ of $[n]$ into disjoint subsets such that
\begin{equation}\label{eq:KSqsucc}
    q(\M,\mathcal{S})\geq  \frac{1}{(1 + \sqrt{C})^2}\ ,
\end{equation}
where $q(\M,\mathcal{S})$ is given in Eq. \eqref{eq:explicitQsucc} and describes success probability of the simulation protocol from Theorem \ref{th:OldProtocol}. Furthermore, for all $\beta = 1,\ldots,r$
\begin{equation}\label{eq:upCARD}
    |S_{\beta}| \leq d \frac{\epsilon}{\Tilde{\epsilon}} (1 + 1/\sqrt{C})^2\ .
\end{equation}

\end{thm}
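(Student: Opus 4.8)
The plan is to obtain the partition directly from the Kadison--Singer theorem (Theorem~\ref{th:solutionKS}) and then read off both conclusions from the resulting per-block bound. First I would observe that since each effect $M_i = \alpha_i \ketbra{\psi_i}{\psi_i}$ is a rank-one operator built from a unit vector $\ket{\psi_i}$, one has $\|M_i\| = \alpha_i \le \epsilon$ and $\tr(M_i) = \alpha_i$. Thus the hypotheses of Theorem~\ref{th:solutionKS} are met, and applying it with the given $r$ and $\epsilon$ produces a partition $\mathcal{S} = \{S_\beta\}_{\beta=1}^r$ of $[n]$ with
\begin{equation}
\lambda_\beta \coloneqq \left\| \sum_{i\in S_\beta} M_i \right\| \le \frac{1}{r}\left(1 + \sqrt{r\epsilon}\right)^2 = \frac{1}{r}\left(1+\sqrt{C}\right)^2
\end{equation}
for every $\beta$, where $C = r\epsilon$.

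For the success probability \eqref{eq:KSqsucc}, I would recall from Theorem~\ref{th:OldProtocol} that $q(\M,\mathcal{S}) = \left(\sum_{\beta=1}^r \lambda_\beta\right)^{-1}$. Summing the $r$ per-block bounds gives $\sum_\beta \lambda_\beta \le r \cdot \frac{1}{r}(1+\sqrt{C})^2 = (1+\sqrt{C})^2$, and inverting yields $q(\M,\mathcal{S}) \ge (1+\sqrt{C})^{-2}$, which is precisely \eqref{eq:KSqsucc}.

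The only place where the flatness \emph{lower} bound $\Tilde{\epsilon}$ enters is the cardinality estimate \eqref{eq:upCARD}, which I would obtain by a trace sandwich. On one hand, $A_\beta \coloneqq \sum_{i\in S_\beta} M_i \ge 0$ acts on $\C^d$, so $\tr(A_\beta) \le d\,\|A_\beta\| = d\lambda_\beta$. On the other hand $\tr(A_\beta) = \sum_{i\in S_\beta}\alpha_i \ge |S_\beta|\,\Tilde{\epsilon}$. Combining these with the per-block bound on $\lambda_\beta$ gives
\begin{equation}
|S_\beta|\,\Tilde{\epsilon} \le d\lambda_\beta \le \frac{d}{r}(1+\sqrt{C})^2,
\end{equation}
hence $|S_\beta| \le \frac{d}{r\Tilde{\epsilon}}(1+\sqrt{C})^2$. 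Finally, using $r = C/\epsilon$ together with the elementary identity $\frac{(1+\sqrt{C})^2}{C} = (1 + 1/\sqrt{C})^2$ rewrites this as $|S_\beta| \le d\,\frac{\epsilon}{\Tilde{\epsilon}}(1+1/\sqrt{C})^2$, which is exactly \eqref{eq:upCARD}.

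The main obstacle is essentially imported: all of the analytic difficulty lives in Theorem~\ref{th:solutionKS} (the Marcus--Spielman--Srivastava resolution of Kadison--Singer), which supplies the balanced partition. Granting that, the argument is a short reduction, and the only genuinely new observation is that the uniform lower bound $\Tilde{\epsilon}$ on the effect weights, combined with the crude inequality $\tr(A_\beta) \le d\|A_\beta\|$ for positive operators on $\C^d$, converts operator-norm control of each block into control of its cardinality.
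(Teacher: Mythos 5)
Your proposal is correct and follows essentially the same route as the paper: invoke the Marcus--Spielman--Srivastava bound (Theorem \ref{th:solutionKS}) to get the per-block operator-norm control, sum over blocks to bound $q(\M,\mathcal{S})$ via Eq. \eqref{eq:explicitQsucc}, and convert the norm bound into a cardinality bound using the lower bound $\Tilde{\epsilon}$ on the weights. Your trace sandwich $|S_\beta|\,\Tilde{\epsilon} \le \tr(A_\beta) \le d\,\|A_\beta\|$ is the same computation the paper performs by taking the trace of the operator inequality $\sum_{i\in S_\beta} M_i \le \frac{1}{r}(1+\sqrt{C})^2\,\I_d$, so there is nothing substantively different to flag.
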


\begin{proof}
  Since $\M$ is a POVM with rank-one effects and $\| M_i \| \leq \epsilon$, we can apply Theorem \ref{th:solutionKS} to obtain for every $r\geq1$ that there exists a partition $\lbrace S_\beta\rbrace_{\beta=1}^r$ of $[n]$  into disjoint subsets such that for all $\beta=1,\ldots, r$  we have
  \begin{equation}\label{eq:normbBoundKS}
        \left\| \sum_{i\in{S_\beta}} M_i \right\| \leq \frac{1}{r}\left( 1 + \sqrt{r\epsilon}  \right)^2 ,
\end{equation}
        from which the Eq. \eqref{eq:KSqsucc} follows after summing over $\beta$ and utilizing  \eqref{eq:explicitQsucc}.
        
        In order to prove \eqref{eq:upCARD} we observe that \eqref{eq:normbBoundKS} is equivalent to
        \begin{equation}
            \sum\limits_{i \in S_{\beta}} M_i  \leq \frac{1}{r}\left( 1 + \sqrt{r\epsilon}  \right)^2\mathbb{I}_d.
        \end{equation}
        Upon taking trace of both sides we obtain
\begin{equation}
    \sum\limits_{i \in S_{\beta}} \alpha_i \leq \epsilon\left( 1 + \frac{1}{\sqrt{C}}  \right)^2 d\ .
\end{equation}
Using $\alpha_i \geq \Tilde{\epsilon}$, we finally obtain
       \begin{equation}
            \Tilde{\epsilon} |S_{\beta}| \leq \epsilon \left( 1 + \frac{1}{\sqrt{C}}  \right)^2 d\ ,
       \end{equation}
        which concludes the proof. 
\end{proof}

We remark that Theorem \ref{th:PostSimulation} does not provide an effective method for finding a partition $\mathcal{S}$ for which inequality \eqref{eq:KSqsucc} holds. This is due to the nonconstructive nature of the result of Marcus, Spielman and Srivastava. We leave open the problem of gauging the complexity of finding a good partitions $\mathcal{S}$ (i.e. partitions for which $q(\M,\mathcal{S})$ is large while maintaining $|S_\beta| =\Theta(d)$). In Appendix \ref{app:randomPART} we show (see Theorem \ref{th:PostSimulation-randomized}) that random (and thus efficient to find) partitions $\mathcal{S}$ yield success probability that decays like $1/\log(d)$ while maintaining $|S_\beta| =\Theta(d)$.

We will now use Theorem \ref{th:PostSimulation} to prove Lemma \ref{lem:nearlyPROJpost}.

\begin{proof}[Proof of Lemma \ref{lem:nearlyPROJpost}]

We first note that the statement of Theorem \ref{th:PostSimulation} is qualitatively very similar to that Lemma \ref{lem:nearlyPROJpost}. Additionally, the assumption $\max_i \alpha_i / \min_i \alpha_i \leq 1+\delta$ translates to $\epsilon/\tilde{\epsilon}\leq 1+ \delta$. Therefore, we only need to control magnitudes $A_i$ of effects $N^{(\beta)}_i$, for $i\in S_\beta$ and sizes of subsets $S_\beta$.

We start by bounding the size of subsets $S_\beta$. Unfortunately, Eq. \eqref{eq:upCARD} gives a bound on $|S_\beta|$ which is larger than $d$ for any $C$, $\epsilon>\tilde{\epsilon}>0$. 
To ensure small sizes of subsets we set $C$ to a fixed value\footnote{Formally, $C$ is not a free real parameter but is a multiple of $\epsilon$. However, by virtue of Lemma \ref{lem:flatPOVM}, $\epsilon$ can be chosen to be arbitrary small and hence $C$ can be effectively regarded as an unconstrained real parameter.} and consider a subpartition $\mathcal{S}'$ of $\mathcal{S}$ constructed by dividing each $S_\beta$ into  subsets  $S_{\beta,m}$ of size at most $d/2$. Importantly, from  \eqref{eq:upCARD} it follows that $|S_\beta|\leq (1+\delta) (1 + 1/\sqrt{C})^2 d$ and hence $S_\beta$ can be divided into at most $2(1+\delta) (1 + 1/\sqrt{C})^2$ parts of size at most $d/2$. Note that by applying Theorem \ref{th:OldProtocol} to $\mathcal{S}'$ and setting $C=1$ we get $q(\M,\mathcal{S}')\geq \frac{1}{8(1+\delta)} q(\M,\mathcal{S})\geq \frac{1}{32(1+\delta)}$. 

We can improve the greedy analysis of the sub-partition $\mathcal{S}'$ by noting that there cannot be too many large subsets $S_\beta$ and using inequality \eqref{eq:normbBoundKS} -- see Lemma \ref{lem:improvedBound} in Appendix \ref{app:aux} for details. Adapting the results presented therein, we can find a sub-partition $\mathcal{S}''$ whose elements have size at most $d/2$ and moreover $q(\M,\mathcal{S}'')\geq 0.068/(1+\delta)$, which is obtained by choosing $C=5$ and $\kappa=1/2$ in \eqref{eq:improvedBound}. 

We control the magnitude of $A_i$ as follows. From the definition of POVMs $\N^{(\beta)}$ in Eq. \eqref{eq:oldSIMPLeeffects} we get that for every $i$, $A_i=\alpha_i/\lambda_{\beta(i)}$, where $\beta(i)$ is the label of the unique subset $S_{\beta(i)}$ (of the partition $\mathcal{S}$) that contains $i$. From \eqref{eq:normbBoundKS} and by using assumptions about $\alpha_i$, we finally get $A_i\geq \frac{1}{1+\delta} \left( 1 + 1/\sqrt{C}  \right)^{-2}$. Additionally, since for every $\beta,m$ we have $\left\| \sum_{i\in{S_{\beta,m}}} M_i\right\| \leq  \left\| \sum_{i\in{S_{\beta}}} M_i\right\|$ we  still get that magnitudes $A_i$ of POVM elements $\N^{(\beta,m)}$  satisfy  $A_i\geq \frac{1}{1+\delta} \left( 1 + 1/\sqrt{C}  \right)^-2$ for any sub-partition of $\mathcal{S}$. Specifically, for a sub-partition $\mathcal{S}''$ constructed for $C=5$, we get $A_i \geq 0.47/(1+\delta)$. 
   
\end{proof}

By using similar reasoning as above we now prove a formal version of Result \ref{res:singleANC}, which gives bounds on success probability of simulation of POVMs as a function of the size of an ancilla which is allowed to be used. 

\begin{thm}[Formal version of Result \ref{res:singleANC} -- simulation via $k$-dimensional ancillas] \label{th:ancillas}Let $\M$ be a POVM on $\C^d$. Then there exists a probability distribution $\{p_\beta\}$ and a collection of POVMs $\{\N^{(\beta)}\}$ such that
\begin{itemize}
    \item[(i)] For every $\beta$ the POVM $\N^{(\beta)}$ can be implemented by a single projective measurement on $\C^d \ot \C^k$ (i.e. using ancilla of dimension $k$);
\item[(ii)] The convex combination $\L=\sum_{\beta}p_\beta \N^{(\beta)}$ simulates $\M$ with  postselection probability $q_k\geq1-\Theta(k^{-1/2})$,  i.e. $\sum_{\beta}p_\beta \N^{(\beta)}=(q_k\M,(1-q_k)\I_d)$. 
\end{itemize}
Additionally, for a one-qubit ancilla ($k=2$) we can find a simulation strategy which achieves $q_2\geq1/8=0.125$.
  
\end{thm}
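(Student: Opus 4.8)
The plan is to first flatten the target POVM, then partition its outcomes using the Kadison--Singer theorem, and finally read off the required ancilla dimension from the rank structure of the resulting block measurements. I would begin by applying Lemma \ref{lem:flatPOVM} to replace $\M$ by a rank-one POVM $\M'=(\alpha_i\ketbra{\psi_i}{\psi_i})_{i=1}^n$ with $\M=\Q(\M')$ and $\tilde\epsilon\leq\alpha_i\leq\epsilon$, $\epsilon/\tilde\epsilon\leq 1+\delta$, where $\delta>0$ is arbitrarily small. Because post-processing is linear and can be taken to fix an appended failure outcome, any simulation $\sum_\beta p_\beta\N^{(\beta)}=(q\M',(1-q)\I_d)$ pushes forward under $\Q$ to $\sum_\beta p_\beta\Q(\N^{(\beta)})=(q\M,(1-q)\I_d)$, with the same success probability $q$ and without altering the quantum implementation of the blocks. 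Thus it suffices to simulate $\M'$.

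I would then feed $\M'$ into Theorem \ref{th:PostSimulation}: for the free parameter $C=r\epsilon$ it produces (via the Kadison--Singer partition of Theorem \ref{th:solutionKS}) a partition $\mathcal{S}=\{S_\beta\}$ with $q(\M',\mathcal{S})\geq(1+\sqrt C)^{-2}$ and block sizes $|S_\beta|\leq d(1+\delta)(1+1/\sqrt C)^2$, together with the block POVMs $\N^{(\beta)}$ of Theorem \ref{th:OldProtocol}. The structural point is that each $\N^{(\beta)}$ has $|S_\beta|$ rank-one effects and a single failure effect $\I_d-\lambda_\beta^{-1}\sum_{i\in S_\beta}M_i'$, and the latter has a nontrivial kernel (the top eigenvector of $\sum_{i\in S_\beta}M_i'$, since $\lambda_\beta=\|\sum_{i\in S_\beta}M_i'\|$), hence rank at most $d-1$. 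Therefore the ranks of the effects of $\N^{(\beta)}$ sum to at most $|S_\beta|+d-1$, so (as noted below Theorem \ref{th:OldProtocol}) $\N^{(\beta)}$ admits a Naimark dilation in dimension $|S_\beta|+d-1$ and is realized by a single projective measurement on $\C^d\ot\C^k$ whenever $|S_\beta|\leq(k-1)d+1$.

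For the asymptotic part (ii) it now remains to choose $C$ so that the block-size bound fits, i.e. $d(1+\delta)(1+1/\sqrt C)^2\leq(k-1)d+1$; letting $\delta\to0$ this reduces to $(1+1/\sqrt C)^2\leq k-1$, and the extremal choice $\sqrt C=1/(\sqrt{k-1}-1)$ gives $q_k\geq(1+\sqrt C)^{-2}=(\sqrt{k-1}-1)^2/(k-1)=1-\Theta(k^{-1/2})$, valid for all sufficiently large $k$. This direct choice is impossible for $k=2$, since it would require $(1+1/\sqrt C)^2\leq 1$; for the one-qubit bound I would instead subpartition each $S_\beta$ into pieces of size at most $d$, which forces the summed rank below $2d$ and hence a genuine single-qubit dilation. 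The success probability of the refined partition is $\big(\sum_{\beta,m}\|\sum_{i\in S_{\beta,m}}M_i'\|\big)^{-1}$, and bounding the number of sub-blocks globally by $n/d+r$ (rather than block-by-block) via $\sum_\beta|S_\beta|=n$ and the norm bound \eqref{eq:KSbound} yields a lower bound of the form $\kappa C/\big((1+\kappa C)(1+\sqrt C)^2\big)$ in the limit $\delta\to0$, with $\kappa=1$ for pieces of size $d$; this improved counting is the content of Lemma \ref{lem:improvedBound}. Optimizing over $C$ gives the maximum $1/8$ at $C=1$, i.e. $q_2\geq1/8$.

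The main obstacle is the conflict between success probability and ancilla size: a large $q$ calls for few, large blocks (small $r$ and small $C$), whereas the ancilla budget caps the block size. A naive partition into equal-trace groups would control $\tr(\sum_{i\in S_\beta}M_i')$ but not $\|\sum_{i\in S_\beta}M_i'\|$, and it is exactly the latter operator norm that enters $q(\M',\mathcal{S})$ through Eq. \eqref{eq:explicitQsucc}; the Kadison--Singer theorem is what lets one bound this norm by $\approx(1+\sqrt C)^2/r$ for every block simultaneously while keeping $|S_\beta|=\Theta(d)$. The second delicate point is the $k=2$ counting: the crude estimate $r\lceil|S_\beta|^{\max}/d\rceil$ for the number of sub-blocks is too lossy to reach the constant $1/8$, and one must count sub-blocks globally using $\sum_\beta|S_\beta|=n$, which is precisely the improvement encoded in Lemma \ref{lem:improvedBound}.
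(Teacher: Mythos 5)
Your proposal is correct and follows essentially the same route as the paper's proof: flatten via Lemma \ref{lem:flatPOVM}, apply the Kadison--Singer partition of Theorem \ref{th:PostSimulation}, bound the Naimark dilation dimension by $|S_\beta|+d-1$ to fix the ancilla size, and for $k=2$ invoke the global sub-block counting of Lemma \ref{lem:improvedBound} with $C=1$, $\kappa=1$ to reach $q_2\geq 1/8$. Your added justifications (the rank-$(d-1)$ failure effect via the top eigenvector, and the check that $C=1$ maximizes $\kappa C/\bigl((1+\kappa C)(1+\sqrt{C})^2\bigr)$) are correct refinements of details the paper leaves implicit.
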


\begin{proof}

We start with the proof for general $k$.  By repeating the reasoning from Section \ref{sec:overview} we can assume without loss of generality that $\M$ has rank-one effects $M_i=\alpha_i \ketbra{\psi_i}{\psi_i}$ ($i\in[n]$) and furthermore has ``flat'' effects: $\alpha_i\in[\tilde{\epsilon},\epsilon]$, for $\epsilon/\tilde{\epsilon}\leq 1+\delta$, where $\delta\in(0,1/2]$ is arbitrary and $\epsilon\leq \epsilon_\ast(\M,\delta)$ (at the end of the proof we will take the limit $\delta\rightarrow0$, $\epsilon\rightarrow 0$). From Theorem \ref{th:PostSimulation} we know that for any such POVM there exists a partition $\mathcal{S}=\{S_\beta\}_{\beta=1}^r$ of $[n]$ such that 
\begin{equation}\label{eq:repeatSIZE}
    q(\M,\mathcal{S})\geq (1+\sqrt{C})^{-2}\ \text{and}\ |S_\beta|\leq d \frac{\epsilon}{\Tilde{\epsilon}} (1 + 1/\sqrt{C})^2\ ,
\end{equation}
 where $C=r\epsilon$. Recall that $|S_\beta| +d - 1$ is the lower bound on the dimension of the total space ($d_{tot}$) needed to implement measurements $\N^{(\beta)}$ via Naimark extension. By setting $d_{tot}=d\cdot k$ and using Eq. \eqref{eq:repeatSIZE} we get 
 \begin{equation}\label{eq:ineqKversusC}
     \frac{\epsilon}{\Tilde{\epsilon}} (1 + 1/\sqrt{C})^2\leq k-1\ . 
 \end{equation}
The above inequality depends on the parameter $C=r\epsilon$, which can freely chosen (up to precision $\epsilon$, which can be set arbitrarily small). Furthermore, for $\epsilon/\tilde{\epsilon}\leq 3/2$ there always exists a satisfiable solution $C>0$ for every natural $k>2$. By solving Eq. \eqref{eq:ineqKversusC} for $\sqrt{C}$ and combing the result with the bound on $q(\M,\mathcal{S})$ realizable with chosen $k$ we obtain the bound on success probability attainable by a $k$-dimensional ancilla
\begin{equation}
    q(\M,\mathcal{S}) \geq \left(1-\left(\frac{\tilde{\epsilon}}{\epsilon}(k-1)\right)^{-1/2} \right)^2\ , 
\end{equation}
from which the main claim of the theorem follows (since $\tilde{\epsilon}/\epsilon =\Theta(1)$).

The case of $k=2$ requires more care. A constant lower bound on the success probability $q_2$ can be deduced by considering a greedy subpartition of $\mathcal{S}$ coming from Theorem \ref{th:PostSimulation}. However, a more thorough analysis given in Lemma \ref{lem:improvedBound} in Appendix \ref{app:aux} shows that many of the elements of the partition $\mathcal{S}$ will have smaller sizes than the one predicted by the bound \eqref{eq:repeatSIZE}. Specifically, taking \eqref{eq:improvedBound} for $C=1$ and $\kappa=1$ gives a subpartition $\mathcal{S}'$ such that $q(\M,\mathcal{S'})\geq [4 (1+\epsilon/\tilde{\epsilon})]^{-1}$, which in the limit\footnote{Formally, for every $\epsilon/\tilde{\epsilon}>1$ we have a simulation strategy which attains a slightly worse bound on the success probability than $1/8$. However, taking $\kappa\leq 1-1/d$ in \eqref{eq:improvedBound} and $\epsilon/\tilde{\epsilon}\leq 1/\kappa$ gives a simulation strategy utilizing a single auxiliary qubit and attaining $q(\M,\mathcal{S}')\geq 1/8$.} $\epsilon/\tilde{\epsilon}\rightarrow 1$ gives the claimed lower bound $q_2\geq 1/8$. 
\end{proof}

\section{Dimension-deficient Naimark theorem}\label{sec:defitientNAIMARK} 

In this section we formulate and prove dimension-deficient Naimark extension theorem. This result states that nearly projective measurements on $\C^d$ with small number of outcomes can be approximated (in a suitable sense) by projecively simulable measurements on $\C^d$. In what follows for a subspace $X\subset \C^d$ we denote by $|X|$ its dimension, by $\PP_X$ the orthogonal projector onto it, and by $X^\perp$ its orthogonal complement in $\C^d$. 

\begin{thm}[Dimension-deficient Naimark dilation]\label{th:dimentionDefitient}
    Let $l\leq d/2$ and let $\N=(N_1,\ldots,N_{l+1})$ be a POVM on $\C^d$ with effects satisfying $N_i=A_i \ketbra{\psi_i}{\psi_i}$, for $i\in[l]$. Let $W=\mathrm{span}_{\C}\lbrace\{ \ket{\psi_i} \rbrace\ ,\ i\in[l]\}$. Let $\F=(F_1,\ldots,F_{l+1})\in\POVM(\C^d)$ be defined by 
    \begin{equation}\label{eq:projSIMeffects}
        F_i= A_i \ketbra{\psi_i}{\psi_i} + \frac{1-A_i}{|W^\perp|} \PP_{W^\perp}\ ,
    \end{equation}
    for $i\in[l]$. Then the POVM $\F$ is projectively simulable, i.e. $\F\in\SP(d)$.
\end{thm}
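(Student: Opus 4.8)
The plan is to realize $\F$ as an average over a continuous family of genuine projective measurements on $\C^d$, with no enlargement of the space. The point behind the name \emph{dimension-deficient} is that the subspace $W^\perp$ itself plays the role of the ancilla into which one dilates the rank-one part of $\N$; the hypothesis $l\le d/2$ guarantees $|W^\perp|=d-|W|\ge d-l\ge l$, i.e. that $W^\perp$ is large enough to host the dilation. First I would produce, for a single measurement, dilation vectors whose rank-one projectors reproduce the $A_i\ketbra{\psi_i}{\psi_i}$ part exactly and carry the missing weight $1-A_i$ inside $W^\perp$; then I would twirl over the unitary group of $W^\perp$ to spread that weight uniformly, turning each rank-one $\ketbra{\chi_i}{\chi_i}$ into $\tfrac{1-A_i}{|W^\perp|}\PP_{W^\perp}$ while simultaneously killing all cross terms.

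Concretely, set $\ket{\phi_i}=\sqrt{A_i}\ket{\psi_i}+\ket{\chi_i}$ for $i\in[l]$, with $\ket{\psi_i}\in W$ and the to-be-chosen $\ket{\chi_i}\in W^\perp$, so that the two pieces are automatically orthogonal. Requiring $\{\ket{\phi_i}\}_{i=1}^{l}$ to be orthonormal reduces, since the $W$- and $W^\perp$-parts decouple, to realizing a prescribed Gram matrix
\begin{equation}
G_{ij}\defeq \delta_{ij}-\sqrt{A_iA_j}\,\langle\psi_i|\psi_j\rangle,\qquad i,j\in[l],
\end{equation}
i.e. to finding $\ket{\chi_i}\in W^\perp$ with $\langle\chi_i|\chi_j\rangle=G_{ij}$. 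The key algebraic point is that $G=\I_l-D\Psi D\succeq0$, where $D=\mathrm{diag}(\sqrt{A_i})$ and $\Psi_{ij}=\langle\psi_i|\psi_j\rangle$: indeed $D\Psi D$ is the Gram matrix of the vectors $\sqrt{A_i}\ket{\psi_i}$, hence shares its nonzero spectrum with $\sum_{i\in[l]}A_i\ketbra{\psi_i}{\psi_i}$, which is $\le\I_d$ because $N_{l+1}=\I_d-\sum_{i\in[l]}A_i\ketbra{\psi_i}{\psi_i}\ge0$. Since $\mathrm{rank}(G)\le l\le|W^\perp|$, such $\ket{\chi_i}$ exist, and $\P=\bigl(\ketbra{\phi_1}{\phi_1},\ldots,\ketbra{\phi_l}{\phi_l},\,\I_d-\sum_{i=1}^{l}\ketbra{\phi_i}{\phi_i}\bigr)$ is a bona fide projective measurement on $\C^d$.

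The final step is the twirl. For any $U\in\U(W^\perp)$, extended by the identity on $W$, replacing $\ket{\chi_i}$ with $U\ket{\chi_i}$ leaves $G$ unchanged, so the resulting $\P^U$ is again projective. Averaging over the Haar measure on $\U(W^\perp)$, the cross terms $\sqrt{A_i}\,(\ket{\psi_i}\bra{U\chi_i}+\mathrm{h.c.})$ vanish because $\E_U U=0$, while the twirl identity $\E_U\,U\ketbra{\chi_i}{\chi_i}U^\dagger=\tfrac{\|\chi_i\|^2}{|W^\perp|}\PP_{W^\perp}=\tfrac{1-A_i}{|W^\perp|}\PP_{W^\perp}$ reproduces exactly the flat part of $F_i$. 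Hence $\E_U\ketbra{\phi_i^U}{\phi_i^U}=F_i$ for $i\in[l]$ and, by normalization, $\E_U\bigl(\I_d-\sum_i\ketbra{\phi_i^U}{\phi_i^U}\bigr)=F_{l+1}$, so $\F=\E_U\P^U$ is an average of projective measurements and therefore $\F\in\SP(d)$. To keep the decomposition finite and explicit, one may replace the Haar average by a unitary $1$-design on $W^\perp$ (which already implements the twirl) randomized by a global sign $\pm1$ (which forces the first moment to vanish); alternatively one invokes compactness of the set of at most $(l+1)$-outcome projective measurements, whose closed convex hull contains the barycenter.

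I expect the main obstacle to be organizational rather than deep: establishing the positivity $G\succeq0$ together with the rank bound $\mathrm{rank}(G)\le|W^\perp|$ is the single place where both the POVM constraint on $\N$ and the hypothesis $l\le d/2$ are genuinely used, and it is precisely what makes the in-place dilation possible. The twirl computation is standard, and matching the last effect $F_{l+1}$ is automatic from normalization.
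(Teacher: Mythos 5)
Your proof is correct, and its skeleton coincides with the paper's: both realize $\F$ as a twirl, over unitaries acting trivially on $W$ and Haar-randomly on $W^\perp$, of a single projective measurement on $\C^d$ whose rank-one effects compress on $W$ to $A_i\ketbra{\psi_i}{\psi_i}$. The difference lies in how that projective measurement is obtained, and here your route is arguably more self-contained. The paper invokes the Naimark extension theorem for the compressed POVM $\N^W$ with $N^W_i=\PP_W N_i \PP_W$, noting $\sum_i \mathrm{rank}(N^W_i)\le l+|W|\le d$ so the dilation fits inside $\C^d$; as stated, though, Theorem \ref{th:Naimark} covers only rank-one POVMs, while $N^W_{l+1}$ can have rank up to $|W|$, so a small implicit step (fine-graining the last effect) is being glossed over. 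Your explicit Gram-matrix construction --- finding $\ket{\chi_i}\in W^\perp$ with $\langle\chi_i|\chi_j\rangle=\delta_{ij}-\sqrt{A_iA_j}\langle\psi_i|\psi_j\rangle$, positivity following from $\sum_{i}A_i\ketbra{\psi_i}{\psi_i}\le\I_d$ and realizability from $\mathrm{rank}(G)\le l\le|W^\perp|$ --- supplies exactly this dilation from first principles, and as a bonus makes $F_{l+1}\ge 0$ automatic (it is an average of projectors), whereas the paper verifies it by a separate computation. The twirls also differ in a minor way: the paper multiplies $\PP_W$ by a random phase $e^{i\varphi}$ precisely to kill the cross terms, while you rely on the first moment of Haar measure on $\U(W^\perp)$ vanishing, which is true by phase invariance of Haar measure on the full unitary group; note only that your phrasing $\E_U U=0$ for the extended unitary is literally false (the expectation is $\PP_W$), but since the cross terms involve only $U\ket{\chi_i}=V\ket{\chi_i}$ with $V$ the $W^\perp$-block, the argument stands as $\E_V V=0$. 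Your sign-randomized $1$-design remark correctly yields a finite convex decomposition, which matches the finite-randomization reading of $\SP(d)$; the paper, like your primary argument, is content with the continuous Haar average.
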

\begin{proof}
We first note that Eq. \eqref{eq:projSIMeffects} uniquely defines a POVM $\F^\N$ on $\C^d$ because, due to normalization of POVMs, we have $F_{l+1}=\I_d - \sum_{i=1}^l F_i$ and consequently
\begin{equation}
   F_{l+1}=  \PP_W - \sum_{i=1}^l A_i \ketbra{\psi_i}{\psi_i} + \left(1- \frac{\sum_{i=1}^l (1-A_i)}{|W^\perp|}\right)\PP_{W^\perp}\ . 
\end{equation}
It is easy to see that $F_{l+1}\geq 0$. This is because (i) $\PP_W - \sum_{i=1}^l A_i \ketbra{\psi_i}{\psi_i} \geq 0$ (since $\ketbra{\psi_i}{\psi_i}$ have their support on the subspace $W\subset \C^d$) and (ii) due to the inequality $\sum_{i=1}^l (1- A_i)\leq d/2 \leq |W^\perp|$. We now proceed to show that $\F$ is projectively simulable. 

    Consider $\N^W\in\POVM(W)$ that is defined via $N^W_i= \PP_W N_i \PP_W $, for $i\in[l+1]$. We have $\sum_{i=1}^{l+1} \mathrm{rank}(N^W_i)\leq l+|W| \leq d$. It is therefore possible to realize the Naimark extension of $\N^W$ (cf. Theorem \ref{th:Naimark}) using the space $\C^d$. Let $\P^W=(P^W_1,\ldots, P^W_{l+1})$ be a projective measurement realizing the Naimark extension of $\N^W$. Because $\mathrm{rank}(N^W_i)=1$ for $i\in[l]$ we have $P^W_i = \ketbra{\phi_i}{\phi_i}$, where $\{\ket{\phi_i}\}_{i=1}^l$ is a collection of orthogonal vectors in $\C^d$. Because $\P^W$ is a Naimark extension of $\N^W$ we have for $i\in[l]$
    \begin{equation}\label{eq:purePURIFICATION}
        \PP_W \ketbra{\phi_i}{\phi_i}\PP_W = A_i \ketbra{\psi_i}{\psi_i}\ . 
    \end{equation}
    Furthermore, by using $\PP_W + \PP_{W^\perp}= \I_d$, $\tr(\ketbra{\psi_i}{\psi_i})=1$ and Eq. \eqref{eq:purePURIFICATION} we get that for $i\in[l]$
 \begin{equation}\label{eq:projectionsNORMALIZATIONS}
     \tr(\ketbra{\phi_i}{\phi_i} \PP_W) = A_i\ ,\ \tr(\ketbra{\phi_i}{\phi_i} \PP_{W^\perp}) = 1- A_i\ .
 \end{equation}
In order to get from the projective measurement $\P^W$ to $\F$ we show these POVMs can be connected by a suitably crafted unitary twirling
\begin{equation}\label{eq:twirlingPOVM}
    \F = \E_{U\sim \mathcal{E}} U \P^W U^\dag\ ,
\end{equation}
where $\mathcal{E}$ is the uniform measure on unitaries of the form $U=\exp(i \varphi) \PP_W \oplus V$, where $\varphi\in[0,2\pi]$ is a phase and $V\in U(W^\perp)$ is a unitary on $W^\perp$. The twirling in \eqref{eq:twirlingPOVM} is defined by applying averaging separately on every component of a POVM i.e. 
\begin{equation}
    \left[\E_{U\sim \mathcal{E}} U \P^W U^\dagger\right]_i = \E_{U\sim \mathcal{E}} U P^W_i U^\dagger\ .
\end{equation}
Using basic tricks from Haar measure integration it is easy to see that for $i\in[l]$
\begin{equation}\label{eq:intermediateidentity}
    \E_{U\sim \mathcal{E}} U P^W_i U^\dagger = A_i \psi_i + \frac{1-A_i}{|W^\perp|} \PP_{W^\perp} = F_i\ .
\end{equation}
Indeed, for every operator $B$ on $\C^d$ we have (cf. Lemma \ref{lem:auxINTEGR} in Appendix \ref{app:aux} for the proof)
\begin{equation}\label{eq:averaging}
    \E_{U\sim \mathcal{E}} U B U^\dagger = \PP_W B \PP_W + \frac{\tr(B \PP_{W^\perp})}{|W^\perp|} \PP_{W^\perp}\ .
\end{equation}
By inserting $B=P^W_i=\ketbra{\phi_i}{\phi_i}$ and using equations \eqref{eq:purePURIFICATION} and \eqref{eq:projectionsNORMALIZATIONS} we get \eqref{eq:intermediateidentity}. Finally, from the POVM normalization condition (unitary twirling is a unitary channel and therefore $\E_{U\sim \mathcal{E}} U \I_d U^\dagger= \I_d$) we get $\E_{U\sim \mathcal{E}} U P^W_{l+1} U^\dagger = \I_d - \sum_{i=1}^l F_i$. 

After establishing Eq. \eqref{eq:twirlingPOVM} we note that this identity proves $\F\in\SP(d)$. This is because unitary twirling can be interpreted as randomization (convex combination) over projective measurements $U \P^W U^\dagger$.

\end{proof}

\emph{Acknowledgements} We thank Adam Sawicki, Filip Maciejewski, Robert Huang, Yihui Quek, András Gilyén, Joao F. Doriguello, Ingo Roth,  Rafa\l\  Demkowicz-Dobrza\'{n}ski and  Marcin Kotowski for interesting discussions.  MK  acknowledges the financial support by  TEAM-NET project co-financed by EU within the Smart Growth Operational Programme (contract no.  POIR.04.04.00-00-17C1/18-00). MO ancnowledges the support of  National Science Centre, Poland under the grant OPUS: UMO2020/37/B/ST2/02478. Part of this work was conducted while MO was visiting the Simons Institute for the Theory
of Computing.

\bibliographystyle{apsrev4-2}
\bibliography{refs}

\appendix
\onecolumngrid

\section{Auxiliary technical results}\label{app:aux}
\noindent
In this part of the Appendix we collect a number of auxiliary technical results used throughout the paper.

\subsection{Flat fine-grainings of arbitrary measurements}
\noindent

The purpose of this section is to prove Lemma \ref{lem:flatPOVM}, which states that an arbitrary POVM $\M\in\POVM(\C^d)$ can be realized as coarse-graining of a POVM $\M'$  with rank-one effects that have \emph{nearly identical} traces. We shall need the following elementary lemma.

\begin{lem}\label{lem:devNUMBERS}
    Consider $x_1, \ldots, x_N > 0$. For any $\delta \in (0,1)$  there exists $\epsilon_\ast>0$ such that for all $\epsilon\in(0,\epsilon_\ast)$ there exists a subdivision $x_i = x_{i,1} + \ldots + x_{i,\ell_{i}}$, with $x_{i,j} > 0$ such that
   \begin{equation}
        \frac{\max\limits_{i,j} x_{i,j}}{\min\limits_{i,j} x_{i,j}} \leq 1+\delta\ ,
    \end{equation}
    and additionally for every $i,j$ we have $x_{i,j}\leq \epsilon$.
\end{lem}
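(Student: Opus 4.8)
The plan is to cut every $x_i$ into a number of \emph{equal} pieces, choosing the number of pieces $\ell_i$ separately for each index so that all resulting pieces, across all indices, fall into one \emph{common} window $[s,(1+\delta)s]$. If this can be arranged with the top of the window equal to $\epsilon$, i.e. with $(1+\delta)s=\epsilon$, then automatically every piece satisfies $x_{i,j}\le\epsilon$, and the ratio of the largest to the smallest piece is at most $(1+\delta)s/s=1+\delta$. These are exactly the two conclusions required, so the whole problem reduces to producing such a common window.

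The key observation is that splitting $x_i$ into $\ell_i$ equal pieces of size $x_i/\ell_i$ places all of them in $[s,(1+\delta)s]$ precisely when $s\le x_i/\ell_i\le(1+\delta)s$, that is, when $\ell_i$ is an integer lying in the interval $I_i \coloneqq [\,x_i/((1+\delta)s),\,x_i/s\,]$. Since any real interval of length at least one contains an integer, it suffices to guarantee $|I_i|\ge 1$. A direct computation gives $|I_i| = (x_i/s)\cdot\delta/(1+\delta)$, so $|I_i|\ge 1$ is equivalent to $x_i\ge s(1+\delta)/\delta$. I would therefore first fix $s=\epsilon/(1+\delta)$ (which forces the window top to equal $\epsilon$), collapsing the requirement to the single condition $x_i \ge \epsilon/\delta$ for every $i$.

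It then remains only to enforce this lower bound uniformly in $i$, which is the one genuinely binding constraint: the smallest of the $x_i$ dictates how small $\epsilon$ must be taken. Setting $\epsilon_\ast \coloneqq \delta\,\min_i x_i$, any $\epsilon\in(0,\epsilon_\ast)$ yields $\epsilon/\delta < \min_i x_i \le x_i$, so each $I_i$ has length strictly exceeding one and hence contains an integer $\ell_i$; moreover $\ell_i\ge x_i/\epsilon > 1/\delta \ge 1$, so each $\ell_i$ is a positive integer and all pieces $x_{i,j}=x_i/\ell_i$ are strictly positive. Splitting each $x_i$ into $\ell_i$ equal parts then delivers the desired subdivision. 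The only real subtlety — the ``main obstacle,'' such as it is — lies in recognizing that the window must be shared by all indices and that the worst case is $x_i=\min_i x_i$; once the threshold $\epsilon_\ast$ is tied to $\min_i x_i$, everything else is the elementary fact that a length-$\ge 1$ interval captures an integer.
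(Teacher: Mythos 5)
Your proof is correct, and it takes a genuinely different route from the paper's. The paper proves the lemma via equidistribution: for each irrational $x_i$ it uses density of the fractional parts $\{k x_i\}$ in $(0,1)$ to find $k_i$ with $\{k_i x_i\}$ close to $1$, handles rational $x_i$ separately, then cuts every $x_i$ into pieces of size $1/k$ (with $k = k_1\cdots k_N$) plus one slightly smaller remainder piece, and finally refines uniformly to push the piece sizes below an arbitrary $\epsilon$. Your argument replaces all of this with a single elementary observation: cutting $x_i$ into $\ell_i$ \emph{equal} pieces lands them in the common window $[s,(1+\delta)s]$ iff $\ell_i$ is an integer in $\left[\frac{x_i}{(1+\delta)s}, \frac{x_i}{s}\right]$, an interval whose length $\frac{x_i}{s}\cdot\frac{\delta}{1+\delta}$ exceeds $1$ precisely when $x_i \geq s(1+\delta)/\delta$; fixing $s=\epsilon/(1+\delta)$ and $\epsilon_\ast = \delta \min_i x_i$ makes this hold for all $i$ simultaneously, with $\ell_i \geq x_i/\epsilon > 1/\delta > 1$ guaranteeing a genuine positive subdivision. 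What your approach buys: it avoids the irrational/rational case split and the non-constructive appeal to density (the paper's $\epsilon_\ast = 1/k$ depends on $k_i$ whose size equidistribution does not control, whereas your $\epsilon_\ast = \delta\min_i x_i$ is explicit), it handles the ``for all $\epsilon \in (0,\epsilon_\ast)$'' quantifier directly rather than by a post-hoc refinement step, and it produces equal pieces within each $x_i$. The paper's method offers no compensating advantage here; your argument is simply the cleaner one for this statement.
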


\begin{proof}
    We use the fact that for any irrational number $\alpha$ the sequence $\{k\alpha\}_{k=1}^{\infty}$ is dense in $(0,1)$, where $\{\cdot\}$ denotes the fractional part of a real number. Therefore for $\Delta<1/2$ (its relation to $\delta$ will be explained later) for each irrational $x_i$ there exists $k_i$ such that $1-\Delta \leq \{k_i x_i\} < 1$. If $x_i$ is rational, we take $k_i$ to be any natural number such that $k_i x_i$ is integer. Thus if we divide each $x_i$ into parts of size $\frac{1}{k_i}$, with possibly one part of smaller size, we have
   \begin{equation}
    x_i = \frac{\lfloor k_i x_i\rfloor}{k_i} + \frac{\alpha_i}{k_i}
   \end{equation}
    with $1-\Delta\leq \alpha_i \leq 1$.
    
    Let now $k = k_1 \cdot \ldots \cdot k_N$. For each $i$ subdivide each part defined above further into $\frac{k}{k_i}$ parts of equal size. In this way we obtain for each $i$ parts of size $\frac{1}{k}$ plus possibly one part of size $\beta_i = \frac{\alpha_i}{k}$ satisfying
   \begin{equation}
    \frac{1-\Delta}{k} \leq \beta_i \leq \frac{1}{k}\ .
   \end{equation}
   The statement of the lemma follow by imposing the condition $\delta=1/(1-\Delta)-1=\Delta/(1-\Delta)$, setting $\epsilon_\ast =1/k$ and noting that the magnitude of each part can be further uniformly reduced by dividing each $\beta_i$ by an arbitrary large rational number.
\end{proof}

\newcounter{tempcounter}
\setcounter{tempcounter}{\value{lem}} 
\setcounter{lem}{0} 

\begin{lem}[Restatement]\  For every $\delta\in(0,1)$ and for every $\M\in\POVM(\C^d)$ there exists $\epsilon_\ast >0$ such that for all $\epsilon\in(0,\epsilon_\ast)$ there  exists a POVM $\M'$ and a stochastic map $\Q$ such that $\M=\Q(\M')$ and $M'_i=\alpha_i \ket{\psi_i}\bra{\psi_i}$, $\frac{\max_i \alpha_i}{\min_i \alpha_i} \leq 1 + \delta$, and furthermore $\max_i \alpha_i \leq \epsilon$.  
\end{lem}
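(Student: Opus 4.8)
The plan is to combine a spectral fine-graining with the elementary subdivision result of \cref{lem:devNUMBERS}. First I would write each effect in its spectral decomposition $M_k = \sum_{a} \lambda_{k,a}\ket{e_{k,a}}\bra{e_{k,a}}$, with $\lambda_{k,a}>0$ and $\{\ket{e_{k,a}}\}_a$ an orthonormal eigenbasis of $M_k$. Collecting all of these rank-one terms over $k$ already yields a rank-one POVM that coarse-grains back to $\M$, but whose magnitudes (the eigenvalues $\lambda_{k,a}$) are in general far from uniform. The whole difficulty is thus reduced to rescaling the \emph{scalar} magnitudes, which is exactly the content of \cref{lem:devNUMBERS}.

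Second, I would apply \cref{lem:devNUMBERS} to the finite list of positive numbers $\{\lambda_{k,a}\}$, playing the role of $x_1,\dots,x_N$, to obtain a threshold $\epsilon_\ast>0$ and, for every $\epsilon\in(0,\epsilon_\ast)$, a subdivision $\lambda_{k,a}=\sum_{j}\mu_{k,a,j}$ with $\mu_{k,a,j}>0$, $\max_{k,a,j}\mu_{k,a,j}/\min_{k,a,j}\mu_{k,a,j}\leq 1+\delta$, and $\mu_{k,a,j}\leq\epsilon$. The crucial observation is that splitting a rank-one effect along its own direction leaves the operator unchanged, $\lambda_{k,a}\ket{e_{k,a}}\bra{e_{k,a}}=\sum_j \mu_{k,a,j}\ket{e_{k,a}}\bra{e_{k,a}}$, so replacing each eigen-term by its subdivided copies defines a genuine POVM $\M'$ with effects $M'_{(k,a,j)}=\mu_{k,a,j}\ket{e_{k,a}}\bra{e_{k,a}}$. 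Summing over all indices reproduces $\sum_k M_k=\I_d$, so $\M'\in\POVM(\C^d)$; its effects are rank one with magnitudes $\alpha_{(k,a,j)}=\mu_{k,a,j}$ satisfying the required near-flatness $\max_i\alpha_i/\min_i\alpha_i\leq 1+\delta$ and the smallness bound $\max_i\alpha_i\leq\epsilon$.

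Finally, I would recover $\M$ via the deterministic coarse-graining that forgets the refinement labels $a,j$: the stochastic map $\Q$ with conditional probabilities $q_{k\,|\,(k',a,j)}=\delta_{kk'}$ gives $\Q(\M')_k=\sum_{a,j}\mu_{k,a,j}\ket{e_{k,a}}\bra{e_{k,a}}=M_k$, i.e. $\M=\Q(\M')$, as desired. I do not expect a genuine obstacle here once the statement is phrased this way; the only subtle point is that $\epsilon_\ast$ must be permitted to depend on $\M$, since it is inherited from \cref{lem:devNUMBERS} applied to the particular eigenvalues of $\M$. All the real content—that finitely many positive reals can be simultaneously subdivided into nearly equal, arbitrarily small pieces—is already packaged in that elementary lemma through the equidistribution of the fractional parts $\{k\alpha\}$, so the remaining work is purely bookkeeping of the operator decomposition.
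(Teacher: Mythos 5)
Your proof is correct and follows essentially the same route as the paper's: spectral decomposition of the effects into rank-one eigen-terms, application of Lemma~\ref{lem:devNUMBERS} to the resulting eigenvalues, and recovery of $\M$ by a label-forgetting stochastic map. The only cosmetic difference is that the paper factors the coarse-graining as a composition of two stochastic maps ($\mathcal{Q}_1\circ\mathcal{Q}_2$) rather than a single one, which is immaterial.
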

\setcounter{lem}{\value{tempcounter}} 
\begin{proof}
    We start by diagonalizing the effects of the target POVM $\M=(M_1,\ldots,M_n)$,  $M_a= \sum_{j=1}^d \lambda_{j,a} \ketbra{\psi_{j,a}}{\psi_{j,a}}$. Clearly, the POVM $\M$ can be realized as coarse-graining of POVM $\tilde{\M}$ with effects $\tilde{M}_{j,a}=\lambda_{j,a} \ketbra{\psi_{j,a}}{\psi_{j,a}}$. Let $\mathcal{Q}_1$ be a stochastic map such that $\M=\mathcal{Q}_1(\tilde{\M})$. Let us now treat the positive number $\lambda_{j,a}$ as input to Lemma \ref{lem:devNUMBERS} (i.e. as numbers $x_1,\ldots, x_N$). It follows that for any $\delta\in(0,1)$ and any $\epsilon\leq \epsilon_\ast$ (with $\epsilon_\ast$ depending on the collection  $\{\lambda_{j,a}\}$) there exists a subdivision of numbers $\lambda_{j,a}$ into parts $\lambda_{j,a; m}$ (the range of $m$ can depend on $j$ and $a$) such that 
    \begin{equation}
       \frac{\max_{j,a,m}\lambda_{j,a; m}}{\min_{j,a,m}\lambda_{j,a; m}}\leq 1 +\delta \ \ \text{and}\ \  \max_{j,a,m}\lambda_{j,a; m} \leq \epsilon\ .
    \end{equation}
   A POVM $\M'$ with effects $\{\lambda_{j,a; m} \ketbra{\psi_{j,a}}{\psi_{j,a}} \} $ is a fine-graining of the POVM $\tilde{\M}$. Let $\mathcal{Q}_2$ be a a stochastic map such that $\tilde{M}=\mathcal{Q}_2(\M')$. We conclude the proof by noting that $\M=(\mathcal{Q}_1 \circ \mathcal{Q}_2 ) (\M')$.
\end{proof}

\setcounter{lem}{\value{tempcounter}} 

\subsection{Improved bounds for sizes of groups in the simulation alghorithm}

Here we give a detailed reasoning behind improved parameters of simulation of general POVMs by measurements with bounded number of outcomes (which are relevant for proofs of quantitative versions of Lemma \ref{lem:nearlyPROJpost} and Theorem \ref{th:ancillas}). The following Lemma can be regarded as a variant of Theorem \ref{th:PostSimulation} in the situation where we want to ensure that $|S_\beta|\leq \kappa d$.

\begin{lem}\label{lem:improvedBound}
Let $\M = (M_1,\ldots,M_n)$ be an $n$-outcome POVM on $\C^d$, with $M_i = \alpha_i \ketbra{\psi_i}{\psi_i}$, $\tilde{\epsilon}\leq \alpha_i \leq \epsilon$. Suppose that $\mathcal{S}=\lbrace S_\beta\rbrace_{\beta=1}^r$ is a partition of $[n]$ into disjoint subsets such that for all $\beta = 1, \ldots, r$ we have
\begin{equation}\label{eq:improvedBound-normbBoundKS}
        \left\| \sum_{i\in{S_\beta}} M_i \right\| \leq \frac{1}{r}\left( 1 + \sqrt{r\epsilon}  \right)^2.
\end{equation}
Let $C = r \epsilon$ and fix $\kappa > 0$. Then there exists a subpartition $\mathcal{S}' = \lbrace S'_\beta\rbrace_{\beta=1}^{r'} $ of $\mathcal{S}$ such that for each $\beta = 1, \ldots, r'$ we have $|S_{\beta}'| \leq \kappa d$ and
\begin{equation}\label{eq:improvedBound}
    q(\M,\mathcal{S}') \geq \frac{1}{(1 + \sqrt{C})^2 (\frac{\epsilon}{\tilde{\epsilon}}\cdot\frac{1}{\kappa C } + 1)},
\end{equation}
where $q(\M,\mathcal{S}')$ is defined in Eq.  \eqref{eq:explicitQsucc}.
\end{lem}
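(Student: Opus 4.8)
The plan is to take the given partition $\mathcal{S}$ and simply \emph{refine} it: leave every subset with $|S_\beta|\le \kappa d$ untouched, and chop every oversized subset into pieces of size at most $\kappa d$. Writing $\mathcal{S}'=\{S'_\gamma\}_{\gamma=1}^{r'}$ for the resulting subpartition, by construction $|S'_\gamma|\le\kappa d$ for every $\gamma$. The crucial point is that the norm bound \eqref{eq:improvedBound-normbBoundKS} survives refinement for free: if $S'_\gamma\subseteq S_\beta$ then $\sum_{i\in S'_\gamma}M_i\le\sum_{i\in S_\beta}M_i$ in the positive semidefinite order (the omitted effects are nonnegative), and the operator norm is monotone on positive operators, so $\|\sum_{i\in S'_\gamma}M_i\|\le\frac{1}{r}(1+\sqrt{C})^2$ for every piece as well. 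Consequently, by \eqref{eq:explicitQsucc}, $q(\M,\mathcal{S}')^{-1}=\sum_\gamma\|\sum_{i\in S'_\gamma}M_i\|\le r'\cdot\frac{1}{r}(1+\sqrt{C})^2$, and the whole problem reduces to controlling the total number of pieces $r'$.

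The next step is to bound $r'$. If $S_\beta$ is split into $t_\beta=\lceil|S_\beta|/(\kappa d)\rceil$ pieces (with $t_\beta=1$ whenever $|S_\beta|\le\kappa d$), then $r'=\sum_{\beta=1}^r t_\beta=r+\sum_{\beta=1}^r(t_\beta-1)$. I would control the \emph{excess} $\sum_\beta(t_\beta-1)$ rather than each $t_\beta$ individually: since $\lceil x\rceil-1<x$ for every $x>0$ and the small subsets contribute nothing, $\sum_\beta(t_\beta-1)\le\sum_\beta|S_\beta|/(\kappa d)=n/(\kappa d)$, where $n$ is the number of outcomes of $\M$. To close this I use that $\M$ is normalized: taking the trace of $\sum_i\alpha_i\ketbra{\psi_i}{\psi_i}=\I_d$ gives $\sum_i\alpha_i=d$, and $\alpha_i\ge\tilde{\epsilon}$ forces $n\le d/\tilde{\epsilon}$. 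Hence $\sum_\beta(t_\beta-1)\le 1/(\kappa\tilde{\epsilon})$ and $r'\le r+1/(\kappa\tilde{\epsilon})$.

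Finally I would substitute $r=C/\epsilon$ into the two estimates above to obtain $q(\M,\mathcal{S}')^{-1}\le\bigl(r+\tfrac{1}{\kappa\tilde{\epsilon}}\bigr)\tfrac{\epsilon}{C}(1+\sqrt{C})^2=(1+\sqrt{C})^2\bigl(1+\tfrac{\epsilon}{\tilde{\epsilon}}\cdot\tfrac{1}{\kappa C}\bigr)$, which is exactly the reciprocal of the claimed lower bound \eqref{eq:improvedBound}. The only genuine subtlety — and where a naive argument loses a constant — is the counting step: bounding each $t_\beta$ by something like $2|S_\beta|/(\kappa d)$ for the large subsets would introduce a spurious factor of two in the $\epsilon/\tilde{\epsilon}$ term, so it is essential to track only the excess $t_\beta-1$ via $\lceil x\rceil-1<x$ and to feed in the global outcome bound $n\le d/\tilde{\epsilon}$. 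I do not expect any obstacle beyond this bookkeeping, since the norm inequality is inherited automatically and no further appeal to the Kadison--Singer result is needed at this stage.
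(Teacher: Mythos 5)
Your proof is correct and follows essentially the same route as the paper's: the paper classifies each $S_\beta$ by its ``type'' $j$ (meaning $j\kappa d\leq|S_\beta|<(j+1)\kappa d$), splits a type-$j$ subset into $j+1$ pieces, and bounds the excess pieces by $n/(\kappa d)$ using $n\tilde{\epsilon}\leq d$ -- exactly your $\lceil|S_\beta|/(\kappa d)\rceil$ counting in different notation, with the same inherited norm bound and the same final substitution $C=r\epsilon$. No gaps; the two arguments differ only in bookkeeping presentation.
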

\begin{proof}
    For $j=0,1,2,\ldots$ let us say that a subset $S_\beta$ is of type $j$ if $j\cdot \kappa d\leq |S_\beta| < (j+1)\cdot \kappa d$. Let $L_j$ denote the set of indices of subsets of type $j$, i.e., $\beta \in L_j$ if $S_\beta$ is of type $j$. Since the total number of subsets is $r$, we have
    \begin{equation}\label{eq:improvedBound-sum-of-Li}
    \sum\limits_{j \geq 0} |L_j| = r.
    \end{equation}
    On the other hand, by counting the total number of elements of $[n]$ in groups of each type we have
    \begin{equation}\label{eq:improvedBound-weighted-sum-of-Li}
    \sum\limits_{j\geq 0} j\cdot \kappa d |L_j| \leq n.
    \end{equation}
    Let us divide each subset $S_\beta$ of type $j$ into $j+1$ subsets of size at most $\kappa d$ and denote the resulting partition by $\mathcal{S}' = \lbrace S_{\beta}' \rbrace_{\beta=1}^{r'} $. Sums of elements of the POVM contained in subsets $S_{\beta}'$ satisfy the same bound \eqref{eq:improvedBound-normbBoundKS} as for the original partition $\mathcal{S}$, i.e., we have for all $\beta = 1, \ldots, r'$
    \begin{equation}\label{eq:improvedBound-normbBoundKS-for-S'}
        \left\| \sum_{i\in S_{\beta}'} M_i \right\| \leq \frac{1}{r}\left( 1 + \sqrt{r\epsilon}  \right)^2.
    \end{equation}
As each subset $S_\beta$ of type $j$ splits into $j+1$ subsets, we obtain the bound
    \begin{eqnarray}\label{eq:improvedBound-normbBoundKS-sum}
        \sum_{\beta=1}^{r'} \left\| \sum_{i\in S_{\beta}'} M_i \right\| \leq \sum_{j \geq 0} \sum_{\beta \in L_j} (j+1) \frac{1}{r}\left( 1 + \sqrt{r\epsilon}  \right)^2 \nonumber \\ = \frac{1}{r}\left( 1 + \sqrt{r\epsilon}  \right)^2 \sum_{j \geq 0} (j+1)|L_j|.
    \end{eqnarray}
    Recalling the definition \eqref{eq:explicitQsucc} of $q(\M,\mathcal{S}')$, by employing \eqref{eq:improvedBound-sum-of-Li} and \eqref{eq:improvedBound-weighted-sum-of-Li} we obtain
    \begin{equation}\label{eq:improvedBound-qsucc}
        q(\M,\mathcal{S}') \geq \frac{1}{(1 + \sqrt{r\epsilon})^2 (\frac{1}{\kappa }\cdot\frac{n}{dr} + 1)}.
    \end{equation}
    Since $(M_1, \ldots, M_n)$ is a POVM on $\C^d$ satisfying $\tr M_i \geq \tilde{\epsilon}$, we have $n\tilde{\epsilon} \leq d$, which translates to
    \begin{equation}\label{eq:improvedBound-qsucc-with-eps}
        q(\M,\mathcal{S}') \geq \frac{1}{(1 + \sqrt{C})^2 (\frac{\epsilon}{\kappa  C \tilde{\epsilon}} + 1)}\ .
    \end{equation}
\end{proof}

\subsection{Integration formula}
\noindent

In this section we provide a proof of Eq. \eqref{eq:averaging}, which is the missing step in the proof of Theorem \ref{th:dimentionDefitient}.

\begin{lem}\label{lem:auxINTEGR}
Let $\mathcal{E}$ be the ensemble of unitary matrices $U$ on $\C^d$ such that $U=\exp(i\varphi)\PP_W\oplus V$, where $\varphi$ is uniformly distributed on $[0,2\pi]$, $V$ is distributed according to the Haar measure on $U(W^\perp)$ and $V$, $\varphi$ are independent. Let $B$ be a linear operator on $\C^d$. Then we have
\begin{equation}
    \E_{U\sim \mathcal{E}} U B U^\dagger = \PP_W B \PP_W + \frac{\tr(B \PP_{W^\perp})}{|W^\perp|} \PP_{W^\perp}\ .
\end{equation}
\end{lem}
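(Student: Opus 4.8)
The plan is to work in the orthogonal decomposition $\C^d = W \oplus W^\perp$ and exploit the product structure of the ensemble $\mathcal{E}$: a sample $U$ acts as a pure phase $e^{i\varphi}$ on $W$ and as an independent Haar-random unitary $V$ on $W^\perp$. Writing $U = e^{i\varphi}\PP_W + V$, where $V = \PP_{W^\perp} V \PP_{W^\perp}$ is supported on $W^\perp$ (so that $V\PP_W = \PP_W V = 0$), I would first expand the conjugation into four terms,
\begin{equation}
UBU^\dagger = \PP_W B \PP_W + e^{i\varphi}\,\PP_W B V^\dagger + e^{-i\varphi}\, V B \PP_W + V B V^\dagger\ .
\end{equation}

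Next I would take the expectation term by term, using that $\varphi$ and $V$ are independent. The first term is already deterministic and contributes $\PP_W B \PP_W$. The two cross terms each carry a single factor $e^{\pm i\varphi}$, so averaging over the uniform phase produces $\E_\varphi e^{\pm i\varphi}=0$ and they drop out entirely, regardless of the $V$-average. This isolates the only nontrivial contribution, $\E_V[VBV^\dagger]$.

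For the last term I would use that $V$ annihilates $W$, so that $VBV^\dagger = V(\PP_{W^\perp} B \PP_{W^\perp})V^\dagger$ depends on $B$ only through its compression $X \coloneqq \PP_{W^\perp} B \PP_{W^\perp}$ to $W^\perp$. The key (and only nonelementary) ingredient is the standard Haar-twirl identity on the group $U(W^\perp)$: since the defining representation of $U(W^\perp)$ on $W^\perp$ is irreducible, Schur's lemma implies that the map $X \mapsto \E_{V} V X V^\dagger$ projects onto the commutant, i.e. onto multiples of $\PP_{W^\perp}$, and comparing traces (the twirl preserves the trace) fixes the constant, giving
\begin{equation}
\E_{V} V X V^\dagger = \frac{\tr(X)}{|W^\perp|}\,\PP_{W^\perp}\ .
\end{equation}

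Finally I would combine the pieces, using $\tr(X) = \tr(\PP_{W^\perp} B \PP_{W^\perp}) = \tr(B\PP_{W^\perp})$ by cyclicity of the trace together with $\PP_{W^\perp}^2 = \PP_{W^\perp}$, to recover the claimed formula. I do not anticipate a genuine obstacle here, as the computation is routine; the only points requiring care are the justification of the Schur-lemma twirl identity and the bookkeeping ensuring that $V$ acts as zero on $W$, so that the $W^\perp$-average reduces to a twirl over the smaller group $U(W^\perp)$ rather than the full $U(d)$.
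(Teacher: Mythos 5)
Your proposal is correct and follows essentially the same route as the paper's proof: expand the conjugation $UBU^\dagger$ into the four block terms, kill the off-diagonal blocks by averaging the uniform phase $e^{\pm i\varphi}$, and reduce the remaining term to a Haar twirl over $U(W^\perp)$ whose image is a multiple of $\PP_{W^\perp}$ with the constant fixed by trace preservation. Your explicit appeal to Schur's lemma merely spells out what the paper abbreviates as ``properties of Haar measure,'' so there is no substantive difference.
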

\begin{proof}
    We start with a decomposition
    \begin{equation}
        B=(\PP_W + \PP_{W^\perp}) B (\PP_W + \PP_{W^\perp}) = \PP_W B\  \PP_W +\PP_{W^\perp} B\ \PP_{W^\perp} +
        \PP_{W} B\ \PP_{W^\perp} +\PP_{W^\perp} B\ \PP_{W}\ .
    \end{equation}
    After conjugating $B$ by $U=\exp(i\varphi)\PP_W\oplus V$, where $V$ is supported on $W^\perp$, we get
    \begin{equation}
      U B U^\dagger =  \PP_W B\  \PP_W +V \PP_{W^\perp} B\ \PP_{W^\perp} V^{\dagger} +
        \exp(i \varphi)\PP_{W} B\ \PP_{W^\perp} V^\dagger + V \PP_{W^\perp} B\ \PP_{W} \exp(- i \varphi)\ . 
    \end{equation}
    Since $\varphi$ is distributed uniformly on $[0,2\pi]$ the terms involving $\exp(\pm i\varphi)$ average to $0$ and consequently
    \begin{equation}
          \E_{U\sim \mathcal{E}} U B U^\dagger = \PP_W B \PP_W + \E_{V\sim \mu(W^\perp)} V \PP_{W^\perp} B\ \PP_{W^\perp} V^{\dagger}\ ,
    \end{equation}
    where $\mu(W^\perp)$ denotes the Haar measure on the unitary group $U(W^\perp)$. To conclude the proof, note that $\PP_{W^\perp} B\ \PP_{W^\perp}$ is a linear operator on $W^\perp$ and from the properties of Haar measure its averaged version $\E_{V\sim \mu(W^\perp)} V \PP_{W^\perp} B\ \PP_{W^\perp} V^{\dagger}$  must equal $\lambda \PP_{W^\perp}$ ($\PP_{W^\perp}$ acts as identity on subspace $W^\perp$).     
    The proportionally constant $\lambda$ equals $\tr(B \PP_{W^\perp})/ |W^\perp|$, which follows form the fact that the map $X\mapsto \E_{V\sim \mu(W^\perp)} V X V^{\dagger}$ is trace preserving for operators $X$ supported on $W^\perp$.
\end{proof}

\subsection{From dimension-deficient Naimark to simulation under depolarizing noise}\label{app:dilationTOwhitenoise}

For $l\leq d/2$ the projectively simulable POVM $\F$ from Dimension-Deficient Naimark theorem (Theorem \ref{th:dimentionDefitient}) realizes perfectly a measurement of the form 
\begin{equation}
\N=( A_1 \ket{\psi_1}\bra{\psi_1},\ldots, A_l \ket{\psi_l}\bra{\psi_l}, \I_d - \sum_{j=1}^l A_j \ket{\psi_j}\bra{\psi_j})\ , \ \   
\end{equation}
on states supported on the subspace $W=\mathrm{span}_{\C}\{\ket{\psi_i}\ |\ i\in[l]\}$. The following Lemma, stated previously as Lemma \ref{lem:simnearlyPROJ},  shows that a slight modification of $\F$ gives a projective simulation of $\mathrm{\mathrm{\Phi}}_t(\N)$ for $t=\Theta(\min_{i\in[l]} A_i)$.

\begin{lem}[Simulation of nearly projective measurements under depolarizing noise -- full version of Lemma \ref{lem:simnearlyPROJ}]\label{lem:finPROJsimulation}
    Let $l\leq d/2$ and let $\N=(N_1,\ldots,N_{l+1})\in\POVM(\C^d)$ be a POVM of the form 
    \begin{equation}
         N_i = 
       \begin{cases}
        A_i \ket{\psi_i}\bra{\psi_i} & \text{if}\ i\in [l]  \\
        \I_d -  \sum_{j=1}^l A_j \ket{\psi_j}\bra{\psi_j}& \text{if}\   i=l+1 \\
        \end{cases}\ .
    \end{equation}
    Then $\mathrm{\mathrm{\Phi}}_{t}(\N)\in\SP(d)$ for
    \begin{equation}
    t\leq t_\N=\min_{i\in[l]} \frac{|W^\perp| A_i}{|W|(1-A_i)+|W^\perp|}\ ,
\end{equation}
where $W=\mathrm{span}_{\C}\{\ket{\psi_i}\ |\ i\in[l]\}$, $W^\perp$ is the orthogonal complement of $W$ in $\C^d$, and $|X|$ denotes the dimension of linear subspace $X\subset \C^d$. We remark that since $l \leq d/2$, we have $|W|\leq |W^\perp|$ and thus $t_\N=\Theta( \min_{i\in[l]} A_i)$.
\end{lem}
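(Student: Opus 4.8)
The plan is to leverage the dimension-deficient Naimark theorem (Theorem \ref{th:dimentionDefitient}), which already furnishes a projectively simulable POVM $\F \in \SP(d)$ whose first $l$ effects are $F_i = A_i\ket{\psi_i}\bra{\psi_i} + \frac{1-A_i}{|W^\perp|}\PP_{W^\perp}$. The key observation is that $\F$ differs from the noisy target $\mathrm{\Phi}_t(\N)$ only by a \emph{classical} (jointly diagonalizable) correction, so that $\mathrm{\Phi}_t(\N)$ can be written as a convex combination of $\F$ and a commutative POVM, both of which lie in the convex set $\SP(d)$. Since $\SP(d)$ is closed under randomization of projective measurements, membership of $\mathrm{\Phi}_t(\N)$ will follow immediately once the two pieces are identified.

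Concretely, for $i \in [l]$ the noisy effect is
\[
[\mathrm{\Phi}_t(\N)]_i = tA_i\ket{\psi_i}\bra{\psi_i} + (1-t)\tfrac{A_i}{d}\I_d,
\]
since $\tr N_i = A_i$. First I would fix the mixing weight to be exactly $t$, so that the rank-one parts $tA_i\ket{\psi_i}\bra{\psi_i}$ cancel against those of $t\F$, and set $\T' \coloneqq \frac{1}{1-t}\left(\mathrm{\Phi}_t(\N) - t\F\right)$ (assuming $t<1$; the boundary $t=1$ forces all $A_i=1$, whence $\N=\F\in\SP(d)$ directly). Because both $\mathrm{\Phi}_t(\N)$ and $\F$ sum to $\I_d$, the normalization $\sum_i T'_i = \I_d$ holds automatically, so it only remains to verify that each $T'_i$ is nonnegative; then $\mathrm{\Phi}_t(\N) = t\F + (1-t)\T'$ is a genuine convex combination of POVMs.

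The main computation -- and the step I expect to be the only delicate one -- is checking nonnegativity of the first $l$ effects, since it is exactly what produces the claimed threshold. A direct substitution gives
\[
(1-t)\,T'_i = (1-t)\tfrac{A_i}{d}\,\PP_W + \left[(1-t)\tfrac{A_i}{d} - t\tfrac{1-A_i}{|W^\perp|}\right]\PP_{W^\perp},
\]
so, as $\PP_W$ and $\PP_{W^\perp}$ are orthogonal projectors, $T'_i \geq 0$ is equivalent to the $\PP_{W^\perp}$-coefficient being nonnegative, i.e. $(1-t)\tfrac{A_i}{d} \geq t\tfrac{1-A_i}{|W^\perp|}$. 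Solving for $t$ and using $d = |W| + |W^\perp|$ to simplify the denominator yields precisely $t \leq \frac{|W^\perp|A_i}{|W|(1-A_i)+|W^\perp|}$, and taking the minimum over $i\in[l]$ gives the bound $t \leq t_\N$. For the remaining effect I would compute $(1-t)T'_{l+1} = [\mathrm{\Phi}_t(\N)]_{l+1} - tF_{l+1}$ and observe that the two $\sum_j A_j\ket{\psi_j}\bra{\psi_j}$ contributions cancel, leaving a nonnegative combination of $\PP_W$ and $\PP_{W^\perp}$ (here $d - \sum_j A_j > 0$ because $l \leq d/2$).

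It then remains to see that $\T' \in \SP(d)$. By the previous step every effect of $\T'$ has the form $a\,\PP_W + b\,\PP_{W^\perp}$ with $a,b \geq 0$; in particular all effects commute and are simultaneously diagonalized by the two-outcome projective measurement $\P = (\PP_W, \PP_{W^\perp})$. Hence $\T'$ is recovered from $\P$ by the classical post-processing with conditional probabilities $q_{i|W} = a_i$ and $q_{i|W^\perp} = b_i$ (these are bona fide distributions because $\sum_i T'_i = \I_d$ forces $\sum_i a_i = \sum_i b_i = 1$), so $\T' = \Q(\P) \in \SP(d)$. Combining with $\F \in \SP(d)$ from Theorem \ref{th:dimentionDefitient} and convexity of $\SP(d)$, the decomposition $\mathrm{\Phi}_t(\N) = t\F + (1-t)\T'$ concludes the proof. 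Everything outside the threshold computation is bookkeeping, and the final claim $t_\N = \Theta(\min_i A_i)$ follows at once from $|W| \leq |W^\perp|$.
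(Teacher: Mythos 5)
Your proposal is correct and follows essentially the same route as the paper's proof: both decompose $\mathrm{\Phi}_t(\N)$ as the convex combination $t\,\F+(1-t)\,\mathbf{C}$, where $\F$ is the POVM from the dimension-deficient Naimark theorem and $\mathbf{C}$ (your $\T'$) has effects of the form $a_i\,\PP_W+b_i\,\PP_{W^\perp}$ realizable by post-processing the projective measurement $(\PP_W,\PP_{W^\perp})$, with the threshold $t_\N$ emerging in both cases from nonnegativity of the $\PP_{W^\perp}$ coefficient. Your verification of the $(l+1)$-th effect and of the edge case $t=1$ is sound, so no gaps remain.
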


\begin{proof}

Recall that the projectively simulable $l+1$-outcome measurement  $\F$ from Theorem \ref{th:dimentionDefitient} has effects of the form
\begin{equation}\label{eq:psSUB}
     F_i= A_i \ketbra{\psi_i}{\psi_i} + \frac{1-A_i}{|W^\perp|} \PP_{W^\perp}\ ,
\end{equation}
for $i\in[l]$. On the other hand, the dephased version of $\N$ has effects
\begin{equation}\label{eq:noisyPOVM}
\mathrm{\Phi}_\tau(N_i)= \tau\cdot A_i \ketbra{\psi_i}{\psi_i} + (1-\tau) \cdot \frac{A_i}{d}\left(\PP_W + \PP_{W^\perp}\right)\ ,
\end{equation}
for $i\in[l]$. Let $\mathbf{C}$ be a $l+1$-outcome POVM with effects satisfying
\begin{equation}\label{eq:classPOVM}
    C_i = a_i \PP_W + b_i \PP_{W^\perp}\ ,
\end{equation}
for $a_i, b_i\geq 0$ and $i\in[l]$. Clearly, $\mathbf{C}$ defined in this way is projectively simulable\footnote{For example because $\mathbf{C}$ can be realized by a post-processing of a dichotomic measurement $(\PP_W,\PP_{W^{\perp}})$}. The form of equations describing $F_i$,$\mathrm{\Phi}_\tau(N_i)$ and $C_i$ suggest to consider $\L= \tau \F +(1-\tau) \mathbf{C}$, or appropriate choice of $a_i$, $b_i$, as a projectively simulable measurement realizing $\mathrm{\Phi}_\tau(\N)$. Imposing  $ \tau\ F_i +(1-\tau)\ C_i = \mathrm{\Phi}_\tau(N_i)$ for $i\in[l]$ yields\footnote{The condition $\tau F_{l+1} +(1-\tau) C_{l+1} = \mathrm{\Phi}_\tau(N_{l+1})$ is automatically satisfied provided the first $l$ equations hold.} the condition
\begin{equation}
    \tau\cdot \frac{1-A_i}{|W^\perp|} \PP_{W^\perp} + (1-\tau)\cdot \left(a_i \PP_W + b_i \PP_{W^\perp} \right)= (1-\tau)\cdot \frac{A_i}{|W|+|W^\perp|} \left( \PP_W + \PP_{W^\perp} \right)\ .
\end{equation}
It follows that for $i\in[l]$
\begin{equation}
    a_i = \frac{A_i}{|W|+|W^\perp|}\ \ ,\ \ b_i = \frac{A_i}{|W|+|W^\perp|}+\frac{\tau}{1-\tau} \frac{A_i-1}{|W^\perp|}\ .
\end{equation}
The above coefficients have to satisfy $a_i,b_i\geq 0$ and
\begin{equation}
    \sum_{i=1}^l a_i \leq 1\ \ ,\ \ \sum_{i=1}^l b_i \leq 1
\end{equation}
for $\mathbf{C}$ to be a valid POVM. It can be shown that the largest $\tau$ for which above equations hold equals
\begin{equation}
    \tau_\ast=t_\N= \min_{i\in[l]} \frac{|W^\perp| A_i}{|W|(1-A_i)+|W^\perp|}\ .
\end{equation}
The fact that $\mathrm{\Phi}_t(\N)\in\SP(d)$ for $t\leq t_\N$ follows either from realizing the above simulation strategy for  $\tau\leq t_\N$ or from simply mixing $\mathrm{\Phi}_{t_\N}(\N)$ with $\mathrm{\Phi}_0(\N)\in\SP(d)$ with appropriate weights. 
\end{proof}

\section{Proofs of statements concerning applications of main results}\label{app:Applications}

In this part we present missing proofs of Propositions given in Section \ref{sec:mainRES}.

\setcounter{tempcounter}{\value{prop}} 
\setcounter{prop}{1} 

\begin{prop}[Restatement]
Let $\mathcal{O}=\lbrace O_i\rbrace_{i=1}^L$ be a collection of observables on $\C^d$ satisfying $\tr(O_i)=0$, for $i\in[L]$. Let $\M=(M_1,\ldots,M_n)$ be a POVM that can be used to estimate expectation values of observables $O\in\mathcal{O}$. Let $\hat{e}_O$ be an unbiased estimator of the expectation value of $O$, i.e. a real-valued function $\hat{e}_O:[n]\rightarrow \mathbb{R}$ satisfying
\begin{equation}\label{eq:unbiased}
    \mathbb{E}\hat{e}_O = \sum_{i=1}^n \hat{e}_O(i) \tr(\rho M_i) = \tr(\rho O)\ ,
\end{equation}
for every state $\rho$. Let $\Delta_\M(O,\rho)=\mathbb{E}\hat{e}_{O}^2$ be the upper bound on the variance of $\hat{e}_O$. Then, for $c=0.02$  from Result \ref{res:projSIM}, a projectively simulable POVM $\N=\mathrm{\Phi}_c (\M)\in \SP(d)$ 
can be used to estimate expectation values of observables $O\in\mathcal{O}$ via estimators $\hat{e}'_{O}(k)\coloneqq\frac{1}{c} \hat{e}_O(k) $. Furthermore we have $ \max_\rho \Delta_\N(O,\rho)\leq \frac{1} {c^2} \max_\rho\Delta_\M(O,\rho)$.
\end{prop}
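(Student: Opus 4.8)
The plan is to verify two things in turn: that the rescaled estimator $\hat{e}'_O = \frac{1}{c}\hat{e}_O$ remains unbiased when used with the noisy POVM $\N = \mathrm{\Phi}_c(\M)$, and then the variance inequality. Throughout I would work with the explicit form of the depolarized effects, $N_i = c M_i + (1-c)\frac{\tr(M_i)}{d}\I_d$, which is all that is needed about $\N$.

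First I would check unbiasedness. Substituting $\hat{e}'_O=\frac1c\hat e_O$ and the effects $N_i$, and using $\tr(\rho)=1$, one finds $\sum_i \hat{e}'_O(i)\tr(\rho N_i) = \sum_i \hat{e}_O(i)\tr(\rho M_i) + \frac{1-c}{c}\cdot\frac{1}{d}\Big(\sum_i \hat{e}_O(i)\tr(M_i)\Big)$. The first term equals $\tr(\rho O)$ by the assumed unbiasedness \eqref{eq:unbiased} of $\hat{e}_O$. The key point is that the correction term vanishes: evaluating \eqref{eq:unbiased} at the maximally mixed state $\rho=\I_d/d$ gives $\frac{1}{d}\sum_i \hat{e}_O(i)\tr(M_i) = \frac{\tr(O)}{d} = 0$, precisely because $O$ is traceless. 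This is the only place the hypothesis $\tr(O)=0$ is used, and it is essential — without it the white noise would introduce a systematic bias proportional to $\tr(O)$. I would flag this as the conceptual crux of the proposition; the rest is direct computation.

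Next I would establish the variance bound, relying on the observation that $\Delta_\M(O,\rho)=\sum_i \hat{e}_O(i)^2\tr(\rho M_i)$ is linear in the state $\rho$. Computing the second moment of the new estimator, $\Delta_\N(O,\rho)=\frac{1}{c^2}\sum_i \hat{e}_O(i)^2\tr(\rho N_i)$, and inserting $N_i = c M_i + (1-c)\tr(M_i)\I_d/d$, the white-noise part contributes exactly $\Delta_\M$ evaluated at the maximally mixed state. This yields the clean decomposition $\Delta_\N(O,\rho) = \frac{1}{c^2}\big[c\,\Delta_\M(O,\rho) + (1-c)\,\Delta_\M(O,\I_d/d)\big]$.

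Finally I would conclude by bounding the two pieces. Since $\I_d/d$ is a particular state, $\Delta_\M(O,\I_d/d)\leq \max_\rho\Delta_\M(O,\rho)$, so taking the maximum over $\rho$ in the displayed identity gives $c^2\max_\rho\Delta_\N(O,\rho)\leq c\max_\rho\Delta_\M(O,\rho)+(1-c)\max_\rho\Delta_\M(O,\rho)=\max_\rho\Delta_\M(O,\rho)$, which is precisely $\max_\rho \Delta_\N(O,\rho)\leq \frac{1}{c^2}\max_\rho\Delta_\M(O,\rho)$. No step here is technically hard; the whole argument is a short computation whose only subtlety is the preservation of unbiasedness under depolarization, secured by the traceless assumption via the maximally-mixed-state evaluation.
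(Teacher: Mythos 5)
Your proof is correct and follows essentially the same route as the paper's: the paper phrases the unbiasedness step via self-duality of the depolarizing channel, $\tr(\rho\,\mathrm{\Phi}_c(M_i))=\tr(\mathrm{\Phi}_c(\rho)M_i)$, with tracelessness entering through $\tr(\mathrm{\Phi}_c(\rho)O)=c\tr(\rho O)$, whereas you expand the effects $N_i$ explicitly and kill the noise term by evaluating \eqref{eq:unbiased} at $\I_d/d$ — an equivalent use of $\tr(O)=0$. The variance decomposition $\Delta_\N(O,\rho)=\frac{1}{c^2}\left[c\,\Delta_\M(O,\rho)+(1-c)\,\Delta_\M\left(O,\I_d/d\right)\right]$ and the final maximization are identical to the paper's argument.
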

\setcounter{prop}{\value{tempcounter}} 
\begin{proof}
    We first prove that $\hat{e}'_O(i)=(1/c)\hat{e}_O(i)$ is an unbiased estimator of expectation value of $O$ once the measurement outcomes are collected via POVM $\N=\mathrm{\Phi}_c(\M)$: 
\begin{equation}
     \mathbb{E}\hat{e}'_O  =   \sum_{i=1}^n \hat{e}'_O(i)  \tr(\rho \mathrm{\Phi}_c(M_i)) = 
        \frac{1}{c}\sum_{i=1}^n \hat{e}_O(i) \tr(\mathrm{\Phi}_c(\rho) M_i) = \frac{1}{c} \tr(\mathrm{\Phi}_c(\rho) O)= \tr(\rho O)\ ,
     \end{equation}
     where in the second equality we used $\tr(\rho\mathrm{\Phi}_c(M_i))=\tr(\mathrm{\Phi}_c(\rho)M_i)$, in the  third equality we used \eqref{eq:unbiased}, and in the fourth equality we used the definition of depolarizing channel and $\tr(O)=0$.

     The proof that $ \max_\rho \Delta_\N(O,\rho)\leq \frac{1} {c^2} \max_\rho\Delta_\M(O,\rho)$ is similar:

\begin{equation}
    \Delta_\N(O,\rho)  =   \sum_{i=1}^n \hat{e}'_O(i)^2  \tr(\rho \mathrm{\Phi}_c(M_i)) = 
        \frac{1}{c^2}\sum_{i=1}^n \hat{e}_O(i)^2 \tr(\mathrm{\Phi}_c(\rho) M_i) =\frac{1}{c^2}\left(c \Delta_\M(O,\rho) +(1-c) \Delta_\M(O,\frac{\I_d}{d}) \right)\ . 
     \end{equation}
 Finally, we obtain the desired inequality by observing  $\Delta_\M(O,\frac{\I_d}{d}) )\leq \max_\rho \Delta_\M(O,\rho)$, inserting it into above equality and again optimizing both sides over $\rho$.
\end{proof}

\setcounter{tempcounter}{\value{prop}} 
\setcounter{prop}{3} 

\begin{prop}[Restatement]
     Let $c=0.02$ be the constant appearing in Result \ref{res:projSIM}. Then
 \begin{itemize}
     \item[(i)] States $\rho_{iso}(t)$ are POVM-local for $t\leq c\ t^{PM}_{iso}\approx \frac{c \log(d)}{d}$.
    \item[(ii)] For any pure state $\ket{\psi}$ on $\C^d \ot \C^d$ states $\rho_{\psi}(t)$ are POVM-local for $t\leq  \frac{c\ t^{PM}_{iso}}{(1-c\ t^{PM}_{iso})(d-1)+1} \approx \frac{c \log(d)}{d^2}$.
 \end{itemize}
\end{prop}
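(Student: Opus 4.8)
The plan is to derive both parts from the known projective-measurement (PM) local models of \cite{Almeida2007,Wiseman2007}, ``absorbing'' the depolarizing noise supplied by Result \ref{res:projSIM} into the measurements on one side of the bipartite state. The structural fact underlying everything is that depolarizing one half of an isotropic state again produces an isotropic state with rescaled visibility: using the Choi--Jamio\l{}kowski correspondence (or a direct computation based on $\mathrm{\Phi}_s(\I_d)=\I_d$) one verifies
\begin{equation}
    (\mathrm{\Phi}_s\otimes\mathrm{id})(\rho_{iso}(t)) = \rho_{iso}(st)\ ,
\end{equation}
and in particular $(\mathrm{\Phi}_c\otimes\mathrm{id})(\rho_{iso}(t^{PM}_{iso}))=\rho_{iso}(c\,t^{PM}_{iso})$.

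For part (i) I would fix arbitrary POVMs $\M=(M_a)$ on Alice's side and $\N=(N_b)$ on Bob's side and use the self-duality of the depolarizing channel together with the displayed identity to push the noise from the state onto Alice's effects,
\begin{equation}
    \tr\!\left(\rho_{iso}(c\,t^{PM}_{iso})\, M_a\otimes N_b\right) = \tr\!\left(\rho_{iso}(t^{PM}_{iso})\, \mathrm{\Phi}_c(M_a)\otimes N_b\right)\ .
\end{equation}
By Result \ref{res:projSIM} we have $\mathrm{\Phi}_c(\M)\in\SP(\C^d)$, so there are a probability distribution $\{\pi_\gamma\}$, projective measurements $\{P^{(\gamma)}_k\}$ and post-processings $q^{(\gamma)}_{a|k}$ with $\mathrm{\Phi}_c(M_a)=\sum_\gamma \pi_\gamma\sum_k q^{(\gamma)}_{a|k}P^{(\gamma)}_k$. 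Substituting this expansion rewrites the correlations as a classical mixture (over $\gamma$ and $k$, with the stochastic relabelling $q^{(\gamma)}_{a|k}$) of terms $\tr(\rho_{iso}(t^{PM}_{iso})\,P^{(\gamma)}_k\otimes N_b)$, in which Alice now measures a \emph{projective} measurement while Bob still measures an arbitrary POVM.

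The point that closes part (i) is that the known local model for $\rho_{iso}(t^{PM}_{iso})$ is of Werner type: the hidden variable is a shared random pure state, Alice's response function is the projective-measurement-specific rule responsible for the threshold $t^{PM}_{iso}$, while Bob's response function is simply the Born rule evaluated on the shared pure state. Since the Born rule returns a legitimate probability distribution for \emph{every} POVM, this model is already local for (projective on Alice)$\,\times\,$(arbitrary POVM on Bob). Hence each term $\tr(\rho_{iso}(t^{PM}_{iso})\,P^{(\gamma)}_k\otimes N_b)$ has a local decomposition, and absorbing the classical data $\gamma,k$ and the post-processing $q^{(\gamma)}_{a|k}$ into the hidden variable and Alice's response function yields a genuine POVM-local model for $\rho_{iso}(c\,t^{PM}_{iso})$, establishing (i).

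For part (ii) I would invoke the reduction of \cite{Almeida2007} turning a locality threshold $\tau$ for isotropic states into the threshold $\tau/\!\left((1-\tau)(d-1)+1\right)$ for an arbitrary noisy pure state $\rho_\psi(t)$. This reduction is implemented by an invertible local filter $F_A\otimes\I_d$ mapping the maximally entangled state to $\ket{\psi}$: measuring $\M\otimes\N$ on $\rho_\psi(t)$ equals measuring the transformed (still valid) POVM $(F_A^\dagger M_a F_A)\otimes\N$ on a renormalized isotropic state, so a local model for the latter transfers to one for the former with the threshold transforming exactly as stated; crucially the filter touches only Alice's effects, leaving Bob's POVM intact, so the argument is indifferent to whether the isotropic-state model is PM- or POVM-local. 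Running it with $\tau=c\,t^{PM}_{iso}$ (the POVM threshold from part (i)) gives (ii). The main obstacle is precisely the assertion that one-sided noise absorption suffices: taming only Alice's measurement leaves Bob with a genuine POVM, and the whole scheme works only because the Werner/Almeida model answers Bob via a Born-rule response. Making this structural feature explicit, rather than treating the cited result as a black-box PM-locality statement, is the heart of the argument.
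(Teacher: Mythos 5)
Your part (i) is correct and is, up to exchanging the roles of Alice and Bob, exactly the paper's argument: one-sided depolarization rescales the isotropic visibility, self-duality of $\mathrm{\Phi}_c$ pushes the noise from the state onto the effects of the projectively-restricted side, Result \ref{res:projSIM} decomposes the noisy POVM into projective measurements with post-processing, and the Almeida--Wiseman model --- whose ``trusted'' side answers by the Born rule on the shared hidden pure state --- already covers (projective)$\,\times\,$(arbitrary POVM) correlations. The paper writes this with the Born-rule response $\xi_A(a|\M,\lambda)=\tr(M_a^T\ketbra{\lambda}{\lambda})$ on Alice and absorbs the noise on Bob's side; the swap symmetry of $\rho_{iso}(t)$ makes your mirrored convention equivalent.

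Part (ii), however, has a genuine gap, located precisely where you declare the ``heart of the argument.'' First, the mechanism you ascribe to the reduction of \cite{Almeida2007} is not correct as stated: a single filter $F_A\ot\I_d$ applied to $\rho_{iso}(\tau)$ produces $\tau\ketbra{\psi}{\psi}+(1-\tau)\,\sigma_A\ot\I_d/d$ with \emph{colored} noise $\sigma_A=F_AF_A^\dagger/\tr(F_AF_A^\dagger)$ on Alice's side, not a renormalized $\rho_\psi(t)$; moreover $\sum_a F_A^\dagger M_a F_A=F_A^\dagger F_A\neq \I_d$, so the transformed collection is not a ``still valid POVM'' without rescaling and a completion (flag) effect. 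The threshold $\tau/\left((1-\tau)(d-1)+1\right)$ does not fall out of the filter identity; it is generated by the subsequent whitening/correction step (in \cite{Almeida2007}, Nielsen's LOCC protocol with correction unitaries on the opposite side). This is why the paper invokes the reduction only under two explicit structural conditions: (a) a quantum, i.e.\ Born-rule and hence linear, response function on the side carrying the filtering, and (b) unitary invariance of the measurement class on the other side.

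Second, and more seriously, your orientation of the filter violates condition (a) \emph{for your own model}. In your part-(i) construction the Born-rule response sits on Bob's side, while Alice's response is the decomposition-based mixture over projective measurements supplied by Result \ref{res:projSIM}; it is not linear in the effects, and its response to the flag effect $\I_d-\hat F_A^\dagger \hat F_A$ generically depends on the whole POVM being measured. Filtering on Alice's side then forces a reweighting of the hidden-variable density by $\xi_A(\neg\,\mathrm{flag}\,|\,\M,\lambda)$, a setting-dependent factor --- exactly the detection-loophole obstruction; only the aggregate flag probability is setting-independent, not the pointwise-in-$\lambda$ one, so mere POVM-locality of the part-(i) model does not license the step. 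Hence your claim that the argument ``is indifferent to whether the isotropic-state model is PM- or POVM-local'' is false: the filter (or Nielsen protocol) must act on the quantum-response side, which in your convention is Bob, not Alice --- note the internal tension with your own closing remark that the scheme works because Bob is answered via a Born rule. The fix is straightforward: run the reduction on Bob's side, or, as the paper does, set up part (i) with the Born-rule response on Alice and verify conditions (a) and (b) before citing the Almeida implication with $\tau=c\,t^{PM}_{iso}$. As written, though, the step does not go through.
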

\setcounter{prop}{\value{tempcounter}}
\begin{proof}
    We follow a general strategy of using local models for projective measurements to construct local models for POVMs that work for  more noisy states (following the general approach outlined in \cite{Bowels2015,Oszmaniec17,Hirsch2017betterlocalhidden}). 
    
    Specifically, assume that we have a bipartite state $\rho$ on $\H_A \otimes \H_B$ which is local for all POVMs $\M\in\POVM(\H_A)$ on Alice's side and all projective measurements $\P\in\PP(\H_B)$ on Bob's side. Let $c$ be such that $\mathrm{\Phi}_c(\N)\in\SP(\H_B)$ for all $\N\in\POVM(\H_B)$. Then the state\footnote{We denote by $\mathrm{id}$ the identity channel on Alice side.} $(\mathrm{id}{\ot}\mathrm{\Phi}_c) (\rho)$ is POVM-local. To realize this, observe that for all quantum measurements $\M\in\POVM(\H_A)$, $\N\in\POVM(\H_B)$ and all outcomes $a,b$ we have
\begin{equation}
    \tr\left((\mathrm{id}{\ot}\mathrm{\Phi}_c)(\rho) M_a \otimes N_b\right) =   \tr\left( \rho M_a \otimes \mathrm{\Phi}_c(N_b)\right) = \sum_\beta p_\beta \tr\left( \rho M_a \otimes P^{(\beta)}_b\right) \ ,
\end{equation}
where in the second equality we used $\mathrm{\Phi}_c(\N)\in\SP(\H_B)$ to decompose $\mathrm{\Phi}_c(\N)$ as a convex combination of projective measurements  $\mathrm{\Phi}_c(\N)=\sum_\beta p_\beta \P^{(\beta)}$ (note that both $\{p_\beta\}$ and $\{\P^{(\beta)}\}$ in general depend on $\N$). 
Consequently, arbitrary  correlations on the state $(\mathrm{id}{\ot}\mathrm{\Phi}_c)(\rho)$ can be explained by a local model -- they can be decomposed as a convex mixture of correlations obtained on $\rho$ where Alice performs a general POVM and Bob performs a projective measurement, but for these measurements $\rho$ is already local. 

To prove (i) we recall that the models for projective measurements developed for $\rho_{iso}(t)$ in \cite{Almeida2007,Wiseman2007} have the desired property  -- they in fact allow arbitrary POVMs on Alice side while restricting to projective measurements on Bob side. Specifically, the  hidden variable space $\Lambda$ considered therein  coincides with the space of pure states on $\C^d$ and Alice's response function is given by \begin{equation}\label{eq:responseAlmeida}
    \xi_A(a|\M,\lambda)=\tr(M_{a}^T \ketbra{\lambda}{\lambda})\ . 
\end{equation}
 Because of this we can apply the above logic to claim that for every $t\leq t^{PM}_{iso}$ the state $(\mathrm{id}\otimes \mathrm{\Phi}_c) (\rho_{iso}(t))$ is POVM-local. The claim (i) follows from the simple identity $(\mathrm{id}\otimes \mathrm{\Phi}_c) (\rho_{iso}(t))=\rho_{iso}(c t)$.

The proof of (ii) follows the analogous steps as the construction from \cite{Almeida2007}. Therein, the authors adapted Nielsen's deterministic LOCC conversion protocol (that  deterministically maps $\ket{\phi_d}$ into an arbitrary bipartite state $\ket{\psi}$ on $\C^d\otimes \C^d$) to hidden variable models for $\rho_{iso}(t)$ such that:
\begin{itemize}
    \item[(a)] They have ``quantum mechanical'' expectation response function on Alice side (cf. \eqref{eq:responseAlmeida});
\item[(b)] The set of measurements $\N$ on Bob's side for which the model works (denoted by $\mathcal{B}$) is invariant under local unitary operations applied on his part of the system. 
\end{itemize}

\noindent
The net result of the analysis from \cite{Almeida2007} is the following implication: if $\rho_{iso}(\tau)$ has a local hidden variable model satisfying (a) and (b), then for every state $\rho_\psi(t)$ with 
\begin{equation}\label{eq:Nielsbound}
    t\leq  \frac{\tau}{(1-\tau)(d-1)+1}
\end{equation}
there exists a local model valid for all POVMs on Alice's side and measurements $\N\in\mathcal{B}$ on Bob's side.   Clearly, the model constructed in  (i) satisfies condition (a) and (b) for $\mathcal{B}=\POVM(\H_B)$. Therefore we can conclude (ii) from \eqref{eq:Nielsbound} by  $\tau = c t^{PM}_{iso}$.

\end{proof}

\section{Random partitions give quite good simulation via postselection}\label{app:randomPART}

In this part we prove that by randomly choosing the partition $\mathcal{S}=\{S_\beta\}_{\beta=1}^r$, in conjunction with the simulation protocol from Theorem \ref{th:OldProtocol}, we can simulate arbitrary POVMs on $\C^d$ by POVMs with $\Theta(d)$ outcomes and success probability $q$ scaling like $1/\log(d)$. Due to the structure of the simulation protocol we also get that a random partition allows to simulate an arbitrary POVM on $\C^d$ by POVMs requiring only a single auxiliary qubit, with success probability still at least $\Theta(1/\log(d))$.

We start with a technical result which ensures that for every extremal POVM $\M$ on $\C^d$ with rank-one effects one can define a \emph{fine-grained} version of it, $\M'$, that has still $O(d^2)$ effects but their operator norm is at most $\frac{1}{d}$.

\begin{lem}\label{lem:fineGrainingExtremal} Let $\M=(M_1,\ldots, M_n)\in\POVM(\C^d)$ be an extremal POVM on $\C^d$ with rank-one effects. Then there exists a POVM $\M'$ and a stochastic map $\Q$ such that $\M=\Q(\M')$ and the following holds:
\begin{itemize}
        \item[(i)] The number of outcomes $n'$ of $\M'$ satisfies $ n'\leq 2d^2$.
    \item[(ii)] For every $j=1.\ldots, n'$ we have $M'_j = \alpha'_j \ketbra{\psi_j}{\psi_j}$, with $\alpha_j\leq \frac{1}{d}$. 
\end{itemize}
\end{lem}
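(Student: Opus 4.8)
The plan is to fine-grain each effect of $\M$ into several identical copies of smaller magnitude, and then to use the extremality of $\M$ together with the trace normalization to control how many outcomes this produces. The whole argument is elementary once the right quantities are isolated.

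First I would invoke the structural result for extremal measurements: since $\M\in\POVM(\C^d)$ is extremal with rank-one effects, it has at most $n\le d^2$ nonzero effects $M_i=\alpha_i\ketbra{\psi_i}{\psi_i}$ with $\alpha_i>0$ (see \cite{DAriano2005}). Taking the trace of the normalization condition $\sum_{i=1}^n M_i=\I_d$ and using $\tr\ketbra{\psi_i}{\psi_i}=1$ yields the key identity $\sum_{i=1}^n \alpha_i=d$, which is precisely the quantity that will limit the outcome count.

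Next, for each $i$ I would split the effect $M_i$ into $k_i\coloneqq\lceil d\alpha_i\rceil$ identical pieces, each equal to $\frac{\alpha_i}{k_i}\ketbra{\psi_i}{\psi_i}$. By construction $\frac{\alpha_i}{k_i}\le \frac{\alpha_i}{d\alpha_i}=\frac1d$ (note $\alpha_i>0$, so $k_i\ge 1$ and the piece equals $M_i$ itself when $d\alpha_i\le 1$), hence every new effect has operator norm at most $1/d$, which is property (ii). The POVM $\M'$ so obtained is a fine-graining of $\M$: the coarse-graining map $\Q$ that deterministically relabels all $k_i$ pieces of $M_i$ back to the single outcome $i$ is a (deterministic, hence stochastic) map satisfying $\M=\Q(\M')$.

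Finally, for (i) I would bound the total number of outcomes $n'=\sum_{i=1}^n k_i$. This counting step is the only place requiring care, and it is where a naive argument could fail: splitting could in principle generate uncontrollably many tiny pieces. The trace identity is what tames this. Using $\lceil x\rceil\le x+1$, the identity $\sum_i\alpha_i=d$, and the extremality bound $n\le d^2$, I obtain $n'=\sum_i\lceil d\alpha_i\rceil\le \sum_i(d\alpha_i+1)=d\sum_i\alpha_i+n=d^2+n\le 2d^2$. Intuitively, $\sum_i \alpha_i = d$ controls the ``bulk'' contribution $d\sum_i\alpha_i=d^2$, while extremality controls the ``$+1$ per effect'' contribution $n\le d^2$; together they give the clean bound $n'\le 2d^2$, completing the proof.
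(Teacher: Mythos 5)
Your proof is correct and follows essentially the same route as the paper: the same $\lceil d\alpha_i\rceil$-fold splitting of each effect, with the bound on the outcome count coming from the same two ingredients, the trace identity $\sum_i\alpha_i=d$ and the extremality bound $n\le d^2$. The only difference is cosmetic bookkeeping---you bound $n'=\sum_i\lceil d\alpha_i\rceil$ directly via $\lceil x\rceil\le x+1$, whereas the paper groups the $\alpha_i$ by magnitude into classes $I_a$ before counting; both yield $n'\le d^2+n\le 2d^2$.
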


\begin{proof}
 First we note that every extremal POVM on $\C^d$ with rank-one effects has at most $n=d^2$ nonzero effects $M_i= \alpha_i \ketbra{\psi_i}{\psi_i}$  (cf. \cite{DAriano2005}). The proof  is analogous to that of Lemma \ref{lem:flatPOVM} and amounts to splitting every $\alpha_i$ (and the corresponding outcome) into $\ceil{d\alpha_i}$ parts so that the resulting $\alpha'_j$ have each magnitude smaller than $\frac{1}{d}$. 
 
 We now count by how much the number of outcomes can grow in the course of the above process. To this end we define for $a=0,1,2,\ldots$
 \begin{equation}
     I_a\coloneqq \left\{i\ \middle|  \ \frac{a-1}{d}\leq \alpha_i < \frac{a}{d} \right\}\ .
 \end{equation}
It is easy to observe that the number of outcomes after the division process equals (compare the similar reasoning in the proof of Lemma \ref{lem:improvedBound}):
\begin{equation}\label{eq:totSUMprim}
    n'= \sum_{a} a |I_a|\ , 
\end{equation}
where $|I_a|$ denotes the cardinality of $I_a$.  We also have 
\begin{equation} \label{eq:totSUM}
    n= \sum_a |I_a|\ ,
\end{equation}
and furthermore
\begin{eqnarray}\label{eq:finSIMPLE}
    d= \sum_{i=1}^n \alpha_i = \sum_a \sum_{i\in I_a} \alpha_i \geq \sum_a \sum_{i\in I_a}  \frac{a-1}{d} = \frac{1}{d} \left(   \sum_a a |I_a| - \sum_a |I_a| \right) .
\end{eqnarray}
Combining Equations \eqref{eq:totSUM}, \eqref{eq:finSIMPLE} and using the bound $n\leq d^2$, we obtain
\begin{equation}
    n'\leq d^2 +n \leq 2d^2\ .
\end{equation}

\end{proof}

The following theorem shows that for fine-grainings $\M'$ considered in the lemma above a random choice of partition $\mathcal{S}=\{S_\beta\}_{\beta=1}^r$ of size $r=Cd$ gives  success probability scaling like $1/\log(d)$ while size of each $|S_\beta|$ scales linearly with $d$.

\begin{thm}\label{th:PostSimulation-randomized}
    Let $\M = (M_1, \ldots, M_n)$ be a POVM on $\C^d$ with rank-one effects $M_i = \alpha_i \ketbra{\psi_i}{\psi_i}$ such that $n \leq 2d^2$ and $\alpha_i \leq \frac{1}{d}$. Fix $r = Cd$, $C>0$, and consider a random partition $\mathcal{S} = \{ S_\beta \}_{\beta=1}^{r}$ of $[n]$ into disjoint subsets obtained in the following way -- each element $i \in [n]$ is assigned to a subset $S_\beta$ chosen uniformly at random (with probability $\frac{1}{r}$). Then with probability at least $1/4$  (for fixed POVM $\M$, over the choice of the random partition) the following holds:
    \begin{itemize}
        \item[(i)]
        We have \begin{equation}\label{eq:KSqsucc-randomized}
    q(\M,\mathcal{S})\geq  \frac{1}{3.44 + 2  C \log d}\ , 
        \end{equation}
        where $q(\M,\mathcal{S})$ is the success probability of the simulation protocol from Theorem \ref{th:OldProtocol}, given by Eq. \eqref{eq:explicitQsucc}.
        \item[(ii)] For all $\beta = 1, \ldots ,r$ we have
        \begin{equation}\label{eq:upCARD-randomized}
    |S_{\beta}| \leq \frac{2}{C}(1 + \delta) d\ ,
\end{equation}
where $\delta$ satisfies $\delta^3 \geq \frac{3C}{2d} \left( 1+ \log(4Cd) \right)$.
    \end{itemize}
\end{thm}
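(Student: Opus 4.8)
The plan is to treat the random partition as a family of independent uniform assignments and to control separately the two quantities that govern the statement: the block operator norms $\|\sum_{i\in S_\beta}M_i\|$, which determine the success probability through \eqref{eq:explicitQsucc}, and the block cardinalities $|S_\beta|$. For $i\in[n]$ and $\beta\in[r]$ put $\xi_{i,\beta}=\mathbf{1}[i\in S_\beta]$. Because each outcome is placed into one of the $r$ subsets uniformly and independently of the others, for every fixed $\beta$ the variables $\{\xi_{i,\beta}\}_{i=1}^n$ are independent and $\mathrm{Bernoulli}(1/r)$. Writing $X_\beta=\sum_{i\in S_\beta}M_i=\sum_{i}\xi_{i,\beta}M_i$, the formula \eqref{eq:explicitQsucc} reads $q(\M,\mathcal{S})=\big(\sum_{\beta=1}^r\|X_\beta\|\big)^{-1}$, so establishing \eqref{eq:KSqsucc-randomized} is equivalent to the bound $\sum_\beta\|X_\beta\|\leq 3.44+2C\log d$, while \eqref{eq:upCARD-randomized} is a uniform bound on $|S_\beta|=\sum_i\xi_{i,\beta}$.

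For part (i) I would not union-bound per-block tail estimates (these are weak here, since the ratio $\mu_{\max}/L=1/C$ is only constant), but instead bound the expectation of the whole sum. Each summand $\xi_{i,\beta}M_i$ is positive semidefinite with $\|\xi_{i,\beta}M_i\|\leq\|M_i\|=\alpha_i\leq 1/d=:L$, and, as $\M$ is a POVM, $\E X_\beta=\frac{1}{r}\sum_i M_i=\frac{1}{r}\I_d$, whence $\mu_{\max}:=\lambda_{\max}(\E X_\beta)=\frac{1}{Cd}$. The expectation form of the matrix Chernoff inequality gives, for every $\theta>0$,
\begin{equation}
\E\,\lambda_{\max}(X_\beta)\leq\frac{e^\theta-1}{\theta}\,\mu_{\max}+\frac{1}{\theta}\,L\log d .
\end{equation}
Choosing $\theta=1$ yields $\E\|X_\beta\|\leq (e-1)\frac{1}{Cd}+\frac{\log d}{d}$, and summing over the $r=Cd$ blocks (by linearity of expectation) gives $\E\sum_\beta\|X_\beta\|\leq (e-1)+C\log d$. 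Markov's inequality then shows that with probability at least $1/2$ one has $\sum_\beta\|X_\beta\|\leq 2(e-1)+2C\log d\leq 3.44+2C\log d$, which is precisely \eqref{eq:KSqsucc-randomized}.

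For part (ii), each $|S_\beta|$ is a sum of independent $\mathrm{Bernoulli}(1/r)$ variables with mean $n/r\leq 2d^2/(Cd)=2d/C=:\mu$. The multiplicative Chernoff bound gives $\Pr[|S_\beta|\geq(1+\delta)\mu]\leq\exp(-\mu\,g(\delta))$ with $g(\delta)=(1+\delta)\log(1+\delta)-\delta$. Using the elementary estimate $g(\delta)\geq\delta^3/3$, which holds in the relevant regime of small $\delta$, and taking a union bound over the $r=Cd$ blocks, the probability that some $|S_\beta|$ exceeds $(1+\delta)\frac{2d}{C}$ is at most $Cd\,\exp(-\frac{2d}{3C}\delta^3)$. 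Imposing $\delta^3\geq\frac{3C}{2d}(1+\log(4Cd))$ makes the exponent at least $1+\log(4Cd)$, so this probability is below $\frac14$, giving \eqref{eq:upCARD-randomized}.

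Combining the two parts by a union bound, the probability that (i) or (ii) fails is at most $\frac12+\frac14=\frac34$, so both hold simultaneously with probability at least $1/4$, as claimed. I expect the main obstacle to be part (i): the per-block concentration is too poor to survive a union bound over the $\Theta(d)$ blocks together with the dimensional factor, so the decisive step is to pass to the expectation of $\sum_\beta\|X_\beta\|$ via the matrix Chernoff expectation inequality and apply Markov only once. The optimization $\theta=1$ is what produces the clean additive constant $2(e-1)\leq 3.44$ together with the $2C\log d$ term, and the deliberate use of the looser bound $g(\delta)\geq\delta^3/3$ (rather than the sharper $\delta^2$-type estimate) is what yields the clean closed-form constraint on $\delta$ in \eqref{eq:upCARD-randomized}.
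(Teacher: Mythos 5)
Your proof is correct, and for part (i) it takes a genuinely different (and arguably cleaner) route than the paper. The paper embeds the whole problem into one random matrix $Y=\sum_i X_i$ on the $r$-fold tensor product $(\C^d)^{\otimes r}$, using that $\|Y\|=\sum_\beta \bigl\| \sum_{i\in S_\beta} M_i \bigr\|$ for these commuting PSD blocks, and then applies the matrix Chernoff expectation bound once with ambient dimension $d^r$ and $\mu_{\max}=1$, obtaining $\E\|Y\|\leq 1.72+ \frac{1}{d}\log d^r=1.72+C\log d$. You instead apply the same Tropp expectation inequality per block $X_\beta$ (dimension $d$, $\mu_{\max}=\frac{1}{Cd}$, $L=\frac{1}{d}$) and sum the $r=Cd$ expectations by linearity, getting $(e-1)+C\log d$ --- numerically the same thing, since $\log d^r=r\log d$ exactly reproduces the sum of $r$ per-block logarithmic terms and $1.72\approx e-1$. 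Your approach avoids the tensor-product construction entirely, and your stated intuition is on target: a per-block \emph{tail} bound plus a union bound would indeed be too lossy here, but summing \emph{expectations} costs nothing, which is also precisely what the paper's tensor trick achieves implicitly. In both arguments a single application of Markov then gives $3.44+2C\log d$ with probability $1/2$. For part (ii) your argument is essentially the paper's (scalar Chernoff plus union bound over the $Cd$ blocks), but with two small refinements: you handle $n\leq 2d^2$ via the mean-upper-bound form of the Chernoff inequality where the paper pads to $n=2d^2$ ``without loss of generality,'' and your use of $g(\delta)=(1+\delta)\log(1+\delta)-\delta\geq \delta^3/3$ directly produces the $\delta^3$ condition appearing in the theorem statement, whereas the paper displays a $\delta^2$-exponent Chernoff bound and then imposes a $\delta^3$ condition, which only matches under the implicit assumption $\delta\leq 1$. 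Do note (as you do) that $g(\delta)\geq\delta^3/3$ fails for $\delta$ larger than roughly $1$, so your derivation, like the paper's, lives in the regime $\delta\lesssim 1$; this is the relevant regime since the constraint only requires $\delta=\Theta\bigl(((C/d)\log(Cd))^{1/3}\bigr)$. Your final union bound ($\frac12+\frac14\leq\frac34$ failure probability) matches the paper's.
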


\
\begin{proof}

Consider random matrices $X_i$, $i=1, \ldots,n$, sampled independently in the following way -- we take
\[
X_i = \I_d \otimes \ldots \otimes M_i \otimes \ldots \otimes \I_d,
\]
where the tensor product is $r$-fold and $M_i$ occurs at the $\beta$-th component, $\beta=1, \ldots, r$, with probability $\frac{1}{r}$. 
The random partition $\mathcal{S} = \{ S_\beta \}_{\beta=1}^{r}$ is defined by taking $i \in S_\beta$ if $X_i$ is nontrivial on the $\beta$-th component of the tensor product.

Let $Y = \sum\limits_{i=1}^{n} X_i$. Because of the tensor product form of each $X_i$ we have
\begin{equation}\label{eq:norm-y}
\norm{Y} = \sum\limits_{\beta=1}^{r} \left\| \sum\limits_{i \in S_\beta} M_{i} \right\|.
\end{equation}
We compute the expectation of $Y$:
\begin{equation}
    \E Y = \sum\limits_{i=1}^{n} \E X_i = \frac{1}{r} \sum\limits_{i=1}^{n} \left( M_i \otimes \ldots \otimes \I_d + \ldots + \I_d \otimes \ldots \otimes M_i \right) = \I_d \otimes \ldots \otimes \I_d\ ,
\end{equation}
where in the last equality we used that $M_i$ sum up to $\I_d$ on each coordinate of the tensor product.
By the assumption $\| M_i \| \leq \frac{1}{d}$ and multiplicativity of the operator norm for tensor products we have $\| X_i \| \leq \frac{1}{d}$ for each $i \in [n]$. We can thus employ \cite[(5.1.8)]{tropp} (with $L = \frac{1}{d}$, $\mu_{max}=1$ and substituting the dimension of the tensor product space $d^r$ for $d$), obtaining
\begin{equation}
\E \| Y \| \leq 1.72 + \frac{1}{d} \log d^r = 1.72 + C \log d\ .
\end{equation}
An application of Markov's inequality (to the random variable $\|Y\|$) gives that with probability at least $1/2$ we have $\| Y \| \leq 3.44 + 2C \log d$. Recalling Eq. \eqref{eq:explicitQsucc}, we obtain that the success probability of the protocol from Theorem \ref{th:OldProtocol} satisfies 
\begin{equation}\label{eq:randomized-success-prob}
    q(\M, \mathcal{S}) \geq \frac{1}{\| Y \|} \geq \frac{1}{3.44 + 2C \log d}
\end{equation}
with probability at least $1/2$ over the choice of $\mathcal{S}$.

To obtain the bound on the group sizes $|S_\beta|$, note that their joint distribution is multinomial with parameters $n$ and $(\frac{1}{r}, \ldots, \frac{1}{r})$. Thus $|S_\beta| = \frac{n}{r}$ for any $\beta=1,\ldots,r$ and a standard Chernoff bound shows that
\begin{equation}
    \Pp\left( |S_\beta| \geq (1+\delta) \frac{n}{r} \right) \leq e^{-\frac{\delta^2 n}{3r}} .
\end{equation}
A union bound over $\beta$ shows that $\max\limits_{\beta =1,\ldots, r} |S_\beta| \leq (1+\delta) \frac{n}{r}$ with probability at least $1-re^{-\frac{\delta^2 n}{3r}}$. Without loss of generality we can assume that actually $n=2d^2$. After inserting $r=Cd$ we obtain that for fixed $\delta$ satisfying, say, $\delta^3 \geq \frac{3C}{2d} \left( 1+ \log(4Cd) \right)$, with probability at least $3/4$ all groups satisfy $|S_\beta| \leq \frac{2}{C}(1 + \delta) d$.

Together with Eq. \eqref{eq:randomized-success-prob} this implies that the partition $\mathcal{S}$ satisfies both properties from the statement of the theorem with probability at least $1/4$.
\end{proof}

    We remark that in the above theorem we did not attempt to optimize the involved parameters so as to optimize $q(\M,\mathcal{S})$ and $\max_\beta |S_\beta|$.     The qualitative conclusion of Theorem \ref{th:PostSimulation-randomized} is the following.

    \begin{corr}\label{corLefficeint}
        Let $\M\in\POVM(\C^d)$ be a POVM on $\C^d$ satisfing assumptions of Theorem \ref{th:PostSimulation-randomized}. Then there exist a randomized algorithm running in time $\mathrm{poly}(d)$ which 
        \begin{itemize}
            \item[(i)] Finds a partition $\mathcal{S}=\{S_\beta\}_{\beta=1}^r$ such that success probability of protocol from Theorem \ref{th:OldProtocol} satisfies $q(\M,\mathcal{S})\geq \Theta(1/\log(d))$;
            \item[(ii)] Finds a description of POVMs $\N^{(\beta)}$ (used for the simulation of $\M$) with at most $\Theta(d)$ effects (out of which all but one have rank one).
        \end{itemize}  By further spitting subsets $S_\beta$ into sets of size at most $d$ (which can be done in time $\Theta(d)$), we obtain a partition  $\mathcal{S}'=\{S'_\gamma\}_{\gamma=1}^{r'}$  such that the success probability $q(\M,\mathcal{S}')\geq \Theta(1/\log(d))$ and each POVM $\N^{(\gamma)}$ (corresponding to an element of the subpartition $S'_\gamma$) can be realized by just a single auxiliary qubit. 
    \end{corr}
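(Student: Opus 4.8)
The statement is essentially a bookkeeping of the algorithmic and structural content already contained in Theorem \ref{th:PostSimulation-randomized}, so the plan is to extract from that theorem the three ingredients — efficient samplability, the claimed scaling of $q(\M,\mathcal{S})$, and the outcome structure of the simulating POVMs — and then handle the final subpartition step. First I would observe that drawing the random partition $\mathcal{S}=\{S_\beta\}_{\beta=1}^r$ with $r=Cd$ amounts to assigning each of the $n\leq 2d^2$ outcomes independently to one of $Cd$ labels, which costs $O(d^2)=\mathrm{poly}(d)$ time. Given $\mathcal{S}$, the effects of each $\N^{(\beta)}$ are read off directly from Eq. \eqref{eq:oldSIMPLeeffects}: computing $\lambda_\beta=\|\sum_{i\in S_\beta}M_i\|$ and forming the effects $\lambda_\beta^{-1}M_i$ (rank one, since the $M_i$ are rank one) together with the single remaining effect $\I_d-\lambda_\beta^{-1}\sum_{i\in S_\beta}M_i$ are all $\mathrm{poly}(d)$ linear-algebra operations. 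This immediately yields part (ii): each $\N^{(\beta)}$ has $|S_\beta|+1$ outcomes, all but one of which are rank one, and by the size bound \eqref{eq:upCARD-randomized} we have $|S_\beta|=\Theta(d)$.

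For part (i), Theorem \ref{th:PostSimulation-randomized} guarantees that with probability at least $1/4$ the sampled partition simultaneously satisfies the success-probability bound \eqref{eq:KSqsucc-randomized}, giving $q(\M,\mathcal{S})\geq \Theta(1/\log d)$ for fixed $C$, and the group-size bound \eqref{eq:upCARD-randomized}. To turn this into a genuine randomized algorithm with a deterministic guarantee on its output, I would add a verification step: both conditions are efficiently checkable (the group sizes are trivial, and $q(\M,\mathcal{S})$ is obtained from $\|Y\|=\sum_\beta\lambda_\beta$, a sum of $\Theta(d)$ operator norms of $d\times d$ matrices), so one samples, verifies, and resamples on failure. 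Since each trial succeeds with probability at least $1/4$, the expected number of repetitions is $O(1)$ and the total running time remains $\mathrm{poly}(d)$.

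It remains to pass to the subpartition $\mathcal{S}'$ ensuring single-qubit implementability. Here I would split each $S_\beta$ into $\lceil |S_\beta|/d\rceil$ consecutive blocks of size at most $d$; by \eqref{eq:upCARD-randomized} this is at most a fixed constant $\lceil \tfrac{2}{C}(1+\delta)\rceil$ of blocks per group, and the splitting costs $\Theta(d)$ per group. The key point to check is that $q$ degrades by at most a constant factor. This follows because for each block $S'_\gamma\subseteq S_\beta$ the operator $\sum_{i\in S'_\gamma}M_i$ is dominated by $\sum_{i\in S_\beta}M_i$ in the positive-semidefinite order, whence $\|\sum_{i\in S'_\gamma}M_i\|\leq\lambda_\beta$; summing over the at-most-constant number of blocks per group gives $\sum_\gamma\|\sum_{i\in S'_\gamma}M_i\|\leq \lceil \tfrac{2}{C}(1+\delta)\rceil\,\|Y\|$, so by \eqref{eq:explicitQsucc} we retain $q(\M,\mathcal{S}')\geq\Theta(1/\log d)$. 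Finally, since $|S'_\gamma|\leq d$, the Naimark dilation of each $\N^{(\gamma)}$ needs dimension at most $|S'_\gamma|+d-1\leq 2d-1$ (the $\emptyset$-effect has a zero eigenvalue by the choice $\lambda_\gamma=\|\sum_{i\in S'_\gamma}M_i\|$, so its rank is at most $d-1$); hence it embeds into $\C^d\otimes\C^2$ and is realizable with a single auxiliary qubit.

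The main obstacle I anticipate is not any single hard estimate but the verification/amplification bookkeeping in the second paragraph: one must make sure that the probability-$1/4$ event of Theorem \ref{th:PostSimulation-randomized} can be certified in polynomial time so that the randomized procedure actually outputs a partition meeting both guarantees, and that the subsequent splitting in the third paragraph does not silently destroy the $\Theta(1/\log d)$ success probability — the positive-semidefinite domination argument is exactly what rules this out.
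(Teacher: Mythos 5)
Your proposal is correct and follows essentially the same route the paper takes (the corollary is stated there as an immediate consequence of Theorem \ref{th:PostSimulation-randomized}, with the splitting step justified by the same norm monotonicity $\left\| \sum_{i\in S'_\gamma} M_i \right\| \leq \left\| \sum_{i\in S_\beta} M_i \right\|$ used in the proof of Lemma \ref{lem:nearlyPROJpost}, and single-qubit realizability by the same Naimark dimension count $|S'_\gamma|+d-1 \leq 2d$ as in Theorem \ref{th:ancillas}). Your verify-and-resample amplification is a small, correct addition the paper leaves implicit: it upgrades the probability-$1/4$ guarantee to a Las Vegas algorithm with expected $O(1)$ repetitions, since both conditions are checkable in $\mathrm{poly}(d)$ time.
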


\end{document}